\renewcommand*\env@matrix[1][*\c@MaxMatrixCols c]{%
  \hskip -\arraycolsep
  \let\@ifnextchar\new@ifnextchar
  \array{#1}}
\newcommand{\trc}{\tr}
\Crefname{enumi}{Condition}{Conditions}
\crefname{enumi}{Condition}{conditions}
\crefname{algocf}{Algorithm}{Algorithms}
\Crefname{algocf}{Algorithm}{Algorithms}
\crefname{algocfline}{Line}{Lines}
\Crefname{algocfline}{Line}{Lines}
\renewcommand{\backref}[1]{}
\renewcommand{\backrefalt}[4]{%
\ifcase #1 %
\or
[p.\ #2]%
\else
[pp.\ #2]%
\fi}
\newtheorem{theorem}{Theorem}[section]
\newtheorem*{namedtheorem}{\theoremname}
\newcommand{\theoremname}{testing}
\newtheorem{lemma}[theorem]{Lemma}
\newtheorem{proposition}[theorem]{Proposition}
\newtheorem{fact}[theorem]{Fact}
\newtheorem{corollary}[theorem]{Corollary}
\newtheorem{question}[theorem]{Question}
\theoremstyle{definition}
\newtheorem{definition}[theorem]{Definition}
\newtheorem{remark}[theorem]{Remark}
\newtheorem*{remark*}{Remark}
\renewcommand{\Pr}{\mathop{\bf Pr\/}}
\newcommand{\E}{\mathop{\bf E\/}}
\newcommand{\Ex}{\mathop{\bf E\/}}
\newcommand{\tr}{\mathrm{tr}}  
\newcommand{\poly}{\mathrm{poly}}
\DeclareMathOperator{\argmax}{argmax}
\newcommand{\C}{\mathbb C}
\newcommand{\Z}{\mathbb Z}
\newcommand{\F}{\mathbb F}
\newcommand{\eps}{\varepsilon}
\newcommand{\calD}{\mathcal{D}}
\newcommand{\calQ}{\mathcal{Q}}
\newcommand{\calV}{\mathcal{V}}
\newcommand{\calX}{\mathcal{X}}
\newcommand{\calZ}{\mathcal{Z}}
\newcommand{\weyl}{\mathrm{Weyl}}
\newcommand{\fidelity}{F}
\newcommand{\tracedistance}[1]{d_{\mathrm{tr}}(#1)}
\newcommand{\indic}[1]{1_{#1}}   %\dsmath, \mathbbmm other possibilities here
\newcommand{\abs}[1]{\lvert #1 \rvert}
\newcommand{\norm}[1]{\lVert #1 \rVert}
\newcommand{\ketbra}[2]{\ket{#1}\!\!\bra{#2}}
\renewcommand{\hat}{\widehat}
\newcommand{\sympcomp}{\perp}
\newcommand{\Xs}{\mathcal{X}}
\newcommand{\Zs}{\mathcal{Z}}
\newcommand{\mclifford}{m_\mathsf{Clifford}}
\newcommand{\mcomp}{m_\mathsf{Comp}}
\newcommand{\ignore}[1]{}
\newcommand{\anote}[1]{}
\newcommand{\jnote}[1]{}
\newcommand{\hnote}[1]{}
\newcounter{termcounter}[equation]
\renewcommand{\thetermcounter}{\the\numexpr\value{equation}+1\relax.\roman{termcounter}}
\crefname{term}{term}{terms}
\def\term{\@ifnextchar[\term@optarg\term@noarg}%]
\def\term@optarg[#1]#2{%
  \textup{#1}%
  \def\@currentlabel{#1}%
  \def\cref@currentlabel{[][2147483647][]#1}%
  \cref@label[term]{#2}}
\def\term@noarg#1{%
  \refstepcounter{termcounter}%
  \textup{\thetermcounter}%
  \cref@label[term]{#1}}
\title{Efficient Learning of Quantum States Prepared With Few Non-Clifford Gates}
\author{Sabee Grewal}
\affiliation{The University of Texas at Austin}
\email{sabee@cs.utexas.edu}
\author{Vishnu Iyer}
\affiliation{The University of Texas at Austin}
\email{vishnu.iyer@utexas.edu}
\author{William Kretschmer}
\affiliation{The University of Texas at Austin}
\affiliation{Simons Institute for the Theory of Computing, UC Berkeley}
\email{kretsch@cs.utexas.edu}
\author{Daniel Liang}
\affiliation{The University of Texas at Austin}
\affiliation{Rice University}
\affiliation{Portland State University}
\email{daniel.liang@ll.mit.edu}
\begin{document}

\maketitle

\begin{abstract}
We give a pair of algorithms that efficiently learn a quantum state prepared by Clifford gates and $O(\log n)$ non-Clifford gates. 
Specifically, for an $n$-qubit state $\ket{\psi}$ prepared with at most $t$ non-Clifford gates, our algorithms use $\poly(n,2^t,1/\eps)$ time and copies of $\ket{\psi}$ to learn $\ket{\psi}$ to trace distance at most $\eps$. 

The first algorithm for this task is more efficient, but requires entangled measurements across two copies of $\ket{\psi}$. The second algorithm uses only single-copy measurements at the cost of polynomial factors in runtime and sample complexity. Our algorithms more generally learn any state with sufficiently large \textit{stabilizer dimension}, where a quantum state has stabilizer dimension $k$ if it is stabilized by an abelian group of $2^k$ Pauli operators. We also develop an efficient property testing algorithm for stabilizer dimension, which may be of independent interest.
\end{abstract}

\newpage
\tableofcontents
\vfill
\thispagestyle{empty}
\newpage
\section{Introduction}

\emph{Quantum state tomography} is the task of constructing a classical description of a quantum state, given copies of the state.
This task---whose study dates back to the 1950s \cite{fano1957description}---is fundamentally important in quantum theory, and finds applications in the verification of quantum technologies and in experiments throughout physics, among other things. 
% in optical, atomic, and molecular physics, among other things. 
For a thorough history and motivation, we refer the reader to \cite{d2003quantum, Banaszek_2013}.

The optimal number of copies to perform quantum state tomography on a $d$-dimensional quantum \emph{mixed} state is $\Theta(d^2)$ using entangled measurements \cite{o2016efficient, haah2017sample} and $\Theta(d^3)$ using single-copy measurements \cite{kueng2017low, haah2017sample, chen2022tight}.
For a quantum \emph{pure} state, $\Theta(d)$ copies are necessary and sufficient \cite{bruss1999optimal}. Alas, since the dimension $d$ grows exponentially with the system size, the number of copies consumed by state tomography algorithms quickly becomes impractical, and, indeed, learning systems of even $10$ qubits can require millions of measurements \cite{song201710}. 

There have been several approaches to circumvent the exponential scaling of quantum state tomography, which we discuss further in \cref{subsec:related-work}. For example, one can try to recover less information about the state, or make additional assumptions about the state. While these results have drastically improved copy complexities relative to general quantum state tomography, many of them remain computationally inefficient. 

In this work, we present a pure state tomography algorithm whose copy and time complexities scale in the complexity of a circuit that prepares the state.
More specifically, we assume that the circuit is described by a gate set consisting of Clifford gates (i.e., Hadamard, phase, and CNOT gates) as well as single-qubit non-Clifford gates. Such gate sets are well-studied in quantum information because they are universal for quantum computation \cite{shi2002toffoli}, and have a number of desirable properties for quantum error correction and fault tolerance \cite{knill2004faulttolerant, bravyi2005magicstates, bartolucci2023fusion}, classical simulation of quantum circuits \cite{Bravyi2019simulationofquantum}, and efficient implementation of approximate $t$-designs \cite{haferkamp2020homeopathy}.

Our main result is a tomography algorithm that scales \emph{polynomially} in the number of qubits and \emph{exponentially} in the number of non-Clifford gates needed to prepare the state.

\begin{theorem}[Informal version of \cref{cor:bell-copy-main,cor:single-copy-main}]
\label{thm:main_intro}
Given $n, t \in \mathbb{N}$, there is an algorithm that uses $\poly(n, 2^t, 1/\eps)$ time and $\poly(n, 1/\eps)$ copies of an $n$-qubit state $\ket{\psi}$, and learns $\ket{\psi}$ to trace distance $\eps$ with high probability, promised that $\ket\psi$ can be produced by Clifford gates and at most $t$ single-qubit non-Clifford gates.
\end{theorem} 

Hence, our algorithm learns in polynomial time any quantum state that can be prepared by Clifford gates and $O(\log n )$ single-qubit non-Clifford gates. 
This family of states is much more expressive than outputs of Clifford circuits alone (i.e., stabilizer states). For example, these states can form quantum $k$-designs for any constant $k$ \cite{haferkamp2020homeopathy}, whereas stabilizer states are not a $k$-design for any $k > 3$.
These states also arise naturally in quantum many-body physics as eigenstates of stabilizer Hamiltonians with a few non-commuting Pauli terms. We refer the reader to \cite{gu2024doped} for more information regarding the contexts in which these states may appear.

Although our algorithm is no longer efficient when $t$ exceeds $\omega(\log n)$, it still remains more efficient than standard pure state tomography as long as $t$ is asymptotically smaller than $n$.
While the limitation of $O(\log n )$ non-Clifford gates may seem restrictive, \Cref{thm:main_intro} is likely tight up to polynomial factors. 
This is because a polynomial dependence on $n$ and $1/\eps$ is clearly necessary, and an exponential dependence on $t$ is necessary under a plausible cryptographic assumption. In particular, if there exist linear-time constructible pseudorandom quantum states \cite{ji-pseudorandom-states2018} with exponential security, then $t$-qubit states output by quantum circuits of size $O(t)$ cannot be learned in $2^{o(t)}$ time. Though this is a strong assumption, an algorithm refuting it would be a major breakthrough in cryptography, because it would also rule out linear-time quantum-secure (weak) pseudorandom functions with exponential security, by the reduction of Brakerski and Shmueli \cite{brakerski10.1007/978-3-030-36030-6_10}.

\subsection{Our Contributions}

We, in fact, give \textit{two} algorithms to solve the learning task \cref{thm:main_intro}. The difference between the two algorithms is that one uses entangled measurements across two copies of $\ket{\psi}$, while the other measures only single copies of $\ket{\psi}$ at a time. The algorithm based on entangled measurements (given in \cref{cor:bell-copy-main}) is faster and conceptually more elegant, but it requires the ability to perform a Bell measurement across two copies of $\ket{\psi}$. So, if the copies of $\ket \psi$ are provided in an online fashion, then the algorithm needs a coherent quantum memory to store copies between trials. 

For the sake of experimental feasibility, it is often preferable to consider quantum algorithms that operate without an external quantum memory---i.e., that can only store classical information after each copy of $\ket \psi$ is measured \cite{Huang2021information,Chen2022exponential,huang22quantum}. This is how our single-copy algorithm (given in \cref{cor:single-copy-main}) operates: it uses only unentangled measurements, at the cost of a polynomially worse runtime and sample complexity.

Our algorithms also learn a more general class of states, namely: quantum states with \emph{stabilizer dimension} at least $n - t$ (\cref{def:stabilizer-dimension}). 
    Informally, a quantum state has stabilizer dimension $n-t$ if it is stabilized by an abelian group of $2^{n-t}$ Pauli operators.\footnote{Recall that an operator $U$ stabilizes a quantum state $\ket\psi$ when $U\ket{\psi} = \ket{\psi}$.} 
    We emphasize that this class of states is \emph{much larger} than the class of states preparable with Clifford gates and $t$ single-qubit non-Clifford gates. 
    Indeed, there are many states with stabilizer dimension $n - t$ that require $2^{\Omega(t)}$ gates from \emph{any} universal gate set to produce (e.g., a state of the form $\ket{0^{n - t}} \otimes \ket{\psi}$, where $\ket{\psi}$ is a Haar-random state on $t$ qubits).

In \cref{appendix:mixed-states}, we generalize our algorithm that uses two-copy measurements to handle mixed states. 
A mixed state $\rho$ has stabilizer dimension $n-t$ if $\tr(P\rho) = 1$  for every $P$ in an abelian group of $2^{n-t}$ Pauli operators.
Surprisingly, essentially all of our results generalize to mixed states with no loss in parameters. 

Finally, one might question whether the learning algorithms presented in this work follow directly from prior results on ``near-stabilizer" states, such as those in \cite{grewal_et_al:LIPIcs.ITCS.2023.64,grewal2023improved}. 
We emphasize that this is not the case. 
Our algorithms are independent of the main technical results in \cite{grewal_et_al:LIPIcs.ITCS.2023.64}. 
While we do make use of certain results from \cite{grewal2023improved} (namely, \cref{thm:p_q_duality}, \cref{fact:p-support-sympcomp}, \cref{lem:arbitrary-gate-dimension} in this work), these are not part of the main contributions of either their work or ours.

\subsection{Main Ideas}
\paragraph*{Stabilizer dimension}
As remarked earlier, our algorithms learn states with stabilizer dimension at least $n - t$ (\cref{def:stabilizer-dimension}). 
Quantum states prepared by Clifford gates and $t/2$ non-Clifford gates fall into this class because they have stabilizer dimension at least $n- t$ (\cref{lem:arbitrary-gate-dimension}).

Our first observation is that learning $\ket{\psi}$ reduces to learning the group $G$ of $2^{n-t}$ stabilizing Pauli operators of $\ket{\psi}$. In particular, we show in \Cref{lem:clifford-mapping-algorithm} that given a set of generators for $G$, we can efficiently construct a Clifford circuit $C$ such that $C\ket{\psi} = \ket{\varphi}\ket{x}$, where $\ket{x}$ is a computational basis state on $n-t$ qubits and $\ket{\varphi}$ is a general state on $t$ qubits. This construction builds on standard techniques for manipulating stabilizer tableaux, which appeared e.g.\ in the Aaronson-Gottesman algorithm \cite{aaronson2004simulation}. In some sense, this step \say{compresses} the non-Cliffordness of the state into the first $t$ qubits.\footnote{We are not the first to utilize this type of decomposition; similar techniques of compressing non-Cliffordness have appeared in \cite{arunachalam_et_al:LIPIcs.TQC.2022.3, leone-stabilizer-nullity}.} Once we know $C$, we can easily learn $\ket{x}$ by measuring $C \ket\psi$, and can learn $\ket{\varphi}$ using a tomography algorithm on $t$ qubits, which takes $2^{O(t)}$ time \cite{bruss1999optimal}.

Unfortunately, learning $G$ exactly is difficult in general, because there exist states that have no Pauli stabilizers, but that are arbitrarily close to states with a large stabilizer group (for example, consider perturbing a stabilizer state in a random direction). One of our key contributions is to make the above argument based around \Cref{lem:clifford-mapping-algorithm} robust: we show that to learn $\ket{\psi}$, it suffices to merely approximate $G$ by a (possibly different) Pauli subgroup $\hat{G}$, in a sense of approximation that we will make precise later. Roughly speaking, we require that $\hat{G}$ be sufficiently large, and that a random element of $G$ approximately stabilizes $\ket{\psi}$ (up to sign) on average:
\[
\E_{P \sim G}\left[\abs{\braket{\psi|P|\psi}}^2 \right] \ge 1 - \eps.
\]
We show that given a group $\hat{G}$ of size $2^{n-t}$ that satisfies this property, \cref{alg:reduction} finds a Clifford circuit that approximately maps $\ket{\psi}$ to a state of the form $\ket{\varphi}\ket{x}$, which allows us to learn $\ket{\psi}$ using only single-copy measurements. To find such a $\hat{G}$, we give two algorithms: one that uses entangled measurements, and another that uses only single-copy measurements.

\paragraph*{Approximating $G$ using Bell measurements}
Our first algorithm (\cref{alg:learning_weyl_bell}) for approximating $G$ utilizes \emph{Bell difference sampling}, a measurement primitive that has seen widespread use in learning algorithms relating to the stabilizer formalism \cite{montanaro-bell-sampling,gross2021schur,grewal_et_al:LIPIcs.ITCS.2023.64,grewal2023improved}. We defer the details of Bell difference sampling to \Cref{subsec:weyl-expansion-and-bell-diff-sampling} but note that it involves measuring $\ket{\psi}^{\otimes 2}$ in the Bell basis, repeating twice, and interpreting the result as a Pauli operator. A key property of Bell difference sampling, which is not too hard to show (\Cref{fact:p-support-sympcomp}), is that the sampled Pauli operators always commute with all elements of $G$. This suggests a natural approach to try to approximate $G$: Bell difference sample repeatedly, and then take $\hat{G}$ to be the commutant of the sampled Pauli operators.

\textit{A priori}, it is not at all clear why this strategy could work, because in general $\hat{G}$ may be much larger than $G$. A key technical step in our proof amounts to showing that, after $\poly(n, 1/\eps)$ Bell difference samples, with high probability, $\ket{\psi}$ must be $\eps$-close to a state that is stabilized by $\hat{G}$. In other words, if after sufficiently many samples $\hat{G}$ is larger than $G$, then this witnesses that $\ket{\psi}$ is close to a state with stabilizer dimension $n - \hat{t}$ for some $\hat{t} < t$. So, we can use the aforementioned reduction to perform tomography on $\ket{\psi}$ using $\hat{G}$, because $\hat{G}$ must approximately stabilize $\ket{\psi}$.

As a byproduct of this step in our proof, we obtain an algorithm for \textit{property testing} stabilizer dimension, which may be of independent interest.

\begin{theorem}[Informal version of \Cref{thm:property-testing-alg}]
\label{thm:property_test_informal}
There is an algorithm that, given $k \ge 1$ and copies of $\ket{\psi}$, distinguishes the following two cases: either (1) $\ket{\psi}$ has stabilizer dimension at least $k$, or (2) $\ket{\psi}$ has fidelity at most $1 - \eps$ with all such states. The algorithm uses $O\left(n/\eps\right)$ copies of $\ket{\psi}$ and $O\left(n^3 / \eps\right)$ time.
%Let $k \ge 1$, let $\ket{\psi}$ be an $n$-qubit quantum state, and suppose that either (1) $\ket{\psi}$ has stabilizer dimension at least $k$, or (2) $\ket{\psi}$ has fidelity at most $1 - \eps$ with all such states. 
\end{theorem}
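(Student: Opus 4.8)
\emph{The algorithm.} Draw $m = \Theta(n/\eps)$ independent Bell difference samples from $\ket{\psi}$, each consuming $4$ copies and yielding a Pauli label $x_i \in \F_2^{2n}$. Let $V := \spn_{\F_2}(x_1,\dots,x_m)$, compute $d := \dim V$ by Gaussian elimination over $\F_2$, and accept (``$\ket\psi$ has stabilizer dimension $\ge k$'') if and only if $d \le 2n-k$ --- equivalently, if and only if the commutant $V^{\sympcomp}$ has dimension at least $k$. This consumes $O(n/\eps)$ copies, and the time is dominated by row-reducing the $m \times 2n$ matrix of samples (done incrementally against a running reduced basis), i.e.\ $O(n^3/\eps)$.

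\emph{Completeness.} If $\ket\psi$ has stabilizer dimension $\ge k$, fix an abelian Pauli group $G$ stabilizing $\ket\psi$ with $\dim G \ge k$. By \cref{fact:p-support-sympcomp} every Bell difference sample commutes with all of $G$, so $V \subseteq G^{\sympcomp}$, hence $V^{\sympcomp} \supseteq (G^{\sympcomp})^{\sympcomp} = G$, and therefore $d \le \dim G^{\sympcomp} = 2n - \dim G \le 2n-k$. The algorithm accepts with certainty, so the entire error budget is on the soundness side.

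\emph{Soundness.} I would reduce this to two claims. \textbf{(1) The span stabilizes quickly:} with probability at least $2/3$ over the samples, $V$ satisfies $\Pr_{x \sim q_\psi}[x \notin V] \le \eps/\kappa$, where $q_\psi$ is the Bell difference sampling distribution and $\kappa$ is a suitable absolute constant. Indeed, revealing the samples one at a time and tracking the running span $V_t$, its dimension can strictly increase at most $2n$ times, and whenever $\Pr_{x\sim q_\psi}[x\notin V_t] > \eps/\kappa$ the next sample lies outside $V_t$ (and so raises $\dim V_t$) with probability $>\eps/\kappa$; since $\Pr_{x\sim q_\psi}[x\notin V_t]$ is nonincreasing in $t$, it suffices to witness $2n$ dimension increases, and a Chernoff bound shows this occurs within $m = O(n/\eps)$ samples except with probability $e^{-\Omega(n)}$. \textbf{(2) Concentration forces structure:} if $V$ is \emph{any} subspace with $\Pr_{x\sim q_\psi}[x\notin V] \le \eps/\kappa$, then $\ket\psi$ has fidelity $\ge 1-\eps$ with some state of stabilizer dimension $\ge 2n - \dim V$. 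Granting both claims: if $\ket\psi$ has fidelity $\le 1-\eps$ with every state of stabilizer dimension $\ge k$, then on the probability-$(2/3)$ event of claim (1) we must have $2n - \dim V < k$ (otherwise claim (2) produces a nearby state of stabilizer dimension $\ge k$, a contradiction), i.e.\ $d = \dim V > 2n-k$, and the algorithm rejects. A standard repetition-and-majority boosting raises the confidence to $1-\delta$ at the cost of an extra $O(\log(1/\delta)/\eps)$ factor in copies.

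\emph{The main obstacle} is claim (2). Writing $q_\psi = p_\psi \star p_\psi$ for the convolution over $\F_2^{2n}$ of the two-copy Bell-measurement distribution $p_\psi$ --- which satisfies $\|p_\psi\|_\infty \le 2^{-n}$ and whose point masses encode the squared Pauli expectation values of $\ket\psi$ (\cref{subsec:weyl-expansion-and-bell-diff-sampling}) --- one computes $\Pr_{x\sim q_\psi}[x\in V] = \sum_{C} p_\psi(C)^2$ with $C$ ranging over cosets of $V$ and $\sum_C p_\psi(C) = 1$, so the hypothesis forces $p_\psi(C_0) \ge 1-\eps/\kappa$ for a single coset $C_0$, and then $\|p_\psi\|_\infty \le 2^{-n}$ forces $|C_0| \ge (1-\eps/\kappa)\,2^n$ (in particular $\dim V \ge n$). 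The plan from here is to deduce that almost all the Pauli weight of $\ket\psi$ is carried by labels in $C_0$, extract from $C_0$ an isotropic subgroup $S$ of dimension $2n - \dim V$, and show that the projector of the stabilizer code of $S$ --- with phases read off from the signs of the corresponding Pauli expectation values of $\ket\psi$ --- has overlap $\ge 1-\eps$ with $\ket\psi$; this is a subspace-level refinement of the stabilizer-fidelity lower bounds that underlie Bell difference sampling. I expect the real work to be in (a) extracting the isotropic subgroup of exactly the right dimension (the raw commutant $V^{\sympcomp}$ need not be isotropic), (b) choosing the phases consistently so that the output is a genuine pure state, and (c) --- crucially for the stated $O(n/\eps)$ copy bound --- keeping the fidelity loss \emph{linear} rather than quadratic in $\eps$, since a $\sqrt{\eps}$-type loss would force $m=\Theta(n/\eps^2)$ instead.
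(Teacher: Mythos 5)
Your algorithm, completeness argument, and concentration step (your claim (1)) all coincide with the paper's: the paper's \cref{lem:sampling} is exactly your span-stabilization argument, and acceptance iff $\dim V \le 2n-k$ is \cref{alg:property-testing}. The genuine gap is your claim (2), which is the technical heart of the soundness proof and which you leave as a plan with three explicitly acknowledged open issues. You correctly reduce to the statement that a single coset $C_0$ of $V$ carries $p_\psi$-mass $\ge 1-\eps/\kappa$ (and by \cref{prop:q-lower-bounds-p} that coset is $V$ itself), but the passage from ``$V$ carries most of the $p_\psi$-mass'' to ``$\ket\psi$ is $(1-\eps)$-close in fidelity to a state of stabilizer dimension $2n-\dim V$'' is not established, and without it the soundness direction does not close.

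The paper resolves each of your three worries, and it is worth seeing how, since your proposed route through stabilizer-code projectors is where the difficulty would concentrate. For (a), the commutant $V^{\sympcomp}$ \emph{is} isotropic under the mass hypothesis: \cref{lem:isotropic} shows that if $\sum_{x\in V}q_\psi(x)>5/8$ then more than half the elements of $V^{\sympcomp}$ satisfy $2^n p_\psi(x)>1/2$, and by the Schr\"odinger uncertainty relation (\cref{fact:M_half_commute}) such elements pairwise commute and hence span all of $V^{\sympcomp}$; no extraction of a smaller isotropic subgroup is needed. For (b), the paper sidesteps phase bookkeeping entirely: \cref{lem:clifford-mapping-algorithm} conjugates the isotropic subspace to $0^{n+t}\times\F_2^{n-t}$ (pure $Z$-strings on the last $n-t$ qubits), after which the nearby stabilized state is $C^\dagger\ket{\varphi}\ket{x}$ for a computational basis state $\ket{x}$ obtained by measurement, so the signs are fixed implicitly. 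For (c), the linear-in-$\eps$ loss comes from the collision-probability identity of \cref{prop:collision}: writing $C\ket{\psi}=\sum_x\alpha_x\ket{\varphi_x}\ket{x}$, one has $\max_x|\alpha_x|^2\ge\sum_x|\alpha_x|^4=2^t\sum_{x\in T}p_\psi(x)\ge 1-\eps$, with no square root incurred (\cref{lem:product-state-approximation}). Your instinct that a $\sqrt{\eps}$ loss would ruin the $O(n/\eps)$ copy bound is correct, which is precisely why this identity, rather than a generic stabilizer-fidelity bound, is what the argument needs.
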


Notably, this property testing algorithm is efficient for \textit{all} choices of the stabilizer dimension $k$, unlike our learning algorithm. \Cref{thm:property_test_informal} thus greatly generalizes a recent result of Grewal, Iyer, Kretschmer, and Liang \cite{grewal2023improved} that Haar-random states are efficiently distinguishable from states with nonzero stabilizer dimension.

\paragraph*{Approximating $G$ using single-copy measurements}
Conceptually, our algorithm for approximating $G$ via single-copy measurements provides a substantial strengthening and generalization of Aaronson and Gottesman's algorithm \cite{aaronson43identifying} for learning stabilizer states from single-copy measurements.\footnote{Aaronson and Gottesman's algorithm was presented in a short video lecture \cite{aaronson43identifying}, but was never published.
A brief sketch of the algorithm and a partial derivation of its workings is presented in Aaronson's lecture notes \cite{aaronson_qis2}. Part of our work's contribution is to place their algorithm on more rigorous footing.} We start by briefly summarizing their algorithm.
First, they select a uniformly random element $C$ from the Clifford group, sample repeatedly from the measurement distribution of $C\ket{\psi}$ in the computational basis, and then iterate this process for many independently chosen $C$. Aaronson and Gottesman observe that because $C\ket{\psi}$ is a stabilizer state, its measurement distribution is supported uniformly over some affine subspace of $\F_2^n$. For most choices of $C$, this affine subspace will be all of $\F_2^n$, but with some constant probability, it will be a proper subspace of $\F_2^n$. In the latter case, learning the affine subspace\footnote{I.e., computing from samples a basis that spans the space.} reveals at least one nontrivial generator of the stabilizer group $G$ of $\ket{\psi}$. For example, if the affine subspace is $0 \times \F_2^{n-1}$ (i.e., the first qubit is always measured to be $\ket{0}$), then this reveals that $C\ket{\psi}$ is stabilized by a Pauli-$Z$ on the first qubit, and therefore $\ket{\psi}$ is stabilized by $C^\dagger (Z \otimes I^{\otimes n-1}) C$. Aaronson and Gottesman then show that repeating this process for $O(n)$ randomly chosen $C$ suffices to learn a complete set of generators for $G$, with high probability.

We find that a similar approach still works to approximately learn $G$ when $\ket{\psi}$ is no longer a stabilizer state, but rather a state with stabilizer dimension $n - t$. Indeed, our algorithm (\cref{alg:learning_weyl_single}) shares similarities with Aaronson and Gottesman's, but the analysis is considerably more involved. One major difference is that the probability of choosing a Clifford circuit $C$ that lets us learn a generator of $G$ (i.e., for which the support of $C\ket{\psi}$ is a proper affine subspace) will be smaller, on the order of $2^{-t}$ (\cref{lem:sample-in-I-Z}), and so we need to sample more random Clifford circuits to learn a complete set of generators for the stabilizer group. Furthermore, we require a significantly more careful error analysis, because the measurement distribution of $C\ket{\psi}$ need not be supported uniformly over an affine subspace for non-stabilizer states $\ket{\psi}$, and so it may be impossible to learn the full support of the distribution from a reasonable number of samples. We circumvent this issue in a similar fashion to the Bell difference sampling-based approach, by arguing that it suffices to merely \textit{approximate} the support of the distribution to within some inverse-polynomial error.  %Again, this amounts to proving a robustness property that \will{finish}

There is another minor difference between our approach and that of Aaronson and Gottesman: we use a modified sampling technique that we call \textit{computational difference sampling}, wherein we measure $C\ket \psi$ twice in the computational basis and output the difference (i.e., bitwise XOR) of the sampled strings. Taking this difference has the effect of canceling the shift on the affine subspace over which the measurement distribution of $C\ket{\psi}$ is supported, leaving a distribution over a (non-affine) subspace of $\F_2^n$.\footnote{Note that the Bell sampling approach also involves taking two samples and computing a difference. This is for a similar reason of canceling an unwanted affine shift \cite{montanaro-bell-sampling,gross2021schur,lai2022learning}.} 
We then interpret the sampled string $x \in \F_2^n$ as corresponding to the Pauli-$X$-operator
\[
\xi \coloneqq \bigotimes_{i=1}^n X^{x_i}.
\]
We choose this convention so that the samples from computational difference sampling commute with all of the Pauli-$Z$ stabilizers, as we prove in \cref{cor:comp-diff-sampling}.

\paragraph*{Unsigned Pauli operators}
Throughout our proofs, we will actually find it more convenient to work with \textit{unsigned} Pauli operators, meaning that we view elements of the Pauli group as operators modulo phase. In doing so, we can identify the $n$-qubit Pauli group under multiplication (modulo phase) with the vector space $\F_2^{2n}$ under addition, which simplifies many computations. 
In particular, this identification reduces many subroutines of our algorithms (e.g., computing the commutant of a Pauli subgroup) into straightforward linear-algebraic computations over $\F_2$, often by way of viewing $\F_2^{2n}$ as a symplectic vector space. 
See \cref{subsec:symp-vector-spaces} and \cref{subsec:weyl-expansion-and-bell-diff-sampling} for more details on the relation between Pauli operators and $\F_2^{2n}$. 

\paragraph*{Summary}
%\will{Maybe add a paragraph comparing the two approaches directly?}
To recap, the steps in our learning algorithms are as follows: (1) use \cref{alg:learning_weyl_bell} or \cref{alg:learning_weyl_single} to learn a Pauli subgroup $\hat{G}$ that approximates the stabilizer group of $\ket{\psi}$, then (2) use \cref{alg:reduction} to compute from $\hat{G}$ a Clifford circuit $C$ such that $C\ket{\psi} \approx \ket{\varphi}\ket{x}$, and finally learn $\ket{\varphi}\ket{x}$. While some aspects of the analysis are technical, the algorithms themselves are quite simple and could be amenable to implementation on near-term devices. Indeed, the only quantum parts of the algorithms involve measuring pairs of qubits, applying Clifford circuits, measuring in the computational basis, and performing tomography on a $t$-qubit state. So, for example, the resource requirements of our algorithms are quite comparable to those of the classical shadows protocol \cite{HKP20-classical-shadows}.

\subsection{Related Work}\label{subsec:related-work}
There is a long line of work devoted to developing near-Clifford \emph{simulation} algorithms \cite{aaronson2004simulation, BrayviPhysRevLett.116.250501, RallPhysRevA.99.062337, Bravyi2019simulationofquantum, qassim2021upperbounds}, which classically simulate quantum circuits dominated by Clifford gates.
% Since Aaronson and Gottesman \cite{aaronson2004simulation} initiated this line of work,
These algorithms scale polynomially in the number of qubits and Clifford gates and exponentially in the number of non-Clifford gates. 
% The subsequent works have made substantial improvements in the constants. \sabee{though, they might have improved the exponents in the polynomial factors, need to check or word this slightly differently}  
The main contribution of this work is to complement these classical simulation algorithms with learning algorithms that scale comparably with respect to the number of non-Clifford gates.

There are a few other classes of quantum states for which time-efficient tomography algorithms are known. 
Among these are stabilizer states \cite{aaronson43identifying,montanaro-bell-sampling}, non-interacting fermion states \cite{aaronson2023efficient}, matrix product states \cite{cramer2010efficient}, and certain classes of phase states \cite{arunachalam2022phase}. 
As a result of our work, the class of quantum states prepared by Clifford gates and $O(\log n )$ non-Clifford gates joins this list.
We note that our results strictly generalize both Aaronson and Gottesman's algorithm \cite{aaronson43identifying} and Montanaro's algorithm \cite{montanaro-bell-sampling} for learning stabilizer states. 

Lai and Cheng \cite{lai2022learning} gave an algorithm that learns a quantum state that is prepared via Clifford gates and a few $T$-gates, where the $T$-gate is the non-Clifford unitary $T = \ketbra{0}{0} + e^{i \pi / 4}\ketbra{1}{1}$.
However, their algorithm only works if the circuit $U$ that prepares the input state meets two conditions. Firstly, $U$ must be written as $C_1 T^v C_2$, where $C_1$ and $C_2$ are Clifford circuits, and $T^v = T^{v_1} \otimes \dots \otimes T^{v_n}$ for a string $v \in \F_2^n$ of Hamming weight $O(\log n )$ (i.e., there is a single layer of $O(\log n )$ $T$-gates between two Clifford circuits). 
Secondly, the $X$-matrix of the stabilizer tableau of $C_1\ket{0^n}$ must be full rank, but only when restricted to the qubits that the $T$-gates act on. Suffice it to say, their algorithm works in a highly restricted setting.
In contrast, our algorithms work for \emph{any} quantum state prepared using Clifford gates and $O(\log n )$ arbitrary non-Clifford gates (not just the $T$-gate), and with the non-Clifford gates allowed to be placed anywhere in the circuit. Our work is, therefore, a substantial generalization of Lai and Cheng's algorithm.\footnote{We also believe there are some fixable errors in the most recent version of Lai and Cheng's algorithm. 
For example, using notation that we introduce later in \Cref{subsec:weyl-expansion-and-bell-diff-sampling}, \cite{lai2022learning} assumes that Bell difference sampling draws from the distribution that we call $p_\psi$. However, Bell difference sampling actually draws from a different distribution that we call $q_\psi$, which is the convolution of $p_\psi$ with itself.}

Besides Lai and Cheng \cite{lai2022learning}, other authors have explored the complexity of some related learning problems that involve Clifford+$T$ circuits. For example, \cite{hinsche-single-t-gate} observes that, given samples from the measurement distribution of a circuit comprised of Clifford gates and a single $T$-gate, learning this distribution can be as hard as the learning parities with noise (LPN) problem. Leone, Oliviero, Lloyd, and Hamma \cite{leone-stabilizer-nullity} give algorithms for learning the dynamics of a quantum circuit $U$ comprised of Clifford gates and few $T$-gates, given oracle access to $U$. Both of these results are formally incomparable to our algorithm because the inputs and outputs of the respective learning tasks are different from ours. Nevertheless, the learning algorithm of \cite{leone-stabilizer-nullity} utilizes some similar techniques that involve compressing non-Cliffordness.

In another direction, one can reduce the computational complexity of learning by only estimating certain properties of quantum states, instead of producing an entire description of the state. 
For example, consider the \emph{shadow tomography} problem \cite{doi:10.1137/18M120275X, aaronson2019gentle, 10.1145/3406325.3451109} where, given a list of known two-outcome observables and copies of an unknown quantum state, the goal is to estimate the expectation value of each observable with respect to the unknown state. 
Aaronson \cite{doi:10.1137/18M120275X} showed that shadow tomography requires a number of copies that scales polylogarithmically in both the number of two-outcome observables and the Hilbert space dimension, but the algorithm is not computationally efficient.  
More recently, Huang, Kueng, and Preskill \cite{HKP20-classical-shadows} introduced \textit{classical shadows}, a shadow tomography algorithm that could be amenable to near-term quantum devices. 
Just like prior work, classical shadows uses exponentially fewer copies of the input state relative to state tomography, but, in general, is not computationally efficient.\footnote{There are certain settings where the classical shadows protocol is computationally efficient; see \cite{HKP20-classical-shadows} for more detail.}

Finally, an important primitive in our work is Bell difference sampling, which was introduced (implicitly) by Montanaro \cite{montanaro-bell-sampling} and then further developed by Gross, Nezami, and Walter \cite{gross2021schur}. 
% In short, Bell difference sampling takes four copies of a state as input and induces a distribution over Pauli matrices. 
In recent years, this technique has been at the core of several results involving stabilizer states and near-stabilizer states. 
For example, Bell difference sampling has been used to learn  \cite{montanaro-bell-sampling} and property test \cite{gross2021schur, grewal2023improved} stabilizer states. Bell difference sampling has also been used to prove computational lower bounds for constructing pseudorandom quantum states \cite{grewal_et_al:LIPIcs.ITCS.2023.64,grewal2023improved}. As mentioned earlier, our property testing algorithm (\Cref{thm:property_test_informal}) subsumes the result from \cite{grewal2023improved} that quantum states with nontrivial stabilizer dimension are not computationally pseudorandom.

\paragraph{Concurrent Work}
Shortly after the first version of this manuscript was posted, two works \cite{leone2023learning,hangleiter2023bell} appeared that claim results similar to our Bell sampling-based learning algorithm.

Leone, Oliviero, and Hamma \cite{leone2023learning} give an algorithm for learning output states of Clifford+$T$ circuits, whose time complexity scales polynomially in the number of qubits and exponentially in the number of $T$-gates. 
However, our result is stronger than \cite{leone2023learning} for two reasons. 
Firstly, our algorithm allows for arbitrary non-Clifford gates (not just $T$-gates). In fact, our algorithm also efficiently learns any state that is stabilized by a sufficiently large Pauli subgroup, which is a more general class of states. By contrast, as stated, \cite{leone2023learning} relies on a specific algebraic property of the $T$-gate and does not readily extend in this fashion. Secondly, \cite{leone2023learning}'s algorithm scales as $\poly(n) \cdot \exp(t)$ where $t$ is the number of $T$ gates, whereas \cref{cor:bell-copy-main} scales as $\poly(n) + \exp(t)$, which gives a polynomial in $n$ savings for sufficiently large values of $t$.

Hangleiter and Gullans \cite{hangleiter2023bell} exhibit a variety of protocols involving Bell sampling that have applications for learning and error detection. One of their results is an algorithm for learning outputs of Clifford+$T$ circuits that is similar to our \Cref{alg:reduction,alg:learning_weyl_bell}. 
While they only discuss $T$-gates, we believe that their algorithm, like ours, can also learn arbitrary states stabilized by a large group of Pauli matrices. 
However, \cite{hangleiter2023bell} lacks a thorough time complexity analysis, and largely focuses on the sample complexity. We give a more comprehensive and careful analysis of the samples and time used by our algorithms. For example, Steps 2 and 3 of \cite{hangleiter2023bell}'s algorithm corresponds to our \Cref{lemma:compute-symplectic-complement,lem:clifford-mapping-algorithm} respectively, where we give a detailed description of the computations involved and bound the runtime explicitly. Additionally, our bounds on sample complexity are tighter (e.g., compare the bound $O\left(\frac{n\log(1/\delta)}{\eps}\right)$ in \cite[Lemma 4]{hangleiter2023bell} and $O\left(\frac{n \log(n)\log(1/\delta)}{\eps}\right)$ in \cite[Lemma 5]{hangleiter2023bell} to our bound $O\left(\frac{n + \log(1/\delta)}{\eps}\right)$ in \Cref{lem:sampling}).
%Finally, our learning algorithm is strictly simpler: our analysis reveals that the pruning procedure in Step 5 of \cite{hangleiter2023bell} is unnecessary.

A later work by Chia, Lai, and Lin \cite{Chia2023} 
gives a single-copy learning algorithm analogous to our \cref{alg:learning_weyl_single}, and appeared concurrently with the update to this manuscript to add \cref{alg:learning_weyl_single}. It uses a similar technique of learning generators of the stabilizer group of $\ket{\psi}$ by repeatedly repeatedly in random Clifford bases. One difference between the two approaches is the use of adaptive measurements in \cite{Chia2023}'s algorithm (i.e., measurements that depend on the outcomes of measurements taken in a previous step). By contrast, our \cref{alg:learning_weyl_single} is non-adaptive and quartically faster in $\eps^{-1}$ dependence.
However, it is slower by a factor of roughly $n$ in both the sample and time complexity.
Nevertheless, we show in \cref{sec:appendix} that a slight modification of \cref{alg:learning_weyl_single} with one additional round of adaptive measurements (inspired by the approach of \cite{Chia2023}) lets us remove this extra factor of $n$ and get the best of both runtime dependencies. 

\section{Preliminaries}

For a set of vectors $\{v_1, \ldots, v_m\}$, we denote their span by $\langle v_1, \ldots, v_m\rangle $. 
For a positive integer $n$, define $[n] \coloneqq \{1, \ldots, n\}$. We use $\log$ to denote the natural logarithm.
For a probability distribution $\calD$ on a set $S$, we denote drawing a sample $s \in S$ according to $\calD$ by $s \sim \calD$. 
For quantum pure states $\ket\psi, \ket\phi$, let $\tracedistance{\ket\psi,\ket\phi} = \sqrt{1 - \abs{\braket{\psi|\phi}}^2}$ denote the trace distance
and $F(\ket\psi, \ket\phi) = \abs{\braket{\psi|\phi}}^2$ denote the fidelity.
As a shorthand, we write the tensor product of pure states $\ket{\psi} \otimes \ket{\varphi}$ as $\ket{\psi}\ket{\varphi}$.
We also use the following concentration inequalities. 

\begin{fact}[Chernoff bound]\label{fact:chernoff}
Let $X_1, \ldots, X_n$ be independent identically distributed random variables taking values in $\{0,1\}$. Let $X$ denote their sum and let $\mu = \E[X]$. Then for any $\delta > 0$, 
\[
\Pr\left[X \leq (1- \delta)\mu\right] \leq e^{-\delta^2 \mu / 2}.
\]
\end{fact}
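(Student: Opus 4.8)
The plan is to prove this by the exponential moment method (the Chernoff technique) applied to the lower tail. Write $p = \mu/n$ for the common mean of each $X_i$, so $\E[X_i] = p$ and $\mu = np$. The starting observation is that for any parameter $t > 0$, since $x \mapsto e^{-tx}$ is decreasing, the event $\{X \le (1-\delta)\mu\}$ is identical to $\{e^{-tX} \ge e^{-t(1-\delta)\mu}\}$. Applying Markov's inequality to the nonnegative random variable $e^{-tX}$ then yields
\[
\Pr[X \le (1-\delta)\mu] \le e^{t(1-\delta)\mu}\,\E[e^{-tX}].
\]

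Next I would control the moment generating function. By independence, $\E[e^{-tX}] = \prod_{i=1}^n \E[e^{-tX_i}]$, and because each $X_i$ is $\{0,1\}$-valued with mean $p$ we get $\E[e^{-tX_i}] = 1 - p + p e^{-t} = 1 + p(e^{-t}-1)$. The elementary inequality $1 + x \le e^x$, applied with $x = p(e^{-t}-1)$, gives $\E[e^{-tX_i}] \le \exp(p(e^{-t}-1))$ and hence $\E[e^{-tX}] \le \exp(\mu(e^{-t}-1))$. Substituting back produces, for every $t > 0$,
\[
\Pr[X \le (1-\delta)\mu] \le \exp\!\big(\mu(e^{-t}-1) + t(1-\delta)\mu\big).
\]

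I would then optimize the exponent over $t$. Differentiating the bracketed expression and setting the derivative to zero gives the minimizer $t^\star = -\ln(1-\delta)$, which is positive exactly when $0 < \delta < 1$. Plugging $t^\star$ in collapses the bound to the relative-entropy form
\[
\Pr[X \le (1-\delta)\mu] \le \Big(\tfrac{e^{-\delta}}{(1-\delta)^{1-\delta}}\Big)^{\mu} = \exp\!\big(-\mu\,[(1-\delta)\ln(1-\delta) + \delta]\big).
\]
(For $\delta \ge 1$ the event forces $X \le 0$, i.e.\ $X = 0$, which is handled separately and only makes the bound stronger.)

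The one genuinely nontrivial step—and the main obstacle—is the final simplification converting the relative-entropy exponent into the clean quadratic $\delta^2/2$. This rests on the real-analysis inequality $(1-\delta)\ln(1-\delta) + \delta \ge \delta^2/2$ for $0 \le \delta < 1$, which I would verify by setting $g(\delta) = (1-\delta)\ln(1-\delta) + \delta - \delta^2/2$ and checking that $g(0) = 0$ while $g'(\delta) = -\ln(1-\delta) - \delta \ge 0$ (the latter because $-\ln(1-\delta) \ge \delta$ term-by-term in the Taylor series). This delivers $\Pr[X \le (1-\delta)\mu] \le e^{-\mu\delta^2/2}$. The statement as worded omits the factor $\mu$ in the exponent; since each $X_i \in \{0,1\}$, in the regime $\mu \ge 1$ relevant to the applications here we have $e^{-\mu\delta^2/2} \le e^{-\delta^2/2}$, which recovers the bound exactly as stated.
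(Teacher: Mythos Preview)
Your proof is correct and is the standard exponential-moment derivation of the lower-tail Chernoff bound. The paper does not actually prove this fact; it simply states it as a background result in the preliminaries, so there is nothing to compare your approach against.

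Your closing observation is exactly right and worth emphasizing: the bound as printed in the paper, $e^{-\delta^2/2}$, is missing the factor $\mu$ in the exponent and is not true in general (e.g.\ a single Bernoulli with tiny mean already violates it). What you prove, $\Pr[X \le (1-\delta)\mu] \le e^{-\mu\delta^2/2}$, is the correct statement. Indeed, if you look at the two places the paper invokes \cref{fact:chernoff}---in the proof of \Cref{lem:sampling} and in the proof of \Cref{thm:tomography-algo-analysis}---both applications write the bound as $\exp(-\mu\gamma^2/2)$ and $\exp(-\mu/18)$ respectively, with $\mu$ present. So the omission in the statement is a typo, and your proof establishes the form the paper actually uses.
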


\begin{fact}[Hoeffding’s inequality]\label{fact:hoeffding}
Let $X_1,\dots, X_n$ be independent identically distributed random variables subject to $a_i \le X_i \le b_i$ for all $i$. 
Let $X = \sum_{i=1}^n X_i$ denote their sum and let $\mu = \E[X]$. 
Then for any $t \ge 0$,
\[
\Pr[X - \mu \geq t] \le \exp\left(- \frac{2t^2}{\sum_{i=1}^n (b_i - a_i)^2} \right)
\]
and
\[
\Pr[\abs{X - \mu} \ge t] \le 2 \exp\left(- \frac{2t^2}{\sum_{i=1}^n (b_i - a_i)^2} \right).
\]
\end{fact}

We require the following lemma, similar forms of which appeared in other works related to learning near-stabilizer states, including \cite[Lemma 5.9]{grewal2023improved} and \cite[Lemma 5]{hangleiter2023bell}.
\begin{lemma}
\label{lem:sampling}
Let $\calD$ be a distribution over $\F_2^d$, let $\eps, \delta \in (0,1)$, and suppose
\[
m \geq \frac{2\log(1/\delta) + 2d}{\eps}.
\]
Let $A \subseteq \F_2^d$ be the subspace spanned by $m$ independent samples drawn from $\calD$. Then
\[
\sum_{x \in A} \calD(x) \geq 1 - \eps
\]
with probability at least $1-\delta$.
\end{lemma}
\begin{proof}
For samples $x_1, \ldots, x_m \in \F_2^{d}$ drawn from $\calD$, 
let $A_i = \langle x_1, \ldots, x_{i}\rangle$ be the subspace spanned by the first $i$ samples for arbitrary $0 \leq i \leq m$, with the convention that $A_0$ is the trivial subspace.
Define the indicator random variable $X_i$ as 
\[
X_i = \begin{cases}
1 & \text{if $x_i \in \F_2^{d} \setminus A_{i-1}$ or $\sum_{x \in A_{i-1}} \calD(x) \geq 1 - \eps$.} \\
0 & \text{otherwise.} \\
\end{cases}
\]

Intuitively, $X_i = 1$ indicates that either the sample $x_i$ has increased the span of $A_i$ (i.e., $A_{i-1} \subset A_i)$ or that $A_{i-1}$ already accounts for a $1-\eps$ fraction of the mass of $\calD$.
If $A_{i-1}$ does account for a $1-\eps$ fraction of the mass of $\calD$, then $X_i = 1$ with probability $1$. 
If not, then the probability mass on $\F_2^{d}\setminus A_{i-1}$ is at least $\eps$, and therefore, $X_i = 1$ with probability at least $\eps$.
In both cases, $\Pr[X_i = 1] \geq \eps$, and therefore $\mu \coloneqq \E[\sum_i X_i] \geq m \eps$.\footnote{A careful observer will note that, since the $X_i$ are dependent on previous samples, the Chernoff bound does not necessarily hold. However, we can replace each $X_i$ with the random variable $Y_i$ such that $\{Y_i\}$ consists of i.i.d Bernoulli random variables with $\Pr[Y_i = 1] = \eps$. Since $\Pr[\sum_i Y_i < d] \geq \Pr[\sum_i X_i < d]$, we can simply apply the Chernoff bound to $\sum_i Y_i$ instead. Put another way, the dependence \emph{only helps}.}

Once $\sum_{i=1}^m X_i \geq d$, the subspace $A_m$ must account for a $1 - \eps$ fraction of the mass of $\calD$. 
This is because the dimension of $\F_2^{d}$ is $d$ and so the span can only expand $d$ times. 
Set $\gamma \coloneqq 1 - \frac{d}{\mu}$.
By a Chernoff bound (\cref{fact:chernoff}), 
\begin{align*}
\Pr\left[\sum_{i = 1}^m X_i < d\right]
&= \Pr\left[\sum_{i = 1}^m X_i < (1- \gamma) \mu\right]\\
&\leq \exp\left( -\frac{\mu}{2}\gamma^2 \right) \\
&= \exp\left( -\frac{\mu}{2} - \frac{d^2}{\mu} + d  \right) \\
&\leq \exp\left( -\frac{m\eps}{2} + d  \right).
\end{align*}
Hence, as long as 
\[
m \geq \frac{2\log(1/\delta) + 2d}{\eps},
\]
$A_m$ will account for a $1-\eps$ fraction of the $\calD$-mass with probability at least $1 - \delta$.
\end{proof}

\subsection{Stabilizer States and Clifford Circuits}

The $n$-qubit Pauli group 
is the set $\{\pm 1, \pm i\} \times \{I, X, Y, Z\}^{\otimes n}$. 
We refer to Hadamard, phase, and $\mathrm{CNOT}$ gates as Clifford gates, where $\ket{0}\!\!\bra{0} + i \ket{1}\!\!\bra{1}$ is the phase gate and $\mathrm{CNOT}$ is the controlled-not gate.
The Clifford group is the group of unitary transformations comprised only of Clifford gates, and we refer to a unitary transformation in the Clifford group as a Clifford circuit. We note that a uniformly random element of the Clifford group can be sampled efficiently, by the works of \cite{Koenig2014efficiently,Brayvi2021hadamard,Berg2021simple}.

\begin{proposition}[\cite{Berg2021simple}]\label{prop:random-clifford-sample}
    There is a classical algorithm that samples a uniformly random element of the $n$-qubit Clifford group and outputs a Clifford circuit implementation in time $O(n^2)$.
\end{proposition}

A unitary transformation $U$ \textit{stabilizes} a state $\ket{\psi}$ when $U\ket{\psi} = \ket{\psi}$.
A stabilizer state is a quantum state that is stabilized by an abelian group of $2^n$ Pauli operators, or equivalently, any quantum state reachable from $\ket{0^n}$ by applying a Clifford circuit.
For more background on the stabilizer formalism, see, e.g., \cite{Got97-thesis, nielsen2002quantum}.

Any gate set comprised of Clifford gates and just one single-qubit non-Clifford gate is universal for quantum computation \cite{shi2002toffoli}. 
We introduce the following definition for convenience, borrowing terminology from \cite{leone-stabilizer-nullity}.

\begin{definition}[$t$-doped Clifford circuits]
A $t$-doped Clifford circuit is a quantum circuit comprised only of Clifford gates (i.e., Hadamard, phase, and $\mathrm{CNOT}$) and at most $t$ single-qubit non-Clifford gates. 
\end{definition}

\subsection{Symplectic Vector Spaces}\label{subsec:symp-vector-spaces}
Throughout this work, $\F_2^{2n}$ is equipped with a canonical symplectic bilinear form that we term the symplectic product. 

\begin{definition}[Symplectic product]\label{def:symplectic-inner-product}
For $x,y \in \F_2^{2n}$, we define the \emph{symplectic product} as $[x,y] = x_1 \cdot y_{n+1} + x_2\cdot y_{n+2} + ... + x_n \cdot y_{2n} + x_{n+1} \cdot y_1 + x_{n+2} \cdot y_2 + ... + x_{2n} \cdot y_n$, where all operations are performed over $\F_2$.
\end{definition}

Similar to the orthogonal complement for the standard inner product,
the symplectic product gives rise to a \emph{symplectic complement}.

\begin{definition}[Symplectic complement]\label{def:symplectic-complement}
Let $A \subseteq \F_2^{2n}$ be a subspace. The \emph{symplectic complement} of $A$, denoted by $A^\sympcomp$, is defined by
\[
A^\sympcomp \coloneqq \{x \in \F_2^{2n} : \forall a \in A,\, [x,a] = 0 \}.
\]
\end{definition}
A subspace $A \subset \F_2^{2n}$ is \emph{isotropic} when for all $x,y \in T$, $[x,y] = 0$.
A subspace $A \subset \F_2^{2n}$ is \emph{coisotropic} when $A^\sympcomp$ is isotropic. 
% A subspace $T \subset \F_2^{2n}$ is \emph{Lagrangian} when $T^\sympcomp = T$. 

To aid intuition, we present the following facts about the symplectic complement, many of which are similar to the more familiar orthogonal complement.
\begin{fact}\label{fact:sympcomp}
Let $A$ and $B$ be subspaces of $\F_2^{2n}$. Then:
\begin{itemize}
\item $A^\sympcomp$ is a subspace.
\item $(A^\sympcomp)^\sympcomp = A$.
\item $\abs{A} \cdot \abs{A^\sympcomp} = 4^n$, or equivalently $\dim A + \dim A^\sympcomp = 2n$.
\item $A \subseteq B \iff B^\sympcomp \subseteq A^\sympcomp$.
\end{itemize}
\end{fact}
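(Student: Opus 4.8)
The plan is to isolate the one structural fact doing all the work---that the symplectic product is a \emph{nondegenerate symmetric} bilinear form on $\F_2^{2n}$---and then obtain the four items by routine linear algebra. Concretely, I would write $[x,y] = x^\top J y$, where $J$ is the $2n \times 2n$ matrix over $\F_2$ with identity blocks in the two off-diagonal $n \times n$ positions and zero blocks on the diagonal. Two observations suffice: $J$ is invertible (indeed $J^2 = I_{2n}$), so the form is nondegenerate, i.e.\ $[x,y]=0$ for all $y$ forces $x=0$; and $J$ is symmetric, so $[x,y] = [y,x]$ over $\F_2$. Everything below uses only these two properties.

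For the first item, bilinearity gives $[x,a+b] = [x,a]+[x,b] = 0$ for all $x \in T$ whenever $a,b \in T^\sympcomp$, and $0 \in T^\sympcomp$ trivially; since the only scalars are $0$ and $1$, closure under addition already makes $T^\sympcomp$ a subspace. For the dimension identity (third item), I would let $k = \dim T$ and pick a $k \times 2n$ matrix $M$ over $\F_2$ whose rows form a basis of $T$, so $\operatorname{rank} M = k$; unwinding the definition, $a \in T^\sympcomp$ iff $MJa = 0$, i.e.\ $T^\sympcomp = \ker(MJ)$. Since $J$ is invertible, $\operatorname{rank}(MJ) = \operatorname{rank} M = k$, so rank--nullity gives $\dim T^\sympcomp = 2n-k$, and the multiplicative form $\abs{T}\cdot\abs{T^\sympcomp} = 4^n$ follows by exponentiating. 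For the double-complement identity (second item), the inclusion $T \subseteq (T^\sympcomp)^\sympcomp$ is immediate from symmetry of the form: for $x \in T$ and $a \in T^\sympcomp$, $[a,x] = [x,a] = 0$. Then applying the dimension identity twice yields $\dim (T^\sympcomp)^\sympcomp = 2n - (2n - \dim T) = \dim T$, so the inclusion is an equality. Finally, for the inclusion-reversal (fourth item), $T \subseteq S$ and $a \in S^\sympcomp$ give $[x,a] = 0$ for all $x \in S \supseteq T$, hence $a \in T^\sympcomp$; and the converse follows by applying this direction to $S^\sympcomp \subseteq T^\sympcomp$ and then using the second item to strip the double complements.

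I do not expect any genuine obstacle here: the whole content is the nondegeneracy of the form (equivalently, invertibility of $J$), after which items 1, 2, and 4 are purely formal and item 3 is a one-line rank computation. The only point deserving a moment of care is that the proof of $T \subseteq (T^\sympcomp)^\sympcomp$ relies on the form being \emph{symmetric}, which holds here because we are in characteristic $2$; in the general (signed) symplectic setting one would instead invoke the analogous identity $[x,y] = -[y,x]$, but the argument is otherwise identical.
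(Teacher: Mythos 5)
Your proof is correct and complete: the paper states this fact without proof (it is offered only ``to aid intuition''), and your argument---nondegeneracy of the form via invertibility of $J$, rank--nullity for the dimension count, and symmetry of $x^\top J y$ for the double-complement inclusion---is exactly the standard way to fill that gap. The only cosmetic remark is that symmetry of the form here follows directly from $J$ being a symmetric matrix, independent of the characteristic-$2$ observation, though your parenthetical about the signed symplectic case is accurate.
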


%\subsection{Subspace Addition on Symplectic Vector Spaces}

Next, we introduce subspace addition and show how it behaves with respect to other operations on subspaces (namely, subspace intersection and the symplectic complement). 

\begin{definition}[Subspace addition]
    Given subspaces $A$ and $B$ over some vector space $\calV$, let $A+B \coloneqq \{a + b : a \in A \text{ and } b \in B\}$.
\end{definition}

\begin{fact}\label{fact:A-plus-B}
    Given subspaces $A \subseteq \calV$ and $B \subseteq \calV$ over some vector space $\calV$, then the following are true:
    \begin{itemize}
        \item $A+B = B + A$.
        \item $A+B$ is a subspace with dimension $\dim(A) + \dim(B) - \dim(A \cap B)$.
        \item $A \subseteq A+B$
        \item $B \subseteq A+B$
    \end{itemize}
\end{fact}

In the remainder of this section, we prove several useful identities that relate subspace addition to subspace intersection and the symplectic complement of a subspace. 

\begin{lemma}\label{lem:A-plus-B-sympcomp-identity}
    Let $A$ and $B$ be subspaces of $\F_2^{2n}$. Then
    \[(A+B)^\sympcomp = A^\sympcomp \cap B^\sympcomp.\]
\end{lemma}
\begin{proof}
$A^\sympcomp \cap B^\sympcomp \subseteq (A+B)^\sympcomp$, since if $x \in \F_2^{2n}$ has $[x, y] = 0$ for every $y \in A + B$, then $x$ must also have symplectic product zero with all subsets of $A+B$ (note that $A$ and $B$ are subsets of $A+B$ by \cref{fact:A-plus-B}).
Conversely, if $y \in \F_2^{2n}$ satisfies $[y, a] = 0$ and $[y, b] = 0$ for every $a \in A$ and $b \in B$ then, by linearity of the symplectic product, $[y, a+b] = 0$.
Thus the two sets are equal.
\end{proof}

\begin{lemma}\label{lem:A-plus-B-sympcomp-identity-2}
    Given subspaces $A$, $B$, and $C$ over $\F_2^{2n}$ such that $C \subseteq A$ then,
    \[
    A \cap (B + C) = (A \cap B) + C
    \]
\end{lemma}
\begin{proof}
Observe that $A \cap (B + C) = \{b + c \in \F_2^{2n} : b \in B \wedge c \in C \wedge b+c \in A\}$.
By assumption $C \subseteq A$, so $c \in C$ implies $c \in A$. 
Therefore, we can conclude that $b+c \in A$ if and only if $b \in A$. This is because for any $a \in A$, $a + b \in A$ if and only if $b \in A$, by the closure of subspaces under addition. 
Therefore  
$\{b + c \in \F_2^{2n} : b \in B \wedge c \in C \wedge b+c \in A\}$
can equivalently be written as $\{b + c \in \F_2^{2n} : b \in A \wedge b \in B \wedge c \in C\}$, which precisely the set $(A\cap B) + C$.
\end{proof}

\begin{corollary}\label{cor:A-plus-B-sympcomp-identity-3}
    Let $A$ and $B$ be subspaces of $\F_2^{2n}$, where $A$ is isotropic. Then
    \[
        \left(A^\sympcomp \cap (A+B)\right)^\sympcomp = A^\sympcomp \cap \left(A+B^\sympcomp\right).
    \]
\end{corollary}
\begin{proof}
We note that if $A$ is isotropic, then, by definition, $A \subseteq A^\sympcomp$.
\begin{align*}
    \left(A^\sympcomp \cap (A+B)\right)^\sympcomp
    &= A + (A+B)^\sympcomp && (\text{\cref{lem:A-plus-B-sympcomp-identity}})\\
    &= A + \left(A^\sympcomp \cap B^\sympcomp\right) && (\text{\cref{lem:A-plus-B-sympcomp-identity}})\\
    &= A^\sympcomp \cap \left(A + B^\sympcomp\right) && (\text{\cref{lem:A-plus-B-sympcomp-identity-2}}) \qedhere\\
\end{align*}
\end{proof}

\begin{corollary}\label{cor:A-plus-B-sympcomp-Lagrangian}
    Let $A$ and $B$ be subspaces of $\F_2^{2n}$, where $A$ is isotropic and $B$ is Lagrangian. Then
    \[
    A^\sympcomp \cap (A+B)
    \]
    is Lagrangian.
\end{corollary}
\begin{proof}
    By \cref{cor:A-plus-B-sympcomp-identity-3}, we see that $A^\sympcomp \cap (A+B)$ is equal to its symplectic complement if $B$ is Lagrangian as well.
\end{proof}

\subsection{The Weyl Expansion and Bell Difference Sampling}\label{subsec:weyl-expansion-and-bell-diff-sampling}
For $x = (a, b) \in \F_2^{2n}$, the \emph{Weyl operator} $W_x$ is defined as 
\[
W_x \coloneqq 
i^{a'\cdot b'}(X^{a_1} Z^{b_1}) \otimes \dots \otimes (X^{a_n} Z^{b_n}),
\]
where $a',b' \in \Z^n$ are the embeddings of $a,b$ into $\Z^n$.
Each Weyl operator is a Pauli operator, and every Pauli operator is a Weyl operator up to a phase. 
We will often freely go back and forth between Weyl operators and $\F_2^{2n}$.

We denote by $\Xs \coloneqq \F_2^n \times 0^n$ and $\Zs \coloneqq 0^n \times \F_2^n$ the Weyl operators corresponding to Pauli-$X$ and $Z$ strings, respectively.

We use the following notation for the stabilizer group of a quantum state, ignoring global phase.

\begin{definition}[Unsigned stabilizer group]\label{def:unsigned-stabilizer-group}
Let $\weyl(\ket\psi) \coloneqq \{x \in \F_2^{2n} : W_x \ket{\psi} = \pm \ket{\psi}\}$ denote the unsigned stabilizer group of $\ket{\psi}$.
\end{definition}

The symplectic product determines commutation relations between Weyl operators. 
Specifically, the Weyl operators $W_x$ and $W_y$ commute when $[x,y] = 0$, and anticommute when $[x,y] = 1$, where $x$ and $y$ are elements of $\F_2^{2n}$. 
If $T$ is a subspace of $\F_2^{2n}$, then it is isotropic if and only if $\{W_x : x \in T\}$ is a set of mutually commuting Weyl operators. 
Furthermore, it is not difficult to show that $\weyl(\ket\psi)$ is always an isotropic subspace of $\F_2^{2n}$ (see \cref{fact:M_half_commute}).

The Weyl operators form an orthogonal basis for $2^n \times 2^n$ matrices with respect to the dot product $\langle A, B \rangle = \tr{(A^\dagger B)}$. This gives rise to the so-called \emph{Weyl expansion} of a quantum state. 

\begin{definition}[Weyl expansion]
\label{def:weyl_expansion}
Let $\ket{\psi} \in \C^{2^n}$ be an $n$-qubit quantum pure state. The Weyl expansion of $\ket\psi$ is 
\[
\ketbra{\psi}{\psi} = \dfrac{1}{2^n} \sum_{x \in F_2^{2n}}\braket{\psi|W_x|\psi} W_x.
\]
\end{definition}
Using the Weyl expansion of $\ket\psi$ and the orthogonality of the Weyl operators, observe that 
\[
1 = \tr\left(\ket{\psi}\!\!\braket{\psi|\psi}\!\!\bra{\psi}\right) = \sum_{x \in \F_2^{2n}}\frac{\braket{\psi|W_x|\psi}^2}{2^n},
\]
and therefore the squared coefficients of the Weyl expansion can be normalized to form a distribution over $\F_2^{2n}$. 
We denote this distribution by $p_\psi(x) \coloneqq 2^{-n}\braket{\psi|W_x|\psi}^2$ and refer to it as the \emph{characteristic distribution}.
Gross, Nezami, and Walter \cite{gross2021schur} introduced a measurement primitive called \emph{Bell difference sampling}, which takes four copies of a quantum state as input and outputs a sample $x \in \F_2^{2n}$ according to the distribution \[q_\psi(x) \coloneqq \sum_{a \in \F_2^{2n}} p_\psi(a) p_\psi(x + a),\]
which is the convolution of $p_\psi$ with itself.

Grewal, Iyer, Kretschmer, and Liang \cite{grewal2023improved} proved additional properties of $p_\psi$ and $q_\psi$ that are needed in this work.
First, they showed that $p_\psi$ and $q_\psi$ exhibit a strong duality property with respect to the commutation relations among Weyl operators.
The proof involves symplectic Fourier analysis so the details are omitted.

\begin{proposition}[{\cite[Theorem 3.1, 3.2]{grewal2023improved}}]
\label{thm:p_q_duality}
Let $A \subseteq \F_2^{2n}$ be a subspace, and let $\ket    \psi$ be an $n$-qubit quantum pure state.  Then 
\[
\sum_{a \in A}p_\psi(a) =  \frac{\abs{A}}{2^n}\sum_{x \in A^{\sympcomp}}p_\psi(x),   
\]
and
\[
\sum_{a \in A}q_\psi(a) = \abs{A}  \sum_{x \in A^{\sympcomp}}p_\psi(x)^2.  
\]
\end{proposition}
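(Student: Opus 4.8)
The plan is to prove both identities by \emph{symplectic Fourier analysis} over the additive group $\F_2^{2n}$. First I would fix notation: for $f\colon\F_2^{2n}\to\C$, define the symplectic Fourier transform $\widetilde{f}(a)\coloneqq 2^{-n}\sum_{x\in\F_2^{2n}}(-1)^{[a,x]}f(x)$. Since the symplectic form is non-degenerate, $a\mapsto(-1)^{[a,\cdot]}$ runs over all characters of $\F_2^{2n}$, so this is merely the ordinary Fourier transform on $\F_2^{2n}$ in disguise, and I would record the three standard facts I will use: (i) it is an involution, $\widetilde{\widetilde{f}}=f$; (ii) for any subspace $T$, $\sum_{a\in T}(-1)^{[a,x]}=\abs{T}\cdot\indic{x\in T^\sympcomp}$; and (iii) it diagonalizes convolution, $\widetilde{f\ast g}=2^{n}\,\widetilde{f}\,\widetilde{g}$, where $(f\ast g)(x)=\sum_{a}f(a)g(x+a)$, so that $q_\psi=p_\psi\ast p_\psi$.

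The heart of the argument is the self-duality of the characteristic distribution: $\widetilde{p_\psi}=p_\psi$. To establish this I would start from $p_\psi(x)=2^{-n}\braket{\psi|W_x|\psi}^2$ (the coefficients are real because the Weyl operators are Hermitian) and use the conjugation identity $W_aW_xW_a=(-1)^{[a,x]}W_x$ (which follows from the commutation relation and $W_a^2=I$) to rewrite $(-1)^{[a,x]}\braket{\psi|W_x|\psi}^2=\braket{\psi|W_x|\psi}\braket{\psi|W_aW_xW_a|\psi}=\tr\!\big[(W_x\otimes W_x)(\rho\otimes W_a\rho W_a)\big]$, where $\rho=\ketbra{\psi}{\psi}$. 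Summing over $x$ and invoking the Weyl \emph{swap identity} $\sum_{x\in\F_2^{2n}}W_x\otimes W_x=2^{n}\,\mathrm{SWAP}$ (obtained by tensoring the one-qubit identity $I\otimes I+X\otimes X+Y\otimes Y+Z\otimes Z=2\,\mathrm{SWAP}$) collapses the sum to $2^{n}\tr(\rho\,W_a\rho W_a)=2^{n}\braket{\psi|W_a|\psi}^2$; dividing by $2^{2n}$ yields $\widetilde{p_\psi}(a)=p_\psi(a)$.

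With self-duality in hand, both displayed identities fall out by Poisson summation. For the first, $\sum_{a\in T}p_\psi(a)=\sum_{a\in T}\widetilde{p_\psi}(a)=2^{-n}\sum_{x}p_\psi(x)\sum_{a\in T}(-1)^{[a,x]}=\tfrac{\abs{T}}{2^{n}}\sum_{x\in T^\sympcomp}p_\psi(x)$ by fact (ii). For the second, I would first compute $\widetilde{q_\psi}=\widetilde{p_\psi\ast p_\psi}=2^{n}\widetilde{p_\psi}^{\,2}=2^{n}p_\psi^{2}$ using facts (iii) and self-duality, then invert to get $q_\psi(a)=\sum_{x}(-1)^{[a,x]}p_\psi(x)^2$, and finally sum over $a\in T$ and apply fact (ii) once more to obtain $\sum_{a\in T}q_\psi(a)=\abs{T}\sum_{x\in T^\sympcomp}p_\psi(x)^2$.

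The only step that is not pure bookkeeping is the self-duality lemma, and within it the main obstacle is keeping the Weyl-operator conventions straight: the phase $i^{a'\cdot b'}$ in the definition of $W_x$, the resulting Hermiticity and the relation $W_aW_xW_a=(-1)^{[a,x]}W_x$, and the precise normalization $2^{n}$ in the swap identity. Once those are pinned down, everything downstream is character-sum algebra entirely analogous to Boolean Fourier analysis over $\F_2^{m}$.
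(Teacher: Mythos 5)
Your proof is correct, and it follows exactly the route the paper gestures at: the paper omits the proof, citing \cite[Theorems 3.1, 3.2]{grewal2023improved} and noting only that it ``involves symplectic Fourier analysis,'' which is precisely the self-duality-plus-Poisson-summation argument you carry out (all your normalizations check: $\widetilde{\widetilde{f}}=f$, $\widetilde{f\ast g}=2^n\widetilde f\,\widetilde g$, and the swap identity $\sum_x W_x\otimes W_x=2^n\,\mathrm{SWAP}$ giving $\widetilde{p_\psi}=p_\psi$). No gaps.
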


They also prove that the support of these distributions is contained in the symplectic complement of the unsigned stabilizer group.

\begin{fact}[{\cite[Lemma 4.3, 4.4]{grewal2023improved}}]
\label{fact:p-support-sympcomp}
    The support of $q_\psi$ is contained in $\weyl(\ket \psi)^\sympcomp$. 
\end{fact}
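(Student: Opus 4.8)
The plan is to prove the slightly stronger statement that $\supp(p_\psi) \subseteq \weyl(\ket\psi)^\sympcomp$, and then bootstrap to $q_\psi$ using the fact that $q_\psi$ is a self-convolution of $p_\psi$ together with the fact that $\weyl(\ket\psi)^\sympcomp$ is a linear subspace closed under addition.

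First I would fix some $a \in \F_2^{2n}$ with $p_\psi(a) > 0$, i.e.\ $\braket{\psi|W_a|\psi} \neq 0$, and show that $[s,a] = 0$ for every $s \in \weyl(\ket\psi)$. For such $s$ we have $W_s\ket\psi = \pm\ket\psi$, and since $W_s$ is a unitary Pauli operator (up to global phase), the $\pm$ sign is its own conjugate and cancels, so $\braket{\psi|W_a|\psi} = \braket{\psi|W_s^\dagger W_a W_s|\psi}$. Invoking the commutation law recalled in \cref{subsec:weyl-expansion-and-bell-diff-sampling}, namely $W_a W_s = (-1)^{[a,s]} W_s W_a$, and left-multiplying by $W_s^\dagger$ gives $W_s^\dagger W_a W_s = (-1)^{[a,s]} W_a$. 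Substituting yields $\braket{\psi|W_a|\psi} = (-1)^{[a,s]}\braket{\psi|W_a|\psi}$, and since the left side is nonzero we conclude $[a,s] = 0$. As $s$ ranges over all of $\weyl(\ket\psi)$, this is precisely the statement $a \in \weyl(\ket\psi)^\sympcomp$.

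Next I would handle $q_\psi$ itself. Suppose $q_\psi(x) > 0$. Since $q_\psi(x) = \sum_{a \in \F_2^{2n}} p_\psi(a) p_\psi(x+a)$ and $p_\psi$ is nonnegative, there is some $a$ with $p_\psi(a) > 0$ and $p_\psi(x+a) > 0$. By the previous paragraph both $a$ and $x+a$ lie in $\weyl(\ket\psi)^\sympcomp$, which is a subspace by \cref{fact:sympcomp}; hence $x = a + (x+a) \in \weyl(\ket\psi)^\sympcomp$, as claimed.

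I do not expect a genuine obstacle here: the only point requiring care is the phase bookkeeping in the commutation step (checking that the global sign from $W_s\ket\psi = \pm\ket\psi$ drops out, and that the Weyl-operator convention indeed gives $W_s^\dagger W_a W_s = (-1)^{[a,s]}W_a$, which is immediate from $W_s W_a = (-1)^{[s,a]}W_a W_s$ and unitarity of $W_s$). As an alternative that avoids the operator manipulation entirely, one can derive the $q_\psi$ statement in one shot from \cref{thm:p_q_duality}: applying it with $T = \weyl(\ket\psi)^\sympcomp$, so that $T^\sympcomp = \weyl(\ket\psi)$ and $p_\psi(x)^2 = 4^{-n}$ for every $x \in \weyl(\ket\psi)$ (since $\braket{\psi|W_x|\psi} = \pm 1$ there), gives $\sum_{a \in \weyl(\ket\psi)^\sympcomp} q_\psi(a) = \abs{T}\cdot\abs{\weyl(\ket\psi)}\cdot 4^{-n} = 1$ by \cref{fact:sympcomp}, which forces $q_\psi$ to be supported entirely on $\weyl(\ket\psi)^\sympcomp$.
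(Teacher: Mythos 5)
Your proof is correct, and your primary route is genuinely different from the paper's. The paper proves the fact in one shot by exactly the computation you sketch as your ``alternative'': apply \cref{thm:p_q_duality} with $T = \weyl(\ket\psi)^\sympcomp$, note that $p_\psi(x) = 2^{-n}$ for every $x \in \weyl(\ket\psi)$ by definition of the unsigned stabilizer group, and conclude that the total $q_\psi$-mass on $\weyl(\ket\psi)^\sympcomp$ equals $\abs{\weyl(\ket\psi)^\sympcomp}\cdot\abs{\weyl(\ket\psi)}/4^n = 1$ by \cref{fact:sympcomp}. Your main argument instead establishes the stronger intermediate claim $\supp(p_\psi) \subseteq \weyl(\ket\psi)^\sympcomp$ by the standard conjugation trick ($\braket{\psi|W_a|\psi} = (-1)^{[a,s]}\braket{\psi|W_a|\psi}$ for $s \in \weyl(\ket\psi)$, with the $\pm 1$ eigenvalue cancelling), and then pushes this through the convolution using closure of the subspace under addition. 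This is more elementary --- it avoids the symplectic Fourier duality entirely --- and it yields the support statement for $p_\psi$ as a free byproduct (which is in fact the content of the other cited lemma from \cite{grewal2023improved}). What the paper's duality argument buys is brevity: one three-line computation, reusing machinery already stated. Both arguments are sound; the only care needed in yours, as you note, is the sign bookkeeping in $W_s^\dagger W_a W_s = (-1)^{[a,s]}W_a$, which checks out against the commutation convention in \cref{subsec:weyl-expansion-and-bell-diff-sampling}.
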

\begin{proof}

\begin{align*}
    \sum_{x \in \weyl(\ket\psi)^\sympcomp} q_\psi(x) 
    &= \abs{\weyl(\ket\psi)^\sympcomp}\sum_{x \in \weyl(\ket\psi)} p_\psi(x)^2 && \text{(\cref{thm:p_q_duality})}\\
    &= \abs{\weyl(\ket\psi)^\sympcomp} \frac{\abs{\weyl(\ket\psi)}}{4^n} && \text{(By definition of $\weyl(\ket\psi)$)}\\
    &= 1. && (\text{\cref{fact:sympcomp}}) \qedhere
\end{align*} 
\end{proof}

We now use \cref{thm:p_q_duality} to prove that the $q_\psi$-mass on a subspace is always bounded above by the $p_\psi$-mass.

\begin{proposition}
\label{prop:q-lower-bounds-p}
Let $A \subseteq \F_2^{2n}$ be a subspace. Then
\[\sum_{a \in A} q_\psi(a) \leq \sum_{a \in A} p_\psi(a).\]
\end{proposition}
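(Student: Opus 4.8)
The plan is to apply the duality of \cref{thm:p_q_duality} to \emph{both} sides and thereby reduce the statement to a pointwise bound on $p_\psi$. Concretely, \cref{thm:p_q_duality} gives $\sum_{a \in T} p_\psi(a) = \frac{\abs{T}}{2^n}\sum_{x \in T^\sympcomp} p_\psi(x)$ and $\sum_{a \in T} q_\psi(a) = \abs{T}\sum_{x \in T^\sympcomp} p_\psi(x)^2$. The factor $\abs{T}$ appears in both rewritten expressions, so after dividing through by it the claimed inequality becomes equivalent to
\[
\sum_{x \in T^\sympcomp} p_\psi(x)^2 \;\le\; \frac{1}{2^n}\sum_{x \in T^\sympcomp} p_\psi(x).
\]

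The key observation is that $p_\psi(x) \le 2^{-n}$ for every $x \in \F_2^{2n}$. Indeed, the phase $i^{a'\cdot b'}$ in the definition of $W_x$ is chosen precisely so that each single-qubit factor, and hence $W_x$ itself, is Hermitian; since $W_x$ is also unitary, $W_x^2 = I$ and its expectation value $\braket{\psi|W_x|\psi}$ is a real number in $[-1,1]$. Squaring and dividing by $2^n$ gives $p_\psi(x) = 2^{-n}\braket{\psi|W_x|\psi}^2 \le 2^{-n}$.

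Given this bound, we have $p_\psi(x)^2 \le 2^{-n} p_\psi(x)$ termwise; summing over $x \in T^\sympcomp$ yields exactly the displayed inequality, and hence the proposition. I do not expect any genuine obstacle here: the only points requiring care are invoking the correct form of the duality (the second identity in \cref{thm:p_q_duality} relates the $q_\psi$-mass on $T$ to $\sum_{x \in T^\sympcomp} p_\psi(x)^2$, not to $\sum_{x \in T^\sympcomp} p_\psi(x)$) and observing that the common factor $\abs{T}$ cancels, so that the problem collapses to the elementary fact $p_\psi(x) \le 2^{-n}$.
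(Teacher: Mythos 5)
Your proof is correct and is essentially identical to the paper's: both apply the two duality identities of \cref{thm:p_q_duality} and then use the pointwise bound $p_\psi(x) \le 2^{-n}$ on the sum over $T^\sympcomp$. The only difference is presentational (you reduce both sides simultaneously, while the paper writes it as a single chain of inequalities).
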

\begin{proof}
    \begin{align*}
        \sum_{a \in A} q_\psi(a) &= \abs{A} \sum_{x \in A^\perp} p_\psi(x)^2 && \text{(\cref{thm:p_q_duality})}\\
        & \leq \frac{\abs{A}}{2^n} \sum_{x \in A^\perp} p_\psi(x) && (p_\psi(x) \leq \frac{1}{2^n})\\
        &= \sum_{a \in A} p_\psi(a).  && \text{(\cref{thm:p_q_duality})} \qedhere
    \end{align*}
\end{proof}

Finally, consider the set $M = \{x \in \F_2^{2n} : 2^n p_\psi(x) > 1/2\}$. 
We show, via the Schr\"odinger uncertainty relation, that every pair of elements in this set commutes.\footnote{In 1927, Heisenberg observed a tradeoff between knowing a particle's position and momentum, which has since been generalized in several ways. This particular uncertainty relation was derived by Schr\"odinger in 1930.}

\begin{fact}[Schr\"odinger uncertainty relation \cite{schrodinger1930uncertainty, angelow2008heisenberg}]\label{fact:uncertainty}
        For a quantum state $\rho$ and observables $A$ and $B$,
        \[
            \left(\tr(A^2 \rho) - \tr(A \rho)^2\right)\left(\tr(B^2 \rho) - \tr(B \rho)^2\right) \geq \left|\frac{1}{2}\tr\left((AB + BA) \rho\right) - \tr(A \rho)\tr(B \rho)\right|^2.
        \]
    \end{fact}

    \begin{fact}\label{fact:M_half_commute}
        Let $M = \{x \in \F_2^{2n} : 2^n p_\psi(x) > \frac{1}{2}\}$. Then for all $x, y \in M$, $[x, y] = 0$.
    \end{fact}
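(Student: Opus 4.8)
The plan is to apply the Schrödinger uncertainty relation (\cref{fact:uncertainty}) with $A = W_x$ and $B = W_y$ for $x, y \in M$, and derive a contradiction from assuming $[x,y] = 1$. The key observations are: (i) since $W_x$ is a Pauli operator, $W_x^2 = \pm I$, so $\braket{\psi | W_x^2 | \psi} = \pm 1$; and (ii) the quantity $\braket{\psi | W_x | \psi}^2$ is, up to the normalization $2^{-n}$, exactly $p_\psi(x)$ — but one must be careful that $\braket{\psi|W_x|\psi}$ may be complex, so I should work with $|\braket{\psi|W_x|\psi}|^2 = 2^n p_\psi(x)$ and track signs. Actually, since $W_x$ is Hermitian up to a global phase (a Pauli string $X^a Z^b$ times $i^{a\cdot b}$ is Hermitian), $\braket{\psi|W_x|\psi}$ is real and $W_x^2 = I$, so $\braket{\psi|W_x^2|\psi} = 1$. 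This makes the left-hand side of the uncertainty relation equal to $(1 - \braket{\psi|W_x|\psi}^2)(1 - \braket{\psi|W_y|\psi}^2) = (1 - 2^n p_\psi(x))(1 - 2^n p_\psi(y))$.

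Next I would analyze the right-hand side under the assumption $[x,y]=1$, i.e. $W_x$ and $W_y$ anticommute: then $W_xW_y + W_yW_x = 0$, so the right-hand side becomes $\bigl|\braket{\psi|W_x|\psi}\braket{\psi|W_y|\psi}\bigr|^2 = 2^n p_\psi(x) \cdot 2^n p_\psi(y)$. Writing $u = 2^n p_\psi(x)$ and $v = 2^n p_\psi(y)$, the uncertainty relation gives $(1-u)(1-v) \ge uv$, which simplifies to $1 - u - v + uv \ge uv$, i.e. $u + v \le 1$. But $x, y \in M$ means $u > 1/2$ and $v > 1/2$, so $u + v > 1$, a contradiction. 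Hence $[x,y] = 0$.

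The main obstacle — really the only subtle point — is justifying that $\braket{\psi|W_x|\psi}$ is real and that $W_x^2 = I$ (rather than $-I$), which is what collapses the variance term to the clean form $1 - 2^n p_\psi(x)$. This follows from the phase convention $W_x = i^{a'\cdot b'} X^{a_1}Z^{b_1} \otimes \cdots$ in the definition of the Weyl operator, which is precisely chosen to make each $W_x$ Hermitian and squaring to the identity; I would cite this as a standard property of Weyl/Pauli operators, or verify it tensor-factor by tensor-factor ($I, X, Z$ are Hermitian involutions, and $iXZ = \begin{psmallmatrix} -1 & 0 \\ 0 & 1\end{psmallmatrix}$... wait, $XZ = \begin{psmallmatrix} 0 & -1 \\ 1 & 0 \end{psmallmatrix}$, so $iXZ$ is Hermitian and $(iXZ)^2 = I$). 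Everything else is the short algebraic manipulation above.
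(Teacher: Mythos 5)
Your proposal is correct and follows essentially the same route as the paper: apply the Schr\"odinger uncertainty relation to the Hermitian involutions $W_x, W_y$, use anticommutation to reduce the right-hand side to $4^n p_\psi(x)p_\psi(y)$, and contradict $2^n p_\psi(x), 2^n p_\psi(y) > \tfrac12$. The only cosmetic difference is how you close the inequality (you derive $u+v\le 1$, the paper bounds both sides against $\tfrac14$), and your care about Hermiticity and $W_x^2=I$ matches the paper's (briefer) justification.
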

    \begin{proof}
        Recall that Weyl operators are Hermitian.
        Let $W_x$ and $W_y$ be two Weyl operators in $M$.
        Simple calculations tell us that \[\left(\braket{\psi | W_x^2 | \psi} - \braket{\psi | W_x | \psi}^2\right)\left(\braket{\psi | W_y^2 | \psi} - \braket{\psi | W_y | \psi}^2\right) = \left(1 - 2^n p_\psi(x)\right)\left(1 - 2^n p_\psi(y)\right) < \frac{1}{4},\]
        where we use the fact that $W_x^2 = I$ for all $x \in \F_2^{2n}$.
        Now, assume that $[x, y] = 1$ for the sake of contradiction.
        Then $W_x W_y + W_y W_x = 0$, and 
        \begin{align*}
            \left|\frac{1}{2}\braket{\psi |\left(W_x W_y + W_y W_x\right) | \psi} - \braket{\psi | W_x |\psi}\braket{\psi | W_y | \psi}\right|^2
            &= \left|\braket{\psi | W_x |\psi}\braket{\psi | W_y | \psi}\right|^2\\
            &= 4^n p_\psi(x) p_\psi(y) > \frac{1}{4},
        \end{align*}
        which, by \cref{fact:uncertainty}, is a contradiction. Hence, $[x, y] = 0$.
    \end{proof}

We note that Gross, Nezami, and Walter \cite[Figure 1]{gross2021schur} previously gave a graphical proof of \cref{fact:M_half_commute}.

\subsection{Stabilizer Dimension}

We use the following stabilizer complexity measure throughout this work. 

\begin{definition}[Stabilizer dimension]\label{def:stabilizer-dimension}
Let $\ket{\psi}$ be an $n$-qubit pure state. The \emph{stabilizer dimension of $\ket{\psi}$} is the dimension of $\weyl(\ket{\psi})$ as a subspace of $\F_2^{2n}$.
\end{definition}

Put another way, the stabilizer dimension of a quantum state is $k$ when there is an abelian group of $2^k$ Pauli matrices that stabilize the state.
The stabilizer dimension of a stabilizer state is $n$, which is maximal, and, for most states, the stabilizer dimension is $0$. 
We note that the stabilizer dimension is closely related to the stabilizer state nullity \cite{beverland2020lower}, and, in particular, for $n$-qubit states, the stabilizer dimension is $n$ minus the stabilizer state nullity. 

Roughly speaking, as one applies single-qubit non-Clifford gates to a stabilizer state, the state becomes farther from all stabilizer states. The stabilizer dimension quantifies this in the following sense. 

\begin{lemma}[{\cite[Lemma 4.2]{grewal2023improved}}]
\label{lem:arbitrary-gate-dimension}
    Let $\ket\psi$ be the output state of a $t$-doped Clifford circuit. Then the stabilizer dimension of $\ket\psi$ is at least $n-2t$.
\end{lemma}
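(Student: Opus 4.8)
The plan is to induct on the number $t$ of single-qubit non-Clifford gates, using two elementary facts: conjugation by a Clifford circuit preserves the stabilizer dimension, and applying a single-qubit gate lowers it by at most $2$. If $\ket\psi$ is the output of a Clifford circuit with at most $t-1$ non-Clifford gates, we are done by the inductive hypothesis (in the base case $t=0$, such a state is a stabilizer state, so $\weyl(\ket\psi)$ has dimension exactly $n \ge n$). Otherwise the circuit uses exactly $t \ge 1$ non-Clifford gates, and we may write the preparing circuit as $U = C\,G\,U'$, where $C$ is a Clifford circuit, $G$ is a single-qubit non-Clifford gate, and $U'$ is a $(t-1)$-doped Clifford circuit (any Clifford gates after the last non-Clifford gate are absorbed into $C$). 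Setting $\ket{\psi'} = U'\ket{0^n}$, the inductive hypothesis gives $\dim\weyl(\ket{\psi'}) \ge n - 2(t-1)$.

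For the Clifford step, recall the standard fact underlying the stabilizer tableau formalism: for any Clifford $C$ there is a linear bijection $\phi_C : \F_2^{2n} \to \F_2^{2n}$ with $C W_x C^\dagger = \pm W_{\phi_C(x)}$ for all $x$. Hence $x \in \weyl(\ket{\psi'})$ if and only if $\phi_C(x) \in \weyl(C\ket{\psi'})$, so $\weyl(C\ket{\psi'}) = \phi_C\bigl(\weyl(\ket{\psi'})\bigr)$ and the two subspaces have the same dimension.

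For the single-qubit step, suppose $G$ acts on qubit $i$ and set $\ket\phi = G\ket{\psi'}$. Consider the linear map $\weyl(\ket{\psi'}) \to \F_2^2$ sending $x = (a,b) \mapsto (a_i, b_i)$, and let $V$ be its kernel; by rank–nullity, $\dim V \ge \dim\weyl(\ket{\psi'}) - 2$. For $x \in V$ we have $a_i = b_i = 0$, so $W_x$ acts as the identity on qubit $i$ and therefore commutes with $G$. Consequently $W_x\ket\phi = W_x G\ket{\psi'} = G W_x\ket{\psi'} = \pm G\ket{\psi'} = \pm\ket\phi$, so $x \in \weyl(\ket\phi)$; thus $V \subseteq \weyl(\ket\phi)$ and $\dim\weyl(\ket\phi) \ge \dim\weyl(\ket{\psi'}) - 2$. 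Chaining the two steps, $\dim\weyl(\ket\psi) = \dim\weyl(C G\ket{\psi'}) = \dim\weyl(G\ket{\psi'}) \ge \dim\weyl(\ket{\psi'}) - 2 \ge n - 2t$, completing the induction.

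There is no serious obstacle here; the only point requiring care is the sign bookkeeping in the single-qubit step. It is precisely because $W_x$ acts trivially on the qubit that $G$ touches that $W_x$ commutes with $G$, so the eigenvalue $\pm 1$ is preserved; a naive attempt to directly conjugate an arbitrary generator of $\weyl(\ket{\psi'})$ by $G$ fails, since $G W_x G^\dagger$ need not be a Pauli operator at all. The one external ingredient is the well-known linearity of Clifford conjugation on $\F_2^{2n}$.
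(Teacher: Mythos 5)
Your proof is correct and follows essentially the same route as the paper's (sketched) argument: induction on the non-Clifford gates, with the key step being that the Weyl operators in $\weyl(\ket{\psi'})$ acting trivially on the touched qubit form a subgroup of index at most $4$ (your kernel $V$, i.e.\ the ``at least $1/4$ of the Weyl operators'' in the paper's sketch), and these commute with the gate, so the dimension drops by at most $2$. The rank--nullity and sign-bookkeeping details you supply are exactly the right way to make the sketch rigorous.
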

\begin{proof}[Proof sketch]
First show that a single non-Clifford gate must commute with at least $1/4$ of the Weyl operators in $\weyl(\ket{\psi})$, which decreases the stabilizer dimension by at most $2$. The proof proceeds by induction on the non-Clifford gates in the circuit, using the fact that Clifford gates preserve stabilizer dimension.

\end{proof}

\subsection{Clifford Action on \texorpdfstring{$\F_2^{2n}$}{Subspaces}}\label{subsec:clifford-action-on-F2}

The Clifford group normalizes the Pauli group by conjugation, which induces an action on $\F_2^{2n}$ through the corresponding Weyl operators. 
To simplify notation, let $C(x)$ denote the action of a Clifford circuit $C$ on $x \in \F_2^{2n}$, and define it to be the $y \in \F_2^{2n}$ such that $W_y = \pm C W_x C^\dagger$.
Similarly, for a set $S \subseteq \F_2^{2n}$, define $C(S) \coloneqq \{C(x) : x \in S\}$.
As an example, $C(\F_2^{2n}) = \F_2^{2n}$, as all Clifford circuits normalize the Pauli group up to a $\pm 1$ phase.

Two key properties of the Clifford action are linearity (i.e., $C(x) + C(y) = C(x + y)$), and preservation of the symplectic product (i.e., $[x, y] = [C(x),C(y)]$. In fact, these two properties can be taken as a \textit{definition} of the Clifford group action, which is equivalent to the action of the symplectic group $\mathrm{Sp}(2n, \F_2)$. These properties imply, for example, that the Clifford action preserves symplectic complements and inclusions among subspaces of $\F_2^{2n}$.

Below, we prove one further basic fact about the action of Clifford circuits on symplectic vector spaces over $\F_2$.

\begin{fact}\label{fact:clifford-permutes-p-mass}
Let $\ket \psi$ be an $n$-qubit quantum state, let $C$ be a Clifford circuit, and define $\ket{\phi}\coloneqq C\ket{\psi}$. 
Then
\[
    p_{\phi}(x) = p_{\psi}(C^\dagger(x))
\]
for all $x \in \F_2^{2n}$.
\end{fact}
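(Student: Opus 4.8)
The plan is to unwind the definitions of $p_\phi$ and $p_\psi$ and use the fact that conjugation by a Clifford circuit is, up to sign, a \emph{permutation} of the Weyl operators. Concretely, recall $p_\phi(x) = 2^{-n}\braket{\phi|W_x|\phi}^2$, and substitute $\ket{\phi} = C\ket{\psi}$ to get $\braket{\phi|W_x|\phi} = \braket{\psi|C^\dagger W_x C|\psi}$. By definition of the Clifford action on $\F_2^{2n}$ given in \cref{subsec:clifford-action-on-F2}, we have $C W_y C^\dagger = \pm W_{C(y)}$ for each $y$; equivalently, setting $y = C^\dagger(x)$, we get $C^\dagger W_x C = \pm W_{C^\dagger(x)}$. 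The sign is irrelevant because it gets squared.

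First I would record the observation that $C \mapsto (x \mapsto C(x))$ is a well-defined bijection on $\F_2^{2n}$ with inverse $x \mapsto C^\dagger(x)$ — this follows since $C W_{C^\dagger(x)} C^\dagger = \pm W_x$ forces $C(C^\dagger(x)) = x$ (the map $y \mapsto W_y$ is injective up to phase on $\F_2^{2n}$, and the sign ambiguity doesn't affect which element of $\F_2^{2n}$ we land on). Then the computation is the short chain
\[
p_\phi(x) = \frac{1}{2^n}\braket{\psi|C^\dagger W_x C|\psi}^2 = \frac{1}{2^n}\braket{\psi|(\pm W_{C^\dagger(x)})|\psi}^2 = \frac{1}{2^n}\braket{\psi|W_{C^\dagger(x)}|\psi}^2 = p_\psi(C^\dagger(x)),
\]
valid for every $x \in \F_2^{2n}$, which is exactly the claim.

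There is essentially no obstacle here; the only thing to be slightly careful about is that $C^\dagger$ is itself a Clifford circuit, so the notation $C^\dagger(x)$ is meaningful, and that $(C^\dagger)(\cdot)$ is the inverse permutation of $C(\cdot)$, so there is no ambiguity in writing $C^\dagger(x)$ rather than $(C(\cdot))^{-1}(x)$. If one wanted to be pedantic, one could note that $\braket{\psi|W_{C^\dagger(x)}|\psi}$ is real (Weyl operators are Hermitian), so squaring the $\pm$ phase is legitimate and no absolute values are needed. I would keep the proof to the two or three displayed equalities above.
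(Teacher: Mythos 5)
Your proof is correct and follows exactly the same route as the paper's (one-line) argument: substitute $\ket{\phi}=C\ket{\psi}$, use $C^\dagger W_x C = \pm W_{C^\dagger(x)}$, and note the sign vanishes under squaring. The extra remarks about well-definedness of $C^\dagger(x)$ and reality of the expectation are fine but not needed beyond what the paper leaves implicit.
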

\begin{proof}
\[
    2^n p_{\phi}(x) 
    = \braket{\psi |  C^\dagger W_x C | \psi}^2 = 2^n p_\psi(C^\dagger(x)).\qedhere
\]
\end{proof}

\section{Linear Algebra Subroutines}

Our algorithms use two linear-algebraic subroutines, which we describe below. 
First, we give an algorithm for computing the symplectic complement of a subspace.

\begin{lemma}\label{lemma:compute-symplectic-complement}
Given a set of $m$ vectors whose span is a subspace $A \subseteq \F_2^{2n}$, there is an algorithm that outputs a basis for $A^\sympcomp$ in $O(m n \cdot \min(m,n))$ time.
\end{lemma}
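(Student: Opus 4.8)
The plan is to reduce computing the symplectic complement to computing an ordinary (orthogonal) null space over $\F_2$, which is a standard Gaussian-elimination task. Recall from \cref{def:symplectic-complement} that $a \in H^\sympcomp$ iff $[x,a] = 0$ for all $x$ in a spanning set $v_1,\dots,v_m$ of $H$. The key observation is that the symplectic product $[x,a]$ is just the ordinary dot product $x^\top J a$ over $\F_2$, where $J = \begin{psmallmatrix} 0 & I_n \\ I_n & 0 \end{psmallmatrix}$ is the $2n \times 2n$ block swap matrix. So if we form the $m \times 2n$ matrix $V$ whose rows are $v_1, \dots, v_m$, then $H^\sympcomp = \{a : VJa = 0\} = \ker(VJ)$. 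Equivalently, letting $V'$ be the matrix obtained from $V$ by swapping its first $n$ columns with its last $n$ columns, $H^\sympcomp = \ker(V')$.

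First I would spell out this reduction: note that $J$ is a permutation matrix, so multiplying $V$ by $J$ on the right (or applying $J^\top = J$ appropriately) costs no arithmetic at all — it is just a relabeling of coordinates, computable in $O(mn)$ time to write down $V'$. Then I would invoke the standard fact that a basis for the kernel of an $m \times 2n$ matrix over $\F_2$ can be computed by Gaussian elimination: put $V'$ in row echelon form, identify the pivot and free columns, and read off one kernel basis vector per free column by back-substitution. This requires $O(m \cdot 2n \cdot \min(m, 2n)) = O(mn \cdot \min(m,n))$ field operations — the row-reduction does $O(\min(m,2n))$ pivot steps, each touching $O(m \cdot 2n)$ entries — which matches the claimed bound. (One can also observe that without loss of generality $m \le 2n$ after an initial reduction, but it is cleaner to just state the echelon-form complexity directly.) Finally I would note that the output is genuinely a basis for $H^\sympcomp$: the returned vectors are linearly independent and span the solution set of the homogeneous system, which by the reduction is exactly $H^\sympcomp$; and since $\dim H^\sympcomp = 2n - \dim H$ by \cref{fact:sympcomp}, there is no inconsistency.

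I do not anticipate a genuine obstacle here — the lemma is essentially a bookkeeping statement wrapping a textbook linear-algebra routine. The only things to be careful about are: (i) making the correspondence between the symplectic form and the swapped-coordinate dot product precise, so that $\ker(V')$ really is $H^\sympcomp$ and not, say, its image under the swap; and (ii) being honest about the complexity of $\F_2$ Gaussian elimination on a non-square matrix, so that the stated $O(mn \cdot \min(m,n))$ is actually achieved rather than a looser bound like $O(mn^2)$ or $O(m^2 n)$ — hence the $\min(m,n)$ factor, reflecting that the number of pivots is at most $\min(m, 2n)$. If desired, one can first discard redundant rows (reducing to $m \le 2n$ independent generators in $O(mn \cdot \min(m,n))$ time) so that the subsequent elimination is on an $m' \times 2n$ matrix with $m' = \dim H \le 2n$, but this is not strictly necessary for the bound as stated.
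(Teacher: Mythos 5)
Your proposal is correct and follows essentially the same route as the paper: form the $m \times 2n$ matrix of generators, swap its left and right $n$-column blocks, and compute the kernel of the resulting matrix by Gaussian elimination in $O(mn \cdot \min(m,n))$ time. Your additional remarks on the permutation-matrix viewpoint and the pivot-counting behind the $\min(m,n)$ factor are fine elaborations of the same argument.
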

\begin{proof}
    The algorithm works as follows. First, construct a $m \times 2n$ matrix whose rows are the $m$ elements of $A$ given as input. 
    Then swap the left and right $m \times n$ block submatrices, and denote the resulting matrix by $M$. Observe that for a nonzero vector $v$, $Mv = 0$ only when the symplectic product between $v$ and all vectors in $A$ is $0$. Hence, $v$ is in $A^\perp$, and the nullspace of $M$ is precisely $A^\perp$. 
    Finding a basis for the nullspace of $M$ can be done via Gaussian elimination, which takes $O(m n \cdot \min(m,n))$ time.
\end{proof}

Next, we explain how to find a Clifford circuit whose action on $\F_2^{2n}$ maps an arbitrary $d$-dimensional isotropic subspace of $\F_2^{2n}$ to the subspace $0^{2n-d} \times \F_2^{d}$.
We note that while the existence of such a Clifford circuit is not difficult to show (cf.\ \cite[Lemma 5.1]{grewal2023improved}), an explicit and efficient construction requires a bit more effort.

\begin{lemma}
\label{lem:clifford-mapping-algorithm}
Given a set of $m$ vectors whose span is a $d$-dimensional isotropic subspace $A \subset \F_2^{2n}$, there exists an efficient algorithm that outputs a Clifford circuit $C$ such that $C(A) = 0^{2n-d} \times \F_2^{d}$. The algorithm runs in $O(m n \cdot \min(m,n))$ time, and the circuit size of $C$ (i.e., the number of gates) is $O(nd)$. 
\end{lemma}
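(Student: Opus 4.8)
The plan is to build $C$ as a composition of Clifford circuits in two stages: first map $H$ to a subspace of the "$Z$-type" part $0^n \times \F_2^n$, and then permute and reduce within that part to land exactly on $0^{2n-d}\times\F_2^d$. I will track everything at the level of the symplectic action on $\F_2^{2n}$, since by the correspondence between Clifford circuits and symplectic maps (together with \cref{fact:clifford-isotropic-to-isotropic}), it suffices to realize the desired $\F_2$-linear symplectic transformation by an explicit product of the generators $H$ (Hadamard), $S$ (phase), and $\mathrm{CNOT}$, each of which acts on $\F_2^{2n}$ in a well-understood way. The key structural input is that $H$ is isotropic and $d$-dimensional, so after a suitable change of symplectic basis it can be made to sit inside a Lagrangian-type coordinate subspace.

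First I would put a generating matrix for $H$ — an $m \times 2n$ matrix over $\F_2$ whose rows span $H$, written in block form $[A \mid B]$ with $A,B \in \F_2^{m\times n}$ — into a normal form by Gaussian elimination. Using single-qubit Hadamards (which swap the $X$ and $Z$ coordinates on a chosen qubit) I can arrange that the $X$-block becomes zero: concretely, identify a set of columns so that after applying Hadamards on those qubits and row-reducing, the image lies in $0^n \times \F_2^n$. The obstruction to doing this naively is that the $X$-block $A$ need not be, say, in row echelon form with the right support; the fix is to first row-reduce $[A\mid B]$, then apply Hadamards on exactly the pivot columns of $A$, then row-reduce again — at which point isotropy of $H$ forces the surviving $X$-block to vanish. (This is exactly the place where isotropy is used: without it the symplectic form would obstruct compressing $H$ into a coordinate isotropic subspace. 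Concretely, after clearing the $X$-part of the non-pivot rows, the symplectic pairing between a would-be leftover $X$-pivot row and its partner would be forced to $1$, contradicting $[x,y]=0$ for $x,y\in H$.) Now $H = 0^n \times H'$ for some $d$-dimensional subspace $H' \subseteq \F_2^n$.

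Next I would use $\mathrm{CNOT}$ gates to reduce $H'$ to a coordinate subspace. The $\mathrm{CNOT}$ gate's action on $\F_2^{2n}$ restricted to the $Z$-block is an invertible $\F_2$-linear column operation (adding one $Z$-coordinate to another), and products of $\mathrm{CNOT}$s realize an arbitrary element of $GL_n(\F_2)$ on the $Z$-block (while acting as the inverse-transpose on the $X$-block, which is irrelevant here since the $X$-block of $H$ is $0$). So I can choose a product of $O(n^2)$ — in fact $O(nd)$ suffices if I only need to fix up $d$ basis vectors — $\mathrm{CNOT}$s implementing an invertible linear map sending $H'$ to $\spn\{e_{n-d+1},\dots,e_n\}$, i.e.\ $0^{n-d}\times\F_2^d$ inside the $Z$-block. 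Composing the Hadamard layer and the $\mathrm{CNOT}$ layer gives $C$ with $C(H) = 0^{2n-d}\times\F_2^d$ as desired.

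For the complexity bookkeeping: the two Gaussian eliminations on an $m\times 2n$ matrix take $O(mn\cdot\min(m,n))$ time, and extracting the column operations for the $\mathrm{CNOT}$ layer is another Gaussian elimination of the same cost, so the overall running time is $O(mn\cdot\min(m,n))$. For the circuit size, the Hadamard layer uses at most $n$ gates, and the linear map on the $Z$-block can be synthesized with $O(nd)$ $\mathrm{CNOT}$s — e.g.\ bring the $d$ target vectors into echelon form with $O(nd)$ operations and then clear the remaining $n-d$ coordinates, each touched $O(d)$ times — giving total circuit size $O(nd)$. The main obstacle I expect is the careful argument in stage one that isotropy really does force the $X$-block to vanish after the Hadamard-and-reduce step, and keeping the gate count at $O(nd)$ rather than a naive $O(n^2)$ in the $\mathrm{CNOT}$ synthesis; both are bookkeeping-heavy but standard manipulations of stabilizer tableaux in the style of \cite{aaronson2004simulation}.
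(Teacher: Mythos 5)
Your overall architecture (map $H$ into the $Z$-block $0^n\times\F_2^n$, then use $\mathrm{CNOT}$s to realize a column operation in $GL_n(\F_2)$ sending the resulting subspace to $0^{n-d}\times\F_2^d$) is reasonable, your second stage is correct, and your complexity accounting matches the paper's. The genuine gap is in stage one: Hadamards alone, even combined with row reduction, cannot in general clear the $X$-block, and isotropy does not rescue the argument as you've stated it. The simplest counterexample is $n=1$, $H=\langle(1,1)\rangle$, i.e.\ the single Weyl operator $Y$: the Hadamard's tableau action swaps the two coordinates of $(1,1)$ and hence fixes it, so no choice of Hadamard layer (and no row operation, since there is only one row) ever lands in $0^1\times\F_2^1$. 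Yet $H$ is trivially isotropic, since $[x,x]=0$ always; your argument that \emph{"the symplectic pairing between a would-be leftover $X$-pivot row and its partner would be forced to $1$"} only constrains pairs of \emph{distinct} rows and says nothing about a single row with overlapping $X$- and $Z$-support on the same qubit. A second counterexample with no $Y$'s at all is $H=\langle(1,1,0,0)\rangle$ (the operator $X\otimes X$, $n=2$): the pivot column of the $X$-block is column $1$, and a Hadamard there yields $(0,1,1,0)$, still with nonzero $X$-part.

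What is missing is (i) the phase gate, whose tableau action adds the $i$th $X$-column into the $i$th $Z$-column and is needed to normalize a $Y$-type entry $(m_{i,j},m_{i,n+j})=(1,1)$ to an $X$-type entry $(1,0)$, and (ii) $\mathrm{CNOT}$s to compress the $X$-support of each row onto a single qubit, rather than trying to Hadamard each supported column independently. The paper's proof does exactly this: for each row it first uses phase and Hadamard gates entrywise to put each coordinate pair into the form $(1,0)$ or $(0,0)$, then uses $\mathrm{CNOT}$s to reduce the row to a single $X$ on qubit $i$, arriving at the form $\left(\,I\;\;0\;\middle|\;0\,\right)$; only then does it apply a Hadamard layer and a $\mathrm{CNOT}$ swap layer to move the identity block into the last $d$ $Z$-columns. (Isotropy is genuinely used there, but for a different purpose than in your sketch: to argue that once the first $i-1$ rows are $e_1,\dots,e_{i-1}$ in the $X$-block, the remaining rows can have no $Z$-support on qubits $1,\dots,i-1$.) With that repair your stage one goes through and the rest of your argument stands.
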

\begin{proof}
We will explain the algorithm and then prove its correctness. 
To begin, run Gaussian elimination on the set of $m$ vectors to get a basis for $H$ such that, when written as a $d \times 2n$ matrix $M = (m_{i,j})$, the matrix $M$ is in row echelon form.
This process takes $O(mn\cdot\min(m, n))$ time. 
The subspace spanned by the rows of $M$ is precisely the subspace $A$. 

The matrix $M$ is essentially a \textit{stabilizer tableau}, and therefore Clifford gates have the following effect on $M$ (for additional detail see, e.g., \cite{aaronson2004simulation}):
\begin{itemize}
\item Applying the Hadamard gate on the $i$th qubit corresponds to swapping the $i$th and $(n+i)$th columns of $M$.
\item Applying the phase gate on the $i$th qubit corresponds to adding the $i$th column of $M$ to the $(n+i)$th column of $M$.
\item Applying the $\mathrm{CNOT}$ gate with control qubit $i$ and target qubit $j$ corresponds to adding the $i$th column of $M$ to the $j$th column $M$ and adding the $(n+j)$th column of $M$ to the $(n+i)$th column of $M$.
\end{itemize}
Additionally, row operations do not change the subspace spanned by the rows of $M$ and therefore can be done freely. 

Our job now is to find a sequence of Hadamard, phase, and $\mathrm{CNOT}$ gates that maps $M$ to a matrix whose rows span the subspace $0^{2n-d} \times \F_2^{d}$; in particular,  
a matrix with the following form
\begin{align}\label{eq:matrix-target}
\begin{pmatrix}[c|cc]
   0 & 0 & I
\end{pmatrix},
\end{align}
where the first $0$ is a $d \times n$ matrix of all $0$'s, the second is a $d \times (n-d)$ matrix of all $0$'s, and the last is a $d \times d$ identity matrix.

The remainder of the algorithm works as follows. 

\begin{enumerate}
    \item 
For each row $i \in [d]$ of $M$:
\begin{enumerate}
\item For each $j \in \{i, \dots, n\}$, apply phase and Hadamard gates so that either $m_{i,j} = 1$ and $m_{i,n+j} =0$ or both are $0$.
\item  If $m_{i,i} = 0$, then find a $k \in \{i+1, \dots, n\}$ for which $m_{i,k} = 1$.\footnote{At least one such $k$ must exist, for if it didn't, then the $i$th row would be all 0's, which is impossible since the rows of $M$ are linearly independent.} Apply a $\mathrm{CNOT}$ with control qubit $i$ and target qubit $k$ so that $m_{i,i} = 1$.
\item For each $j \in \{i+1,\ldots, n\}$, if $m_{i,j} = 1$, apply $\mathrm{CNOT}$ with control qubit $i$ and target qubit $j$.
\item For $j \in \{i+1, \dots, d\}$, set $m_{j,i} = 0$.\footnote{This corresponds to adding the $i$th row to the $j$th row, which, as mentioned earlier, does not change the subspace spanned by the rows of $M$.}
\end{enumerate}
\item Apply a Hadamard gate to each of the first $d$ qubits.
\item For $i \in \{0, \ldots, d-1\}$, apply a $\mathrm{CNOT}$ with control qubit $n-i$ and target qubit $d-i$. Then apply a $\mathrm{CNOT}$ with control qubit $d-i$ and target qubit $n-i$. 
\end{enumerate}

Let $C$ denote the Clifford circuit described by the above process. 
The algorithm concludes by outputting $C$.
We apply $O(n)$ Clifford gates and do at most $O(d)$ row-sum operations per row. Thus, each iteration of Step 1 takes $O(nd)$ time. 
Since $d \leq \min(m, n)$, the overall running time is therefore $O(m n \cdot \min(m,n))$, due to the Gaussian elimination step at the beginning of the algorithm and the size of the circuit is at most $O(nd)$.

To prove correctness, we must argue that $C(A)= 0^{2n-d} \times \F_2^{d}$, or equivalently, that the algorithm above maps $M$ to a matrix as in \cref{eq:matrix-target}.
% \[
% \begin{pmatrix}[c|cc]
%   0 & I & 0
% \end{pmatrix},
% \]
% where the first $0$ is a $d \times n$ matrix of all $0$'s, the identity matrix is $d \times d$, and last $0$ is a $d \times n-d$ matrix of all $0$'s.

First, we show that Step 1 of the algorithm maps $M$ to a matrix of the form
\[
\begin{pmatrix}[cc|c]
   I & 0 & 0
\end{pmatrix},
\]
where $I$ is the $d \times d$ identity matrix.
It is clear that after the first iteration of Step 1 completes, $m_{1,1} =1$ and the remaining entries of the first row and column are $0$'s. 
By way of induction, assume that this is true after the first $i-1$ iterations, so that the resulting matrix looks as follows:
% \begin{align}\label{eq:invariant-matrix}
% \left[
%     \begin{array}{ccccc;{2pt/2pt}cccc|cccccc}
%         1 & \phantom{1} & \phantom{1} & \phantom{1} & \phantom{1} & \phantom{1} & \phantom{1} & \phantom{1} & \phantom{1} & \phantom{1}& \phantom{1} & \phantom{1} & \phantom{1}  \\ 
%         & 1&&  & & && & &&& & &  \\
%         & &\ddots&& & & \bigzero && & &&\bigzero&& \\ 
%         & &&\ddots && & & & & &&&& \\ 
%         & && &1 & && & &&&& &\\ \hdashline[2pt/2pt] 
%         & && & &m_{i,i}& & & & & & \bord & &  \\  
%         && \bigzero& && & & & & &  \bigzero & \bord & &  \\  
%         && & & & & & && & & \bord & &  \\  
%         && & & & & & && & & \bord & &  \\
%         \multicolumn{5}{c}{\underbrace{\hspace*{\dimexpr10\tabcolsep+4\arrayrulewidth}\hphantom{11111}}_{\scriptstyle{i-1}\,\, \text{columns}}} && \multicolumn{2}{c}{} && \multicolumn{3}{c}{\underbrace{\hspace*{\dimexpr10\tabcolsep+4\arrayrulewidth}\hphantom{11111}}_{\scriptstyle{i-1}\,\, \text{columns}}} & \multicolumn{3}{c}{\hspace*{\dimexpr6\tabcolsep+2\arrayrulewidth}\hphantom{m_{i, i}}}\\[-3ex]
%     \end{array}
% \right].
% \end{align}
% \vspace{1em}

\vspace{0.5\baselineskip}
\begin{align}\label{eq:invariant-matrix}
% \hspace{1.0cm}
\begin{pNiceArray}{cccccccc|cccccccc}[]
\Block[borders={bottom,right}]{4-4}{}
1 &&& & \Block[borders={bottom,right}]{4-4}{}
0 & \Cdots & \Cdots & 0 &
\Block[borders={bottom}]{4-8}{}
0 & \Cdots & \Cdots & \Cdots & \Cdots &\Cdots &\Cdots &0 \\
  & \Ddots && &  \Vdots &  & & \Vdots & \Vdots &   & &&&&&\Vdots  \\
  && \Ddots & &  \Vdots & &  & \Vdots &&&&&&&& \\ 
  &&& 1 &  0 & \Cdots & \Cdots & 0 & 0 & \Cdots &\Cdots &\Cdots&\Cdots&\Cdots&\Cdots& 0 \\  
\Block[borders={right}]{4-4}{}
0 & \Cdots & \Cdots & 0 &
\Block[borders={right}]{4-4}{} m_{i,i} & & & &
\Block[borders={right}]{4-4}{}
0 & \Cdots & &0 \\ 
\Vdots & & & \Vdots &&&&&\Vdots &  & & \Vdots \\ 
\Vdots &&  & \Vdots &&&&& \\ 
0 & \Cdots & \Cdots & 0 &&&&&0 &\Cdots & &0 \\ 
\CodeAfter
    \OverBrace[shorten,yshift=3.75pt]{1-1}{1-8}{n \text{ columns}}
    % \OverBrace[shorten,yshift=3.75pt]{1-9}{1-16}{n \text{ columns}}
    \UnderBrace[shorten,yshift=3.75pt]{8-1}{8-4}{i-1 \text{ columns}}
    \UnderBrace[shorten,yshift=3.75pt]{8-9}{8-12}{i-1 \text{ columns}}
    % \SubMatrix.{1-1}{4-1}\{[right-xshift=-0.8cm,name=C]
    % \tikz \node [left=.5cm] at (C-left.west) {$i-1$ rows} ;
    % \SubMatrix.{5-1}{8-1}\{[right-xshift=-0.8cm,name=D]
    % \tikz \node [left=.5cm] at (D-left.west) {$d-i+1$ rows} ;
\end{pNiceArray}.\\\nonumber
\end{align}
\vspace{\baselineskip}

% \begin{align}\label{eq:invariant-matrix}
% % \hspace{1.0cm}
% \begin{pNiceArray}{cccccccc|cccccccc}[]
% \Block[borders={bottom,right,tikz=dashed}]{4-4}{}
% 1 &&& & \Block[borders={bottom,right,tikz=dashed}]{4-4}{}
% 0 & \Cdots & \Cdots & 0 &
% \Block[borders={bottom,tikz=dashed}]{4-8}{}
%  &  &  & &  & & & \\
%   &  & & &   &  & &  & &   & &\bigzero&&&&  \\
%   &&  & &  & &  & &&&&&&&& \\ 
%   &&& 1 &  0 & \Cdots & \Cdots & 0 &  &  & &&&&&  \\  
% \Block[borders={right,tikz=dashed}]{4-4}{}
% 0 & \Cdots & \Cdots & 0 &
% \Block[borders={right,tikz=dashed}]{4-4}{} m_{i,i} & & & &
% \Block[borders={right,tikz=dashed}]{4-4}{}
% 0 & \Cdots & &0 \\ 
% \Vdots & & & \Vdots &&&&&\Vdots &  & & \Vdots \\ 
% \Vdots &&  & \Vdots &&&&& \\ 
% 0 & \Cdots & \Cdots & 0 &&&&&0 &\Cdots & &0 \\ 
% \CodeAfter
%     % \OverBrace[shorten,yshift=3.75pt]{1-1}{1-8}{n \text{ columns}}
%     % \OverBrace[shorten,yshift=3.75pt]{1-9}{1-16}{n \text{ columns}}
%     \UnderBrace[shorten,yshift=3.75pt]{8-1}{8-4}{i-1 \text{ columns}}
%     \UnderBrace[shorten,yshift=3.75pt]{8-9}{8-12}{i-1 \text{ columns}}
%     % \SubMatrix.{1-1}{4-1}\{[right-xshift=-0.8cm,name=C]
%     % \tikz \node [left=.5cm] at (C-left.west) {$i-1$ rows} ;
%     % \SubMatrix.{5-1}{8-1}\{[right-xshift=-0.8cm,name=D]
%     % \tikz \node [left=.5cm] at (D-left.west) {$d-i+1$ rows} ;
% \end{pNiceArray}
% \end{align}
% \vspace{\baselineskip}

\noindent In the top row, from left to right, the first block is the $(i-1) \times (i-1)$ identity matrix, then an $(i -1) \times (n-i+1)$ block of all $0$'s, and finally an $(i-1) \times 2n$ block of all $0$'s.
In the bottom row, from left to right, the first block is a $(d - i +1) \times (i-1)$ matrix of all $0$'s, then a $(d - i + 1) \times (n-i+1)$ block being processed by the algorithm, then a $(d-i+1) \times (i-1)$ block of all $0$'s, and finally a $(d-i+1)\times(n - i +1)$ block being processed by the algorithm.

We will argue that after the $i$th iteration of Step 1 finishes, the matrix will have the form of \cref{eq:invariant-matrix} but with $m_{i,i}=1$ and the rest of the $i$th row and $i$th column cleared to $0$. 
First, observe that the third block in the second row must be all $0$'s if the first block of the top row is the identity matrix because the subspace spanned by the rows is isotropic (and applying Clifford gates will not affect that% by \cref{fact:clifford-isotropic-to-isotropic}
). 
It is also clear that the operations performed in the $i$th iteration will set $m_{i,i}=1$, set $m_{i,j} =0$ for $j \in \{i+1, \dots, 2n\}$, and set $m_{j,i}=0$ for $j \in \{i+1, \ldots, d\}$.
Therefore, we just need to argue that the $i$th iteration does not reintroduce $1$'s into the blocks of $0$'s or affect the $(i-1)\times(i-1)$ identity matrix in the first block of the first row.
Observe that neither of these can happen as long as Hadamard gates and $\mathrm{CNOT}$ gates are not applied to the first $i-1$ qubits in the $i$th iteration. Indeed, our algorithm does not apply any gates to the first $i-1$ qubits, so the structure of the matrix in \cref{eq:invariant-matrix} is preserved. 
Therefore, once Step 1 terminates, the resulting matrix will be a $d \times d$ identity matrix in the first block and the remaining entries of the matrix will be $0$'s.

The layer of Hadamard gates in Step 2 maps the $d\times d$ identity matrix to the right block of the matrix, i.e., 
\[
\begin{pmatrix}[c|cc]
   0 & I & 0
\end{pmatrix}.
\]
Finally, the $\mathrm{CNOT}$ gates in Step 3 move the identity matrix to the rightmost side of the tableau, matching the goal shown in \cref{eq:matrix-target}. This can be verified by explicit calculation. We note that the $\mathrm{CNOT}$ gates move the identity matrix by starting with the rightmost column and then proceeding leftward, which is critical for correctness when $d > n/2$.  
Hence, $C$ performs the desired mapping. \qedhere

\end{proof}

\section{On Subspaces with Large \texorpdfstring{$p_\psi$}{\emph{p}} or \texorpdfstring{$q_\psi$}{\emph{q}}-mass}

We prove a series of lemmas about subspaces of $\F_2^{2n}$ that form the backbone of our property testing and tomography algorithms. Most of these lemmas presuppose that a given subspace $S$ is ``heavy,'' in the sense that either $p_\psi$ or $q_\psi$ places a large probability mass on $S$. We begin by showing that heavy subspaces must be isotropic. Then, we show that if a sufficiently heavy subspace $S$ is $(n - t)$-dimensional, $\ket{\psi}$ is close in fidelity to a state of the form $C\ket{\varphi}\ket{x}$, where $C$ is a Clifford circuit determined by $S$, $\ket{x}$ is an $(n-t)$-qubit basis state, and $\ket{\varphi}$ is a $t$-qubit state.
Furthermore, we explain how to use \cref{lem:clifford-mapping-algorithm} to efficiently construct the Clifford circuit $C$ given $S$. With these lemmas in hand, the remaining work in this paper will lie largely in showing how to find such heavy subspaces efficiently.

\subsection{Subspaces With Large Mass Are Isotropic}

First, we relate the isotropicity of a subspace to the $q_\psi$ mass on its symplectic complement.

\begin{lemma}
\label{lem:isotropic}
    Let $S$ be a subspace of $\F_2^{2n}$ such that
    \[
    \sum_{x \in S^\perp}q_{\psi}(x) > \frac{5}{8}.
    \]
    Then $S$ is isotropic.
\end{lemma}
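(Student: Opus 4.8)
The goal is to show that if $H^{\perp}$ carries more than $5/8$ of the $q_\psi$-mass, then $H$ is isotropic, i.e.\ $[x,y]=0$ for all $x,y\in H$. The natural route is to relate $H$ (and $H^{\perp}$) to the set $M=\{x:2^np_\psi(x)>1/2\}$ from \cref{fact:M_half_commute}, since we already know every pair of elements of $M$ commutes. The key observation is that $q_\psi$-mass exceeding $5/8$ on $H^{\perp}$ forces $H^{\perp}$ to contain $M$, and dually forces $H\subseteq M^{\perp}$; but $M$ spans an isotropic subspace (by \cref{fact:M_half_commute}, $\langle M\rangle$ is isotropic), so $M^{\perp}\supseteq\langle M\rangle$ is coisotropic... wait, that is not immediately enough. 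Let me instead argue directly that $H\subseteq\langle M\rangle$, which is isotropic, hence $H$ is isotropic.

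\textbf{Step 1: $q_\psi$-mass controls $p_\psi$-mass on subspaces.} By \cref{prop:q-lower-bounds-p}, $\sum_{x\in H^{\perp}}p_\psi(x)\geq\sum_{x\in H^{\perp}}q_\psi(x)>5/8$. So it suffices to work with $p_\psi$ from here on.

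\textbf{Step 2: $H^{\perp}$ must contain $M$.} Suppose some $z\in M\setminus H^{\perp}$. Then $H^{\perp}$ and $H^{\perp}+z$ are disjoint cosets, so $\sum_{x\in H^{\perp}}p_\psi(x)+\sum_{x\in H^{\perp}+z}p_\psi(x)\leq 1$. I would like a lower bound on $\sum_{x\in H^{\perp}+z}p_\psi(x)$ that contradicts this together with Step 1. The tool is the translation-duality for $p_\psi$: one can show (via \cref{thm:p_q_duality} applied to the subspace $\langle z\rangle$, or a short symplectic-Fourier argument) that the $p_\psi$-mass on a coset of $H^{\perp}$ is governed by $p_\psi(z)$ itself. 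Concretely, since $2^np_\psi(z)>1/2$, the coset $H^{\perp}+z$ should inherit mass close to that of $H^{\perp}$, forcing the total to exceed $1$ — contradiction. I expect this is where the arithmetic with the constant $5/8$ (equivalently $3/8$ in the statement of the calling lemma) is tuned: $5/8 + (\text{something} > 3/8) > 1$.

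\textbf{Step 3: conclude.} Once $M\subseteq H^{\perp}$, we take symplectic complements (\cref{fact:sympcomp}): $H=(H^{\perp})^{\perp}\subseteq M^{\perp}$. Now I claim $H$ is isotropic: for $x,y\in H\subseteq M^{\perp}$ we need $[x,y]=0$, which is \emph{not} automatic just from $x,y\in M^{\perp}$. The cleaner finish is to show $H\subseteq\langle M\rangle$ directly. Note $H^{\perp}\supseteq M$ gives $H^{\perp}\supseteq\langle M\rangle$, hence $H=(H^{\perp})^{\perp}\subseteq\langle M\rangle^{\perp}$; but $\langle M\rangle$ is isotropic by \cref{fact:M_half_commute}, so $\langle M\rangle\subseteq\langle M\rangle^{\perp}$, which still only gives $H\subseteq\langle M\rangle^{\perp}$, not $H$ isotropic. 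So the right statement to prove in Step 2 is the \emph{dual} one: $H\subseteq\langle M\rangle$. I would rerun the coset argument on $H$ rather than $H^{\perp}$: if $w\in H\setminus\langle M\rangle$, then $w\notin M$ so $2^np_\psi(w)\leq 1/2$, and... this direction needs the $p$–$q$ duality in the form $\sum_{a\in H}q_\psi(a)=|H|\sum_{x\in H^{\perp}}p_\psi(x)^2$ combined with $p_\psi(x)\leq 1/2^n$ on all of $H^{\perp}$ to upper-bound things, contradicting the hypothesis once $w$ is excluded. Then $H\subseteq\langle M\rangle$ isotropic, done.

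\textbf{Main obstacle.} The delicate point is passing from "$H^{\perp}$ has large $q_\psi$-mass" to "$H\subseteq\langle M\rangle$" with the correct constant; this requires using \cref{thm:p_q_duality} (the second identity, $\sum_{a\in H}q_\psi(a)=|H|\sum_{x\in H^{\perp}}p_\psi(x)^2$) together with the pointwise bound $p_\psi\le 2^{-n}$ and the fact that elements outside $M$ satisfy $2^np_\psi\le 1/2$, and then checking that $5/8$ is above the resulting threshold. Everything else is bookkeeping with symplectic complements and \cref{fact:M_half_commute}.
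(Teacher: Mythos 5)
You have assembled the right ingredients (the set $M$ of Weyl operators with $2^n p_\psi > 1/2$, \cref{fact:M_half_commute}, the duality of \cref{thm:p_q_duality}, and the pointwise bounds $p_\psi(x)^2 \le 4^{-n}$ in general versus $p_\psi(x)^2 \le \tfrac{1}{4}\cdot 4^{-n}$ off $M$), but the step that is supposed to tie them together is missing, and neither of the two routes you sketch closes the gap. Your Step 2 (showing $M \subseteq H^\perp$) is, as you yourself observe in Step 3, a dead end: it only yields $H \subseteq \langle M\rangle^\perp$, which does not imply isotropy. Your replacement --- excluding a single $w \in H \setminus \langle M\rangle$ and hoping to contradict the hypothesis --- cannot work quantitatively. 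Note first that you have the roles of $H$ and $H^\perp$ swapped in the duality: the hypothesis controls $\sum_{x\in H^\perp} q_\psi(x) = |H^\perp|\sum_{x\in H} p_\psi(x)^2$. With that identity in hand, a single element of $H$ lying outside $M$ lowers the trivial upper bound of $1$ by at most $\tfrac{3}{4}\,|H^\perp|/4^n = \tfrac{3}{4|H|}$, which is nowhere near enough to push the total below $5/8$ for any $H$ of nontrivial dimension.

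The missing idea is a counting argument on $M \cap H$. Set $M' = \{x \in H : 2^n p_\psi(x) > 1/2\}$. If $\langle M'\rangle \ne H$, then $M'$ is contained in a proper subspace of $H$, so $|M'| \le |H|/2$; splitting $\sum_{x\in H} p_\psi(x)^2$ accordingly and using $|H|\cdot|H^\perp| = 4^n$ (\cref{fact:sympcomp}) gives
\[
\sum_{x\in H^\perp} q_\psi(x) \;=\; |H^\perp|\sum_{x\in H} p_\psi(x)^2 \;\le\; |H^\perp|\left(\frac{|H|}{2}\cdot\frac{1}{4^n} + \frac{|H|}{2}\cdot\frac{1}{4\cdot 4^n}\right) \;=\; \frac{1}{2}+\frac{1}{8} \;=\; \frac{5}{8},
\]
contradicting the hypothesis. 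Hence $|M'| > |H|/2$, which forces $\langle M'\rangle = H$ since no proper subspace of $H$ can contain more than half of its elements, and $H$ is then isotropic by \cref{fact:M_half_commute} and bilinearity of the symplectic product. This is exactly where the constant $5/8$ comes from, and it is the step your ``Main obstacle'' paragraph gestures at without actually supplying. (Your Step 1 detour through \cref{prop:q-lower-bounds-p} is also unnecessary once the duality is applied in the correct direction.)
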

\begin{proof}
    Let $A \coloneqq \{x \in S: 2^n p_\psi(x) > 1/2\}$.
    By \cref{fact:M_half_commute}, every pair of elements in $A$ has symplectic product zero. 
    Furthermore, every pair of elements in $\langle A \rangle$ also has symplectic product zero by linearity of the symplectic product, showing that $\langle A \rangle$ is isotropic.
    Thus, if we can show that $\langle A \rangle = S$  then $S$ must also be isotropic.

   Suppose for a contradiction that $\abs{A} \le \frac{\abs{S}}{2}$. Then:
    \begin{align*}
    \sum_{x \in S^\perp}q_{\psi}(x) &= |S^\perp|\sum_{x \in S} p_\psi(x)^2 && (\text{\cref{thm:p_q_duality}}) \\
    &= |S^\perp|\left( \sum_{x\in A }p_\psi(x)^2 + \sum_{x \in S \setminus A} p_\psi(x)^2 \right)\\
    & \leq |S^\perp|\left(\frac{\abs{S}}{2}\cdot \frac{1}{4^n} + \frac{|S|}{2}\cdot \frac{1}{4\cdot 4^n}\right) \\
    &= \frac{|S|\cdot|S^\perp|}{4^n}\cdot \left(\frac{1}{2} + \frac{1}{8}\right)  \\
    &= \frac{5}{8},
    \end{align*}
    which contradicts the assumption of the lemma. So, $\abs{A} > \frac{\abs{S}}{2}$.
    Since a proper subspace of $S$ can have at most $\frac{\abs{S}}{2}$ elements, it follows that $\langle A \rangle = S$ and $S$ is isotropic. \qedhere
\end{proof}

Next, we prove an analogous statement with $p_\psi$ in place of $q_\psi$. For the next lemma, note that by \cref{thm:p_q_duality}, the assumption on $S$ is equivalent to
\[
\sum_{x \in S^\sympcomp}p_\psi(x) > \frac{3}{4},
\]
which is more directly comparable to \cref{lem:isotropic}. The proof is also similar.

\begin{lemma}
\label{lem:p-mass-isotropic}
    Let $S$ be a subspace of $\F_2^{2n}$ such that
    \[
    \sum_{x \in S}p_{\psi}(x) > \frac{3}{4}\frac{\abs{S}}{2^n}.
    \]
    Then $S$ is isotropic.
\end{lemma}
\begin{proof}
    Let $A \coloneqq \{x \in S: 2^n p_\psi(x) > 1/2\}$.
    By \cref{fact:M_half_commute}, every pair of elements in $A$ has symplectic product zero. 
    Furthermore, every pair of elements in $\langle A \rangle$ also has symplectic product zero by linearity of the symplectic product, showing that $\langle A \rangle$ is isotropic.
    Thus, if we can show that $\langle A \rangle = S$  then $S$ must also be isotropic.

    Suppose for a contradiction that $\abs{A} \le \frac{\abs{S}}{2}$. Then:
    \begin{align*}
    \sum_{x \in S}p_{\psi}(x) &=  \sum_{x\in A}p_\psi(x) + \sum_{x \in S \setminus A} p_\psi(x) \\
    & \leq \frac{\abs{S}}{2}\cdot \frac{1}{2^n} + \frac{\abs{S}}{2}\cdot \frac{1}{2\cdot 2^n} \\
    &= \frac{\abs{S}}{2^n}\cdot \left(\frac{1}{2} + \frac{1}{1}\right)  \\
    &= \frac{3}{4}\frac{\abs{S}}{2^n},
    \end{align*}
    which contradicts the assumption of the lemma. So, $\abs{A} > \frac{\abs{S}}{2}$.
    Since a proper subspace of $S$ can have at most $\frac{\abs{S}}{2}$ elements, it follows that $\langle A \rangle = S$ and therefore $S$ is isotropic.
\end{proof}

\subsection{Product State Structure}\label{subsec:prod-state-structure}
We now establish the critical relation between $p_\psi$-mass on isotropic subspaces and stabilizer dimension. As noted earlier, we show that if an $(n-t)$-dimensional isotropic subspace has large  $p_\psi$-mass, then there is a Clifford circuit $C$ (that can be constructed efficiently) that approximately maps $\ket{\psi}$ to a product state $\ket{\varphi}\ket{x}$, where $\ket{x}$ is an $(n-t)$-qubit basis state and $\ket{\varphi}$ is an arbitrary $t$-qubit state. 

Special cases of this relation have appeared implicitly before. In particular, if the $p_\psi$ mass on the subspace is \emph{maximal} (i.e., exactly $\frac{1}{2^t}$), then this can be viewed as a restatement of the fact that stabilizer codes can be encoded using Clifford circuits \cite{gottesman2006quantum}. This special case also essentially follows from \cite[Theorem 2]{leone-stabilizer-nullity}. The main contribution of this subsection is to show that this mapping is robust to error.

To begin, we prove a relation between sums over basis state projections and sums of Pauli-$Z$ strings.

% $\ket\psi$ has high $p_\psi$ mass on an isotropic subspace $T$ are close to a state $\ket{\psi^\prime}$ for which $T \subseteq \weyl(\ket{\psi^\prime})$.

%TODO: describe

\begin{proposition}\label{prop:sum-over-stabilizer-basis}
    \[\sum_{x \in \F_2^{k} }\ketbra{x}{x} \otimes \ketbra{x}{x} = \frac{1}{2^k} \sum_{x \in \F_2^k} Z^x \otimes Z^x,\]
    where $Z^x \coloneqq Z^{x_1} \otimes \dots \otimes Z^{x_{k}}$.
\end{proposition}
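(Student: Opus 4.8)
The plan is to verify the identity by evaluating both operators on the computational basis $\{\ket{a}\ket{b} : a,b \in \F_2^k\}$ of $\C^{2^k} \otimes \C^{2^k}$; since both sides are linear, agreement on a basis implies equality. For the left-hand side, $\ketbra{x}{x} \otimes \ketbra{x}{x}$ sends $\ket{a}\ket{b}$ to $[a=x]\,[b=x]\,\ket{a}\ket{b}$, so summing over $x \in \F_2^k$ gives $[a=b]\,\ket{a}\ket{b}$. In other words, the left-hand side is the orthogonal projector onto $\spn\{\ket{a}\ket{a} : a \in \F_2^k\}$.

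For the right-hand side, recall that $Z\ket{0} = \ket{0}$ and $Z\ket{1} = -\ket{1}$, so $Z^x\ket{a} = (-1)^{x\cdot a}\ket{a}$, where $x\cdot a = \sum_{i=1}^k x_i a_i$. Hence $Z^x \otimes Z^x$ acts on $\ket{a}\ket{b}$ as multiplication by $(-1)^{x\cdot a + x\cdot b} = (-1)^{x\cdot(a+b)}$, the exponent read modulo $2$ and $a+b$ denoting the sum in $\F_2^k$. Averaging over $x$ and invoking the standard character-sum identity $\sum_{x \in \F_2^k}(-1)^{x\cdot y} = 2^k\,[y=0]$, we obtain $\tfrac{1}{2^k}\sum_{x}(-1)^{x\cdot(a+b)} = [a+b=0] = [a=b]$. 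So the right-hand side acts on each $\ket{a}\ket{b}$ exactly by multiplication by $[a=b]$, matching the left-hand side on every basis vector, and the two operators are equal.

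There is no genuine obstacle here: the only non-bookkeeping ingredient is orthogonality of the characters of $\F_2^k$, namely $\sum_{x}(-1)^{x\cdot y} = 2^k\,[y=0]$, which is elementary. (One could phrase this as saying that $\{Z^x \otimes Z^x\}_{x}$ and $\{2^k\,\ketbra{a}{a}\otimes\ketbra{a}{a}\}_{a}$ are two bases of the algebra of simultaneously $Z$-diagonal operators related by the $\F_2^k$ Fourier transform, but the direct basis computation above is the cleanest route.)
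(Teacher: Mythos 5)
Your proof is correct, and it rests on the same single nontrivial ingredient as the paper's: the character orthogonality $\sum_{x\in\F_2^k}(-1)^{x\cdot y}=2^k\,[y=0]$. The only cosmetic difference is direction --- the paper expands each projector $\ketbra{x}{x}$ in Pauli-$Z$ strings and collects terms, whereas you evaluate both sides on the computational basis and observe they act identically --- so this is essentially the same argument.
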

\begin{proof}
We use the fact that for any $x \in \F_2^k$, $\ketbra{x}{x} = 2^{-k} \sum_{a \in \F_2^k} Z^a (-1)^{a \cdot x}$.\footnote{This is essentially the Weyl expansion of $\ketbra{x}{x}$. Since the expression only involves $I$'s and Pauli $Z$'s, we can write the expansion in this simplified form.}
    \begin{align*}
        \sum_{x \in \F_2^k}\ketbra{x}{x} \otimes \ketbra{x}{x}
        &= \sum_{x \in \F_2^k} \frac{1}{4^k}\left(\sum_{a \in \F_2^k} Z^a (-1)^{a \cdot x}\right) \otimes \left(\sum_{b \in \F_2^k} Z^b (-1)^{b \cdot x}\right)\\
        &= \frac{1}{4^k}\sum_{a, b \in \F_2^k} Z^a \otimes Z^b \sum_{x \in \F_2^k} (-1)^{(a + b) \cdot x}\\
        &= \frac{1}{2^k} \sum_{a \in \F_2^k} Z^a \otimes Z^a.\qedhere 
    \end{align*}
\end{proof}

Next, we prove that the $p_\psi$-mass on an isotropic subspace has a nice operational interpretation.

\begin{proposition}\label{prop:collision}
    Let $\ket{\psi}$ be an $n$-qubit state, and let $S = 0^{n+t} \times \F_2^{n-t}$. 
    Upon measuring the last $n-t$ qubits in the computational basis on $2$ copies of $\ket{\psi}$, the probability of observing the same string $x \in \F_2^{n-t}$ twice (i.e., the collision probability) is 
    \[ 2^t \sum_{x \in S} p_\psi(x). \]
\end{proposition}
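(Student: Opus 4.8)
The plan is to compute the collision probability directly and then recognize it as a sum of squared Weyl coefficients via \cref{prop:sum-over-stabilizer-basis}. Let $p(x)$ denote the probability that measuring the last $n-t$ qubits of a single copy of $\ket{\psi}$ in the computational basis yields the string $x \in \F_2^{n-t}$, so that $p(x) = \braket{\psi | (I^{\otimes t} \otimes \ketbra{x}{x}) | \psi}$. Since the two copies are disjoint systems measured independently, the probability of observing the same outcome twice is $\sum_{x \in \F_2^{n-t}} p(x)^2$.

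Next I would lift this to an expectation on $\ket{\psi} \otimes \ket{\psi}$: squaring and tensoring gives $p(x)^2 = \braket{\psi \otimes \psi | (I^{\otimes t} \otimes \ketbra{x}{x}) \otimes (I^{\otimes t} \otimes \ketbra{x}{x}) | \psi \otimes \psi}$, and after permuting the tensor factors so that the two unmeasured $t$-qubit registers are grouped together and the two measured $(n-t)$-qubit registers are grouped together, we obtain $\sum_{x} p(x)^2 = \braket{\psi \otimes \psi | I^{\otimes 2t} \otimes \big(\sum_{x \in \F_2^{n-t}} \ketbra{x}{x} \otimes \ketbra{x}{x}\big) | \psi \otimes \psi}$. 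Applying \cref{prop:sum-over-stabilizer-basis} with $k = n-t$ replaces the parenthesized sum by $\tfrac{1}{2^{n-t}} \sum_{x} Z^x \otimes Z^x$, which collapses the whole expression to $\tfrac{1}{2^{n-t}} \sum_{x \in \F_2^{n-t}} \braket{\psi | I^{\otimes t} \otimes Z^x | \psi}^2$.

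Finally I would identify $I^{\otimes t} \otimes Z^x$ with a Weyl operator. Setting $y = (0^n, 0^t x) \in \F_2^{2n}$, the definition of the Weyl operator gives $W_y = I^{\otimes t} \otimes Z^x$ (the phase prefactor is trivial since the $X$-part of $y$ is zero), and as $x$ ranges over $\F_2^{n-t}$ the vector $y$ ranges over exactly $T = 0^{n+t} \times \F_2^{n-t}$. Hence $\braket{\psi | I^{\otimes t} \otimes Z^x | \psi}^2 = 2^n p_\psi(y)$, and summing yields $\tfrac{1}{2^{n-t}} \cdot 2^n \sum_{y \in T} p_\psi(y) = 2^t \sum_{y \in T} p_\psi(y)$, which is the claimed identity.

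I do not expect any genuine obstacle here; the only points requiring care are the tensor-factor bookkeeping when regrouping the four registers in the second step, and confirming that $W_y$ is literally $I^{\otimes t} \otimes Z^x$ with no extra sign, so that the squared matrix element equals $2^n p_\psi(y)$ on the nose.
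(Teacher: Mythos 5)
Your proof is correct and follows essentially the same route as the paper: express the collision probability as $\sum_x \braket{\psi|(I^{\otimes t}\otimes\ketbra{x}{x})|\psi}^2$, lift it to $\ket{\psi}^{\otimes 2}$, apply \cref{prop:sum-over-stabilizer-basis}, and identify $I^{\otimes t}\otimes Z^x$ with the Weyl operators indexed by $T$. The only cosmetic difference is that you explicitly permute the tensor factors before invoking the proposition, whereas the paper applies it in place by treating the $I^{\otimes t}$ factors as constants.
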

\begin{proof}
The probability of observing some $x \in \F_2^{n-t}$ twice is 
\begin{align*}
\sum_{x \in \F_2^{n-t}} \bra{\psi} \left( I^{\otimes t} \otimes \ketbra{x}{x}\right)  \ket{\psi}^2
&= \bra{\psi}^{\otimes{2}} \left(\sum_{x \in \F_2^{n-t}} I^{\otimes t} \otimes \ketbra{x}{x} \otimes I^{\otimes t} \otimes \ketbra{x}{x} \right) \ket{\psi}^{\otimes 2}\\
&= \bra{\psi}^{\otimes{2}}  \left( \frac{1}{2^{n-t}}\sum_{x \in \F_2^{n-t}} I^{\otimes t} \otimes Z^x \otimes I^{\otimes t} \otimes Z^x \right) \ket{\psi}^{\otimes 2}\\
&= \frac{1}{2^{n-t}} \sum_{x \in \F_2^{n-t}} \braket{\psi|I^{\otimes t} \otimes Z^x |\psi}^2\\
&= 2^t \sum_{x \in S} p_\psi(x). 
\end{align*}
The third step follows from \cref{prop:sum-over-stabilizer-basis} by treating the $I^{\otimes t}$ as constants.
\end{proof}

Now we prove the main lemma of this subsection for the special case where the isotropic subspace is a subset of $\{I,Z\}^{\otimes n}$. 
\begin{lemma}
\label{lem:product-state-approximation}
    Let $S = 0^{n+t} \times \F_2^{n-t}$, and suppose that 
    \[\sum_{x \in S} p_\psi(x) \geq \frac{1 - \eps}{2^t}.\] 
    Then there exists an $(n-t)$-qubit computational basis state $\ket{x}$ and a $t$-qubit quantum state
    \[
     \ket{\varphi} \coloneqq \frac{(I \otimes \bra{x}) \ket{\psi}}{\norm{(I \otimes \bra{x}) \ket{\psi}}_2},\footnote{This is to say that $\ket{\varphi}$ is obtained by postselecting on measuring the last $n - t$ qubits of $\ket{\psi}$ to be $\ket{x}$.}
    \]
    such that
    the fidelity between $\ket{\varphi}\ket{x}$ and $\ket{\psi}$ is at least $1 - \eps$.  
    
\end{lemma}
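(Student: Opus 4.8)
The plan is to use \cref{prop:collision} to interpret the hypothesis as a lower bound on a collision probability, and then conclude that the post-measurement state is nearly pure, uniformly over the likely measurement outcomes. Concretely, measuring the last $n-t$ qubits of $\ket{\psi}$ in the computational basis induces a distribution $\{q_x\}_{x \in \F_2^{n-t}}$ on outcomes, where $q_x = \norm{(I \otimes \bra{x})\ket{\psi}}_2^2$, together with post-measured states $\ket{\varphi_x} \coloneqq (I \otimes \bra{x})\ket{\psi}/\norm{(I \otimes \bra{x})\ket{\psi}}_2$ on the first $t$ qubits. By \cref{prop:collision}, the collision probability of this distribution is $\sum_x q_x^2 = 2^t \sum_{x \in T} p_\psi(x) \geq 1 - \eps$.

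Next I would observe that $\ket{\psi}$ decomposes (before normalization) as $\ket{\psi} = \sum_x \sqrt{q_x}\,\ket{\varphi_x}\ket{x}$, so that for any fixed target outcome $x^\star$,
\[
F(\ket{\varphi_{x^\star}}\ket{x^\star}, \ket{\psi}) = \abs{\braket{\varphi_{x^\star} | \bra{x^\star}\, \psi}}^2 = \Paren{\sqrt{q_{x^\star}} \braket{\varphi_{x^\star}|\varphi_{x^\star}}}^2 = q_{x^\star}^2.
\]
So it suffices to exhibit an $x^\star$ with $q_{x^\star}^2 \geq 1 - \eps$. This now follows from the collision-probability bound: since $\sum_x q_x = 1$ and $\sum_x q_x^2 \geq 1 - \eps$, letting $x^\star$ be the outcome maximizing $q_{x^\star}$ we get $q_{x^\star} \geq \sum_x q_x^2 \big/ \sum_x q_x = \sum_x q_x^2 \geq 1-\eps$, hence $q_{x^\star}^2 \geq (1-\eps)^2 \geq 1 - 2\eps$. (If one wants the cleaner bound $1 - \eps$ exactly as stated, note that a more careful argument gives it: from $\sum_x q_x^2 \geq 1 - \eps$ and $q_{x^\star} = \max_x q_x$ we have $q_{x^\star} \geq \sum_x q_x^2 \geq 1 - \eps$, and then $F = q_{x^\star}^2$; if only $1-\eps$ fidelity is claimed, simply take $\ket{x} = \ket{x^\star}$ and $\ket{\varphi} = \ket{\varphi_{x^\star}}$, and observe $q_{x^\star} \geq 1 - \eps$ already forces $q_{x^\star}^2 \geq 1 - \eps$ is \emph{not} automatic, so one argues directly that $F(\ket{\varphi_{x^\star}}\ket{x^\star},\ket{\psi}) = q_{x^\star}^2$ and separately that the hypothesis is strong enough — see below.)

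\textbf{Main obstacle.} The only subtlety is squeezing out the exact constant: the collision bound gives $\max_x q_x \geq 1 - \eps$, but the fidelity equals $q_{x^\star}^2$, which is only $\geq 1 - 2\eps$ in general. I expect the intended resolution is that the claimed bound in the lemma should be read together with the next result in the chain (where a product-state approximation with a Clifford circuit is assembled), and a factor-of-2 loss in $\eps$ is absorbed; alternatively, one proves a slightly sharper statement: since $\max_x q_x \geq 1-\eps$ and $\sum_x q_x = 1$, actually $\sum_x q_x^2 \leq \max_x q_x$, so combined with $\sum_x q_x^2 \geq 1 - \eps$ we get $\max_x q_x \geq 1 - \eps$ but \emph{also} that the "second moment" is large only if one outcome dominates, and a short convexity argument pins $q_{x^\star}^2 \ge 1-\eps$ directly. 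Either way, the proof is essentially: apply \cref{prop:collision}, rewrite fidelity as a collision probability for the best outcome, and finish with an elementary inequality relating $\ell_1$ and $\ell_2$ norms of the outcome distribution. The bulk of the work is bookkeeping with the (un)normalized post-measured states.
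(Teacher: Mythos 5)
Your overall strategy is identical to the paper's: decompose $\ket{\psi} = \sum_x \sqrt{q_x}\,\ket{\varphi_x}\ket{x}$, use \cref{prop:collision} to identify $2^t\sum_{x\in T}p_\psi(x)$ with the collision probability $\sum_x q_x^2$, and pick the most likely outcome. However, your fidelity computation contains an arithmetic slip that then manufactures the ``main obstacle'' you spend the rest of the proposal worrying about. The overlap is
\[
\braket{\varphi_{x^\star}|\,(I\otimes\bra{x^\star})\,|\psi} \;=\; \sqrt{q_{x^\star}}\,\braket{\varphi_{x^\star}|\varphi_{x^\star}} \;=\; \sqrt{q_{x^\star}},
\]
so the fidelity is $\bigl|\sqrt{q_{x^\star}}\bigr|^2 = q_{x^\star}$, \emph{not} $q_{x^\star}^2$ (you squared $\sqrt{q_{x^\star}}$ and somehow obtained $q_{x^\star}^2$). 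With the correct value $F = q_{x^\star}$, the elementary inequality you already have, namely
\[
\max_x q_x \;=\; \max_x q_x \cdot \sum_x q_x \;\geq\; \sum_x q_x^2 \;\geq\; 1-\eps,
\]
gives the stated bound $F \geq 1-\eps$ exactly, with no factor-of-two loss and no need to appeal to downstream lemmas to absorb slack. This is precisely the paper's argument. So: correct approach, but fix the exponent and delete the ``obstacle'' paragraph, which as written is internally inconsistent (at one point you assert $q_{x^\star}\geq 1-\eps$ forces $q_{x^\star}^2\geq 1-\eps$, then retract it).
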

\begin{proof}
    We can always write $\ket{\psi} = \sum_{x \in \F_2^{n-t}} \alpha_x \ket{\varphi_x}\ket{x}$ where $\sum_{x \in \F_2^{n-t}} \abs{\alpha_x}^2 = 1$.
    If we can show that \[\max_{x \in \F_2^{n-t}} \abs{\bra{\psi} (\ket{ \varphi_x}\ket{x})}^2 \geq 1- \eps,\] then we are done, by taking $\ket{\varphi} = \ket{\varphi_x}$. First, 
    \begin{align*}
        \max_{x \in \F_2^{n-t}} \abs{\bra{\psi} (\ket{ \varphi_x}\ket{x})}^2
        &= \max_{x \in \F_2^{n-t}} \abs{\alpha_x}^2 \\
        &= \max_{x \in \F_2^{n-t}} \abs{\alpha_x}^2 \cdot \sum_{x \in \F_2^{n-t}} \abs{\alpha_x}^2\\
        &\geq \sum_{x \in \F_2^{n-t}} \abs{\alpha_x}^4. 
    \end{align*}
    
    Observe that $\sum_x \abs{\alpha_x}^4$ is precisely the collision probability when measuring the last $n-t$ qubits of $\ket{\psi}$ in the computational basis. Hence, by \cref{prop:collision}, 
    \begin{align*}
        \sum_{x \in \F_2^{n-t}} \abs{\alpha_x}^4
        &= 2^t \sum_{x \in S} p_\psi(x) \geq 1 - \eps.\qedhere   
    \end{align*}
    \end{proof}

    Finally, we generalize the previous lemma using the Clifford mapping algorithm from \Cref{lem:clifford-mapping-algorithm}. 
    \begin{lemma}
        \label{lem:product-state-approximation-general}
        Let $S$ be an isotropic subspace of dimension $n-t$, and suppose that 
    \[\sum_{x \in S} p_\psi(x) \geq \frac{1 - \eps}{2^t}.\] 
    Then there exists a state $\ket{\hat{\psi}}$ with $S \subseteq \weyl(\ket{\hat{\psi}})$ such that the fidelity between $\ket{\hat{\psi}}$ and $\ket{\psi}$ is at least $1 - \eps$. 
    
    In particular, $\ket{\hat{\psi}} = C^\dagger \ket{\varphi}\ket{x}$, where $\ket{x}$ is an $(n-t)$-qubit basis state, \[
     \ket{\varphi} \coloneqq \frac{(I \otimes \bra{x}) C\ket{\psi}}{\norm{(I \otimes \bra{x}) C\ket{\psi}}_2}
    \] is a $t$-qubit quantum state, and $C$ is the Clifford circuit mapping $S$ to $0^{n+t} \times \F_2^{n-t}$ described in \cref{lem:clifford-mapping-algorithm}.
    \end{lemma}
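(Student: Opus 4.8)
The plan is to reduce the general case to the already-established special case \cref{lem:product-state-approximation} by a Clifford change of basis, exactly mirroring the way \cref{lem:isotropic} and \cref{lem:clifford-mapping-algorithm} were set up to fit together. Concretely: straighten $T$ into a $\{I,Z\}$-type subspace with a Clifford circuit, push $\ket\psi$ through that circuit, invoke \cref{lem:product-state-approximation} there, and then pull the resulting product-state approximation back.

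First I would use the hypothesis that $T$ is isotropic of dimension $d = n-t$ to apply \cref{lem:clifford-mapping-algorithm} to a generating set of $T$, obtaining efficiently a Clifford circuit $C$ with $C(T) = 0^{2n-d}\times\F_2^{d} = 0^{n+t}\times\F_2^{n-t}$. Set $\ket\phi \coloneqq C\ket\psi$. By \cref{fact:clifford-permutes-p-mass}, $p_\phi(C(x)) = p_\psi(x)$ for every $x\in\F_2^{2n}$, and since $C$ acts bijectively on $\F_2^{2n}$ this gives $\sum_{y\in C(T)} p_\phi(y) = \sum_{x\in T} p_\psi(x) \geq (1-\eps)/2^t$. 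Now apply \cref{lem:product-state-approximation} to $\ket\phi$ with the subspace $C(T) = 0^{n+t}\times\F_2^{n-t}$; this produces an $(n-t)$-qubit basis state $\ket x$ and the $t$-qubit state $\ket\varphi = (I\otimes\bra x)\ket\phi/\norm{(I\otimes\bra x)\ket\phi}_2 = (I\otimes\bra x)C\ket\psi/\norm{(I\otimes\bra x)C\ket\psi}_2$ with $F(\ket\varphi\ket x,\ket\phi)\geq 1-\eps$. Define $\ket{\hat\psi}\coloneqq C^\dagger\ket\varphi\ket x$. Since fidelity is invariant under the unitary $C^\dagger$, we get $F(\ket{\hat\psi},\ket\psi) = F(C^\dagger\ket\varphi\ket x, C^\dagger\ket\phi) = F(\ket\varphi\ket x,\ket\phi)\geq 1-\eps$, which is the first claim, and this $\ket{\hat\psi}$ has exactly the stated form.

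It then remains to verify $T\subseteq\weyl(\ket{\hat\psi})$. For $z\in C(T) = 0^{n+t}\times\F_2^{n-t}$, the $X$-part of $z$ vanishes, so the Weyl operator $W_z$ is of the form $I^{\otimes t}\otimes Z^{y'}$ with no leading phase, and hence $W_z(\ket\varphi\ket x) = \ket\varphi\otimes Z^{y'}\ket x = \pm\ket\varphi\ket x$; thus $C(T)\subseteq\weyl(\ket\varphi\ket x)$. Conjugating by $C^\dagger$ and using $W_{C^\dagger(z)} = \pm\, C^\dagger W_z C$ (the definition of the Clifford action on $\F_2^{2n}$), one gets $W_{C^\dagger(z)}\ket{\hat\psi} = \pm\, C^\dagger W_z C C^\dagger\ket\varphi\ket x = \pm\ket{\hat\psi}$ for every such $z$, i.e.\ $C^\dagger(C(T))\subseteq\weyl(\ket{\hat\psi})$. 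Finally $C^\dagger(C(T)) = T$ because the action of $C^\dagger$ on $\F_2^{2n}$ inverts that of $C$, completing the argument.

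I do not expect a serious obstacle: all the real work — the collision-probability/uncertainty-relation argument and the explicit, efficient Clifford construction — has already been done in \cref{lem:product-state-approximation} and \cref{lem:clifford-mapping-algorithm}, so this lemma is essentially a clean assembly of those pieces. The one place that needs a bit of care is the stabilizer-containment step, where one must remember that stabilization is only ever claimed \emph{up to sign}; this is precisely why the phases picked up when conjugating Weyl operators through $C$, or when applying $Z^{y'}$ to $\ket x$, are harmless, since $\weyl(\cdot)$ is by definition the unsigned stabilizer group.
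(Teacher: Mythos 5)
Your proposal is correct and follows essentially the same route as the paper's proof: straighten $T$ via \cref{lem:clifford-mapping-algorithm}, transfer the $p_\psi$-mass with \cref{fact:clifford-permutes-p-mass}, invoke \cref{lem:product-state-approximation}, and pull back by unitary invariance of fidelity. Your explicit verification of $C(T)\subseteq\weyl(\ket{\varphi}\ket{x})$ and the conjugation back to $T\subseteq\weyl(\ket{\hat\psi})$ is simply a more detailed version of the step the paper dispatches with ``clearly.''
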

    
    \begin{proof}
    Define $\ket{\phi} \coloneqq C\ket{\psi}$.
   Then, by \cref{fact:clifford-permutes-p-mass}, 
   \[
        \sum_{x \in S}p_{\psi}(x) = \sum_{x \in C(S)}p_{\phi}(x) \geq \frac{1-\eps}{2^t}.
   \]
   Therefore, by \cref{lem:product-state-approximation}, $C\ket\psi$ is $(1-\eps)$-close in fidelity to a state $\ket{\varphi}\ket{x}$, where $\ket{x}$ is an $(n-t)$-qubit basis state and
   \[
     \ket{\varphi} \coloneqq \frac{(I \otimes \bra{x}) C\ket{\psi}}{\norm{(I \otimes \bra{x}) C\ket{\psi}}_2}
    \]
   is a $t$-qubit quantum pure state.
   Since fidelity is unitarily invariant, the fidelity between $\ket{\psi}$ and $\ket{\hat{\psi}} = C^\dagger\ket{\varphi}\ket{x}$ is also at least $1-\eps$. 
   Clearly, $C(S) \subseteq \weyl(\ket{\varphi} \ket{x})$, and therefore, by \cref{fact:clifford-permutes-p-mass}, $S \subseteq \weyl(\ket{\hat \psi})$.
    \end{proof}

\section{Property Testing Stabilizer Dimension}\label{sec:property-testing}

As a first application, we present an efficient algorithm for property testing stabilizer dimension. 
Recall that a property tester for a class $\calQ$ of quantum states takes copies of a state $\ket\psi$ as input and determines whether $\ket\psi \in \calQ$ or $\ket\psi$ is $\eps$-far from all such states (according to some measure of distance), promised that one of these is the case. 
Our algorithm efficiently tests whether an input state has stabilizer dimension at least $k$ or has fidelity less than $1-\eps$ with all such states. 

\begin{algorithm}[H]
\caption{Property Testing Stabilizer Dimension}\label{alg:property-testing}
\SetKwInOut{Promise}{Promise}
\KwInput{$\frac{16n+8\log(1/\delta)}{\eps}$ copies of $\ket\psi$, $k \in [n]$, $\eps \in (0,3/8)$, and $\delta \in (0,1]$}
\Promise{$\ket\psi$ has stabilizer dimension at least $k$ or is $\eps$-far in fidelity from all such states} 
\KwOutput{$1$ if $\ket\psi$ has stabilizer dimension at least $k$, $0$ otherwise, with probability at least $1 - \delta$}

Perform Bell difference sampling to draw $\frac{4n+2\log(1/\delta)}{\eps}$ samples from $q_\psi$. Denote the span of the samples by $S^\sympcomp$. 

Let $\hat{k} = \dim S = 2n - \dim S^\sympcomp$.

\Return{$1$ if $\hat{k} \geq k$ and $0$ otherwise.}
    
\end{algorithm}

\begin{theorem}
\label{thm:property-testing-alg}
Let $\ket{\psi}$ be an $n$-qubit quantum state. 
For $\eps \in (0, 3/8)$, \Cref{alg:property-testing} determines whether $\ket{\psi}$ has stabilizer dimension at least $k$ or has fidelity at most $1-\eps$ with all such states, promised that one of these is the case.
\sloppy
The algorithm uses $\frac{16n + 8\log(1/\delta)}{\eps}$ copies of $\ket{\psi}$ and $O\left(\frac{n^3 + n^2\log(1/\delta)}{\eps}\right)$ time, and succeeds with probability at least $1-\delta$. 
\end{theorem}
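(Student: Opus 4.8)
The plan is to verify completeness and soundness of \cref{alg:property-testing} separately, and then to bound its resource usage. \emph{Completeness:} if $\ket\psi$ has stabilizer dimension at least $k$, then every Bell difference sample lies in $\supp(q_\psi) \subseteq \weyl(\ket\psi)^\sympcomp$ by \cref{fact:p-support-sympcomp}, so the span $H^\sympcomp$ of the samples is contained in $\weyl(\ket\psi)^\sympcomp$; taking symplectic complements via \cref{fact:sympcomp} gives $\weyl(\ket\psi) \subseteq H$, hence $\hat k = \dim H \geq \dim \weyl(\ket\psi) \geq k$, and the algorithm outputs $1$ with certainty.

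For soundness, suppose $\ket\psi$ has fidelity at most $1-\eps$ with every state of stabilizer dimension at least $k$. The algorithm draws $m = \frac{2\log(1/\delta)+8n}{\eps}$ samples, so by \cref{lem:sampling} we have $\sum_{x \in H^\sympcomp} q_\psi(x) \geq 1-\eps$ with probability at least $1-\delta$; I will show that on this event $\hat k < k$, which yields the soundness guarantee. Since $\eps < 3/8$ this mass exceeds $5/8$, so \cref{lem:isotropic} shows that $H$ is isotropic, and in particular $\dim H \leq n$; put $t \coloneqq n - \hat k \geq 0$. Suppose toward a contradiction that $\hat k \geq k$. Applying the first identity of \cref{thm:p_q_duality} to $H^\sympcomp$, whose symplectic complement is $H$, together with \cref{prop:q-lower-bounds-p}, yields
\[
\sum_{x \in H} p_\psi(x) = \frac{2^n}{\abs{H^\sympcomp}} \sum_{a \in H^\sympcomp} p_\psi(a) \geq \frac{2^n}{\abs{H^\sympcomp}} \sum_{a \in H^\sympcomp} q_\psi(a) \geq \frac{2^n}{2^{\,2n-\hat k}}(1-\eps) = \frac{1-\eps}{2^t}.
\]
Thus $H$ is an isotropic subspace of dimension $n-t$ whose $p_\psi$-mass is at least $(1-\eps)/2^t$, so \cref{lem:product-state-approximation-general} produces a state $\ket{\hat\psi}$ with $H \subseteq \weyl(\ket{\hat\psi})$ and $F(\ket{\hat\psi},\ket\psi) \geq 1-\eps$. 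But then $\ket{\hat\psi}$ has stabilizer dimension at least $\dim H = \hat k \geq k$, contradicting the promise. Hence $\hat k < k$ on this event, and the algorithm returns $0$ correctly with probability at least $1-\delta$.

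For the resource bounds, Bell difference sampling consumes $4$ copies and $O(n)$ time per sample, for $4m = \frac{32n+8\log(1/\delta)}{\eps}$ copies in total, and computing $\hat k$ is a Gaussian elimination on the $m \times 2n$ matrix of samples over $\F_2$, which costs $O(m n^2) = O\!\left(\frac{n^3 + n^2\log(1/\delta)}{\eps}\right)$ time (using $m = \Omega(n)$) and dominates. The only nonroutine part is the soundness direction: the crucial move is to transfer the $q_\psi$-mass on the \emph{coisotropic} subspace $H^\sympcomp$ to $p_\psi$-mass on its \emph{isotropic} symplectic complement $H$ using \cref{thm:p_q_duality}, since it is precisely an isotropic subspace carrying $(1-\eps)/2^t$ of the $p_\psi$-mass that \cref{lem:product-state-approximation-general} converts into a nearby state of stabilizer dimension $n-t$; everything else is dimension bookkeeping on top of the already-established lemmas. (The slight boundary clash between ``fidelity at most $1-\eps$'' in the promise and the derived ``fidelity at least $1-\eps$'' is immaterial and can be absorbed into an arbitrarily small loss in $\eps$.)
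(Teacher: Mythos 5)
Your proposal is correct and follows essentially the same route as the paper's proof: completeness via \cref{fact:p-support-sympcomp}, soundness by combining \cref{lem:sampling}, \cref{prop:q-lower-bounds-p}, \cref{lem:isotropic}, and \cref{thm:p_q_duality} to get $p_\psi$-mass $(1-\eps)/2^t$ on the isotropic subspace $H$ and then invoking \cref{lem:product-state-approximation-general}, and the same Gaussian-elimination runtime bound. The only differences are cosmetic (you phrase soundness as a contradiction where the paper argues directly, and you flag the strict-vs-nonstrict fidelity boundary, which the paper glosses over in the same way).
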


\begin{proof}
    First, suppose that $\ket\psi$ has stabilizer dimension at least $k$. By \cref{fact:p-support-sympcomp}, $q_\psi$ is supported on a subspace of dimension at most $2n-k$. 
    So, no matter what, $\dim S^\sympcomp$ will never exceed $2n - k$, and therefore $\hat{k}$ will always be at least $k$. Hence, \Cref{alg:property-testing} accepts with probability $1$.

    Now suppose that $\ket{\psi}$ has fidelity less than $1 - \eps$ with every state of stabilizer dimension at least $k$. By \cref{prop:q-lower-bounds-p} and \Cref{lem:sampling}, with probability at least $1 - \delta$, the subspace $S$ satisfies
    \[
    \sum_{x \in S^\sympcomp} p_\psi(x) \geq \sum_{x \in S^\sympcomp} q_\psi(x) \geq 1-\eps.
    \]
    Assuming this occurs, applying \Cref{thm:p_q_duality} gives
    \[
    \sum_{x \in S} p_\psi(x) = \frac{\abs{S}}{2^n}\sum_{x\in S^\sympcomp} p_\psi(x) \ge \frac{1 - \eps}{2^{n - \hat{k}}}.
    \]
    Since $\sum_{x \in S^\sympcomp} q_\psi(x) \geq 1-\eps > 5/8$, $S$ is isotropic by \cref{lem:isotropic}. As such, by
    \Cref{lem:product-state-approximation-general}, there exists a state of stabilizer dimension at least $\hat{k}$ that has fidelity at least $1-\eps$ with $\ket\psi$. By assumption, we must have $\hat{k} < k$, and thus \Cref{alg:property-testing} rejects with probability at least $1 - \delta$.
    
    Overall, we find that in either case, with probability at least $1-\delta$, the algorithm succeeds.
    It remains to bound the runtime.
    \Cref{lemma:compute-symplectic-complement} tells us that computing a basis for $S^\sympcomp$ requires $O(mn^2)$ time, where $m = O\left(\frac{n + \log(1/\delta)}{\eps}\right)$ is the number of Bell difference samples taken by the algorithm. This dominates the running time.
\end{proof}

\section{Tomography Reduces to Finding Heavy Subspaces}

In this section, we describe a reduction from tomography on states $\ket{\psi}$ with large stabilizer dimension to finding any subspace $S \subseteq \F_2^{2n}$ that approximates $\weyl(\ket\psi)$ (in a sense that is made explicit below). This reduction uses only single-copy measurements. In the later sections, we will give algorithms to find such subspaces $S$ using either entangled or single-copy measurements.

\subsection{Pure State Tomography}
Our reduction requires a general pure state tomography algorithm as a subroutine. The complexity of our reduction will be parameterized by the complexity of the tomography algorithm used. To describe these runtimes succinctly, it will be convenient to introduce the following notation.

\begin{definition}[Single-copy pure state tomography copy and time complexities]\label{def:tomography-complexity}
Let $N_{n, \eps, \delta}$ and $M_{n, \eps, \delta}$ be the sample and time complexities, respectively, of pure state tomography on $n$ qubits to trace distance $\eps$ with failure probability at most $\delta$, using only single-copy measurements.
\end{definition}

For context, it was shown by \cite{franca_et_al:LIPIcs.TQC.2021.7} that there exists a single-copy pure state tomography procedure that uses $O\left( 2^n n \log(1/\delta)\eps^{-4}\right)$ copies and $O\left( 4^n n^3 \log(1/\delta)\eps^{-5}\right)$ time to output a state $\ket{\hat{\psi}}$ that is $\eps$-close to $\ket{\psi}$ in trace distance with probability at least $1-\delta$. To our knowledge, this is the state-of-the-art when it comes to scaling in the dimension $d = 2^n$ of the system (though, it does not match the best-known lower bound of $\Omega(d)$). 

\begin{theorem}[\cite{franca_et_al:LIPIcs.TQC.2021.7}]\label{thm:tomography}
Given access to copies of an $n$-qubit pure state $\ket{\psi}$,
there is a single-copy algorithm that uses $O\left( 2^n n \log(1/\delta)\eps^{-4}\right)$ copies, $O\left( 4^n n^3 \log(1/\delta)\eps^{-5}\right)$ time, and outputs a state $\ket{\hat{\psi}}$ that is $\eps$-close to $\ket{\psi}$ in trace distance with probability at least $1-\delta$. The algorithm only requires applying random Clifford circuits and classical post-processing. 
\end{theorem}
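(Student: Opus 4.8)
The plan is to learn $\rho \coloneqq \ketbra{\psi}{\psi}$ in \emph{operator norm} via random Clifford basis measurements and then project onto the nearest pure state. For each copy of $\ket\psi$, draw a uniformly random Clifford $C$, measure $C\ket\psi$ in the computational basis to obtain an outcome $b \in \F_2^n$, and record the Hermitian ``snapshot''
\[
\hat\rho_i \;\coloneqq\; (2^n+1)\, C^\dagger \ketbra{b}{b} C - I .
\]
Because the measurement channel $\calM(\sigma) = \E_C \sum_b \braket{b|C\sigma C^\dagger|b}\, C^\dagger\ketbra{b}{b}C$ is exactly the depolarizing channel $\sigma \mapsto (\sigma + \tr(\sigma)I)/(2^n+1)$ --- a consequence of the Clifford group being a unitary $2$-design --- we have $\hat\rho_i = \calM^{-1}(C^\dagger\ketbra{b}{b}C)$, so $\E[\hat\rho_i] = \rho$. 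Averaging $N$ independent snapshots gives $\hat\rho \coloneqq \frac1N \sum_{i=1}^N \hat\rho_i$ with $\E[\hat\rho] = \rho$ (note $\hat\rho$ need not be positive semidefinite, which is fine).

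Next I would control $\NORM{\hat\rho - \rho}_{\mathrm{op}}$ by matrix Bernstein. The almost-sure bound is $\NORM{\hat\rho_i - \rho}_{\mathrm{op}} = O(2^n)$, since $\hat\rho_i$ is a rank-one projector scaled by $2^n+1$ (minus $I$). For the matrix variance, $(C^\dagger\ketbra{b}{b}C)^2 = C^\dagger\ketbra{b}{b}C$ gives $\hat\rho_i^2 = (4^n-1)\, C^\dagger\ketbra{b}{b}C + I$, and averaging the rank-one term through $\calM$ again yields $\E[\hat\rho_i^2] = (2^n-1)(\rho + I) + I$, hence $\NORM{\E[(\hat\rho_i - \rho)^2]}_{\mathrm{op}} = O(2^n)$ --- crucially $2^n$, not $4^n$, and this is exactly where the purity (rank one) of $\rho$ and the Clifford design property are used. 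Matrix Bernstein on the $2^n$-dimensional space then gives that $N = O\!\left(2^n n\, \eta^{-2}\log(1/\delta)\right)$ snapshots suffice for $\NORM{\hat\rho - \rho}_{\mathrm{op}} \le \eta$ with probability at least $1-\delta$ (the $\log(1/\delta)$ can alternatively come from a median-of-means boost).

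Finally I would \emph{project}: output a leading unit eigenvector $\ket{\hat\psi}$ of $\hat\rho$. Since $\rho$ has an eigenvalue equal to $1$, Weyl's inequality gives $\lambda_{\max}(\hat\rho) \ge 1 - \eta$, and therefore
\[
F(\ket{\hat\psi},\ket\psi) = \braket{\hat\psi|\rho|\hat\psi} \;\ge\; \braket{\hat\psi|\hat\rho|\hat\psi} - \NORM{\hat\rho - \rho}_{\mathrm{op}} \;=\; \lambda_{\max}(\hat\rho) - \eta \;\ge\; 1 - 2\eta ,
\]
so $\tracedistance{\ket{\hat\psi},\ket\psi} = \sqrt{1 - F(\ket{\hat\psi},\ket\psi)} \le \sqrt{2\eta}$. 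Taking $\eta = \eps^2/2$ yields trace distance at most $\eps$ using $N = O\!\left(2^n n\, \eps^{-4}\log(1/\delta)\right)$ copies. The algorithm only ever applies random Clifford circuits and measures in the computational basis (each snapshot is a stabilizer state, stored in $O(n^2)$ space) and then does classical linear algebra --- forming/power-iterating on the $2^n \times 2^n$ empirical matrix --- which accounts for the stated $O(4^n n^3 \eps^{-5})$ runtime.

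The main obstacle is the matrix-variance estimate in the second step: a naive Frobenius-norm analysis has $\NORM{\hat\rho_i}_F^2 \approx 4^n$, which would force $N \approx 4^n$ and destroy the polynomial-in-$2^n$ scaling. Obtaining the $O(2^n)$ operator-norm variance requires working in operator norm and using that $\rho$ is rank one, so that $\E[\hat\rho_i^2]$ is dominated by the low-rank operator $\rho + I$ rather than a large ``bulk''; a secondary technical point is checking that the eigenvector-extraction step can be made simultaneously accurate and fast enough to hit the claimed $n^3$ and $\eps^{-5}$ exponents in the runtime.
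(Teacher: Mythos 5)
This theorem is imported by citation from Fran\c{c}a--Brand\~{a}o--Kueng; the paper gives no proof of its own, so there is nothing internal to compare against. Your sketch correctly reconstructs the cited argument (Clifford-design snapshots inverting the depolarizing channel, the rank-one-driven $O(2^n)$ operator-norm variance in matrix Bernstein, and leading-eigenvector extraction with the $\eta=\eps^2/2$ conversion), and the one loose end---that the empirical matrix must be handled implicitly via stabilizer-structured matrix--vector products rather than formed explicitly, to avoid an $8^n$ cost---is a point you already flag and which does work out to the stated $O(4^n n^3\log(1/\delta)\eps^{-5})$ runtime.
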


Alternatively, it was shown by \cite{aaronson2023efficient} that there exists a single-copy pure state tomography procedure that uses $O\left( 16^n \log(1/\delta)\eps^{-2}\right)$ copies and $O\left( 32^n \log(1/\delta)\eps^{-2}\right)$ time, sacrificing dependence on $n$ for a better dependence on $\eps$.

\begin{theorem}[{\cite[Section 5]{aaronson2023efficient}}]\label{thm:tomography-option2}
Given access to copies of an $n$-qubit pure state $\ket{\psi}$,
there is a single-copy algorithm that uses $O\left( 16^n \log(1/\delta)\eps^{-2}\right)$ copies, $O\left( 32^n \log(1/\delta)\eps^{-2}\right)$ time, and outputs a state $\ket{\hat{\psi}}$ that is $\eps$-close to $\ket{\psi}$ in trace distance with probability at least $1-\delta$. The algorithm only requires applying Clifford circuits and classical post-processing. 
\end{theorem}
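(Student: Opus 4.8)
The plan is to prove \cref{thm:tomography-option2} by directly estimating the Weyl expansion of the density matrix $\rho \coloneqq \ketbra{\psi}{\psi}$ and then reading off a pure state from the resulting estimate. Recall from the Weyl expansion that $\rho = 2^{-n}\sum_{x \in \F_2^{2n}}\braket{\psi|W_x|\psi}W_x$, so it suffices to estimate each of the $4^n$ coefficients $c_x \coloneqq \braket{\psi|W_x|\psi}$. Since every $W_x$ is Hermitian with $W_x^2 = I$, its eigenvalues are $\pm 1$, and it can be measured by conjugating $\ket\psi$ by a Clifford circuit that maps $W_x$ to a tensor product of $Z$'s and $I$'s, measuring in the computational basis, and taking the parity of the relevant outcome bits. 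This yields a $\pm 1$ random variable with mean $c_x$, and uses only Clifford circuits and classical post-processing, as required. Averaging $m$ such outcomes gives an estimate $\hat c_x$ with $\Var(\hat c_x)\le 1/m$.

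I would then assemble $\hat\rho \coloneqq 2^{-n}\sum_x \hat c_x W_x$ and control its distance from $\rho$. The key computation uses the orthogonality of the Weyl operators ($\tr(W_xW_y)=2^n\,\delta_{xy}$) to write the Frobenius error exactly as $\norm{\hat\rho-\rho}_F^2 = 2^{-n}\sum_x(\hat c_x - c_x)^2$. Taking expectations and using the per-coefficient variance bound gives $\E\norm{\hat\rho-\rho}_F^2 \le 2^{-n}\cdot 4^n/m = 2^n/m$, so by Markov's inequality $\norm{\hat\rho-\rho}_F \le \eps\,2^{-n/2}$ holds with constant probability once $m = \Theta(4^n/\eps^2)$. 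The main obstacle is the passage from Frobenius error to trace distance: because $\hat\rho-\rho$ is Hermitian of rank at most $2^n$, one only has $\norm{\hat\rho-\rho}_1 \le 2^{n/2}\norm{\hat\rho-\rho}_F$, so to guarantee trace distance $\eps$ one must drive the Frobenius error down to $\eps\,2^{-n/2}$. Squaring this target against the $2^n$ appearing in the expectation forces $m=\Theta(4^n/\eps^2)$ measurements per Pauli, and multiplying by the $4^n$ Pauli coefficients gives $4^n\cdot m = O(16^n/\eps^2)$ copies in total; this dimensional blow-up is precisely what distinguishes this brute-force estimator from the random-Clifford method of \cref{thm:tomography}. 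Boosting the constant success probability to $1-\delta$ by running $O(\log(1/\delta))$ independent trials and taking a median recovers the stated $O(16^n\log(1/\delta)\eps^{-2})$ copy bound.

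It remains to output an actual \emph{pure} state rather than the Hermitian estimate $\hat\rho$. Here I would compute the top eigenvector $\ket{\hat\psi}$ of $\hat\rho$ classically and output it: since $\ketbra{\hat\psi}{\hat\psi}$ is the nearest rank-one projector to $\hat\rho$ and $\rho$ is itself a rank-one projector within trace distance $\eps/2$, a short eigenvector-perturbation argument (e.g.\ Davis--Kahan, using that $\rho$ has a unit spectral gap) gives $\tracedistance{\ketbra{\hat\psi}{\hat\psi},\rho}\le \eps$, i.e.\ $\ket{\hat\psi}$ is $\eps$-close to $\ket\psi$ in trace distance. Finally, the running time is dominated by classical post-processing: tabulating the $O(16^n\log(1/\delta)\eps^{-2})$ measurement outcomes into the $4^n$ coefficient estimates, forming the dense $2^n\times 2^n$ matrix $\hat\rho$, and extracting its top eigenvector. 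Accounting for the cost of handling the exponentially many outcomes and the $2^n\times 2^n$ linear algebra yields the claimed $O(32^n\log(1/\delta)\eps^{-2})$ time bound. I expect the Frobenius-to-trace conversion, and the consequent bookkeeping of the dimensional $2^{n/2}$ factor, to be the only genuinely delicate part; everything else is routine Markov/median estimation and standard stabilizer-formalism manipulation.
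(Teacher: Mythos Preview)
The paper does not give its own proof of this theorem; it is stated purely as a citation to \cite[Section~5]{aaronson2023efficient} and used as a black box for the pure-state tomography subroutine in \Cref{alg:learning}. There is therefore nothing in the paper to compare your argument against.

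That said, your proposal is a correct direct proof of the statement. Brute-force estimation of all $4^n$ Pauli expectations, the Frobenius identity $\norm{\hat\rho-\rho}_F^2 = 2^{-n}\sum_x(\hat c_x - c_x)^2$, the rank bound $\norm{\cdot}_1 \le 2^{n/2}\norm{\cdot}_F$, and the Davis--Kahan step (the spectral gap of $\rho$ is $1$, so $\sin\theta \le \norm{\hat\rho-\rho}_{\mathrm{op}}$ directly bounds the pure-state trace distance) all go through as you describe. Two cosmetic points worth tightening if you write this out in full: the ``median'' you invoke for matrix-valued outputs is not a coordinatewise median but the standard metric-space selection (output any trial that lies within $2\eps$ of a majority of the other trials); and your running-time accounting is vague---the actual cost is dominated by $O\bigl(n\cdot 16^n\log(1/\delta)/\eps^2\bigr)$ for tabulating the measurement outcomes plus $O(8^n)$ for assembling and diagonalizing $\hat\rho$, both of which sit comfortably inside the stated $O\bigl(32^n\log(1/\delta)/\eps^2\bigr)$ bound. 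Neither point is a genuine gap.
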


In either case, we have the upper bound $M_{n,\eps,\delta}, N_{n,\eps,\delta} \le \poly(2^n, 1/\eps, \log(1/\delta))$.

% Grewal, Iyer, Kretschmer, and Liang \cite{grewal2023efficient} 
% showed that if one can find a ``good'' approximation of $\weyl(\ket\psi)$, then one can learn $\ket\psi$ with some additional overhead in sample and time complexities.

% previously gave a reduction from  finding a large and near-maximal subspace (in $p_\psi$) to tomography of $\ket \psi$.
% Recall that $\weyl(\ket\psi)$ is an isotropic subspace of $\F_2^{2n}$ for which 
% \[\sum_{x \in \weyl(\ket\psi)} p_\psi(x) = 2^{-t}.\] 
% Our algorithm finds an isotropic subspace $S \supseteq \weyl(\ket\psi)$ that accounts for a large fraction of this mass. 
% Grewal, Iyer, Kretschmer, and Liang \cite{grewal2023efficient} previously showed that this suffices to learn $\ket\psi$. 
%%%%% TODO - Daniel

\subsection{Reduction Algorithm}

We are now ready to describe the reduction algorithm. Recall \cref{lem:product-state-approximation-general}, which shows that given a large isotropic subspace that accounts for a large fraction of the $p_\psi$ mass, there exists a Clifford circuit that approximately maps $\ket{\psi}$ to a product $\ket{\varphi}\ket{x}$ of a state $\ket{\varphi}$ on few qubits and a computational basis state $\ket{x}$. Our algorithm below amounts to computing this Clifford circuit $C$ and then performing tomography on the states $\ket{\varphi}$ (via the general tomography algorithm) and $\ket{x}$ (by measuring in the computational basis).

\begin{algorithm}[H]
\caption{Learning reduction using single-copy measurements}\label{alg:reduction}
\SetKwInOut{Promise}{Promise}
\KwInput{Black-box access to copies of $\ket\psi$, $\eps, \delta \in (0,1)$, and $S \subseteq \F_2^{2n}$}
\Promise{$S$ is a subspace such that $\sum_{x \in S} p_\psi(x) \geq \left(1-\frac{\eps^2}{4}\right)\frac{|S|}{2^n}$ and $\dim(S) \geq n-t$}
\KwOutput{A classical description of $\ket{\hat{\psi}}$ such that $\tracedistance{\ket \psi, \ket{\hat{\psi}}} \leq \eps$ with probability at least $1-\delta$}

Let $\hat{t}$ be such that $\dim(S) = n-\hat{t}$ and $\hat{t} \leq t$.
\label{step:zero-reduction}

Use the algorithm in \cref{lem:clifford-mapping-algorithm} to construct a Clifford circuit $C$ that maps $S$ to $0^{n+\hat{t}} \times  \F_2^{n - \hat{t}}$.
\label{step:one-reduction}

Measure the last $n-\hat{t}$ qubits of $24 \log(3/\delta)$ copies of $C\ket\psi$. Let $\hat{x} \in \F_2^{n-\hat{t}}$ be the majority result.
\label{step:two-reduction}

\For{$i \gets 1$ \KwTo $\frac{4}{3}N_{\hat{t},\frac{\eps}{2},\frac{\delta}{3}} + \frac{8}{9} \log(3/\delta)$}{
\label{step:three-reduction}

Measure the last $n-\hat{t}$ qubits of a copy of $C\ket\psi$ and let $x_i \in \{0, 1\}^{n-\hat{t}}$ be the outcome. 
\label{step:four-reduction}

\uIf{$x_i = \hat{x}$}{
Use the state on the first $\hat{t}$ qubits as the input to a pure state single-copy tomography algorithm with accuracy $\eps/2$ in trace distance and failure probability at most $\delta/3$.
}
\label{step:five-reduction}
\Else{ Discard the state.}
}
Let $\ket{\hat\varphi}$ be the classical description output by the pure state single-copy tomography algorithm.
\label{step:six-reduction}

\Return{$\ket{\hat{\psi}} = C^\dagger\ket{\hat{\varphi}}\ket{\hat{x}}$.}
\label{step:seven-reduction}
\end{algorithm}

To aid intuition, note that if $\ket{\psi}$ has stabilizer dimension $n - t$ and $S = \weyl(\ket \psi)$, then the condition of the algorithm is satisfied with $\eps = 0$. So, the algorithm may be understood as showing how to perform tomography on $\ket{\psi}$ given a sufficiently good approximation $S$ to $\weyl(\ket \psi)$.

\begin{theorem}%[\cite{grewal2023efficient}]
\label{thm:learn-weyl-psi-reduction}
Let $\ket\psi$ be an $n$-qubit state, and suppose there exists a subspace $S \subseteq \F_2^{2n}$ that satisfies the following conditions: 
\begin{enumerate}[label=\rm{(\roman*)}]
    \item\label{item:dimension} $\dim(S) \geq n-t$, and
    \item\label{item:approx-weyl} $\sum_{x \in S} p_\psi(x) \geq \left(1-\frac{\eps^2}{4}\right)\frac{\abs{S}}{2^n}$.
\end{enumerate}
Then, given $S$, \cref{alg:reduction} outputs a classical description of $\ket{\hat{\psi}}$ such that $\tracedistance{\ket\psi, \ket{\hat{\psi}}} \leq \eps$ with probability at least $1-\delta$ using $\frac{4}{3}N_{t, \frac{\eps}{2}, \frac{\delta}{3}} + \frac{224}{9} \log(3/\delta)$ copies of $\ket\psi$, and 
    $M_{t, \frac{\eps}{2}, \frac{\delta}{3}} + O\left(n^2 \left(N_{t, \frac{\eps}{2}, \frac{\delta}{3}} + \log(1/\delta)\right)\right)$  time,\footnote{As seen by \cref{thm:tomography,thm:tomography-option2}, we can usually simplify this to just $O\left(M_{t, \frac{\eps}{2}, \frac{\delta}{2}}\right)$ in practice.} while performing only single-copy measurements.
\end{theorem}

\begin{proof}%[Proof of \cref{thm:learn-weyl-psi-reduction}]
    \cref{lem:p-mass-isotropic} implies that a subspace $S$ satisfying \cref{item:approx-weyl} must be isotropic, because $\eps < 1$. The algorithm from \cref{lem:clifford-mapping-algorithm} produces a Clifford circuit $C$ that maps $S$ to $0^{n+\hat t} \times \F_2^{n-\hat t}$. 
     By \Cref{lem:product-state-approximation-general}, there exists an $(n-\hat t)$-qubit basis state $\ket{x}$ and a $\hat{t}$-qubit state
     \[
     \ket{\varphi} \coloneqq \frac{(I \otimes \bra{x}) C \ket{\psi}}{\norm{(I \otimes \bra{x}) C \ket{\psi}}_2},
     \]
     such that the fidelity between $C \ket\psi$ and $\ket{\varphi}\ket{x}$ is at least $1-\eps^2/4$, implying that
     \begin{align}\label{eq:first-td-bound}
     \tracedistance{\ket{\psi}, C^\dagger \ket{ \varphi}\ket{x}} \leq \frac{\eps}{2}. 
     \end{align}

In \cref{step:two-reduction}, for $1 \le i \le m = 24\log(3/\delta)$, let $x_i$ denote the $i$th ($n-\hat{t}$)-bit measurement outcome upon measuring the last $n-\hat t$ qubits of $C\ket\psi$ in the computational basis.
Define the indicator random variable $X_i$ as 
\begin{align}\label{eq:random-variable-post-selection}
X_i = \begin{cases}
1 & \text{if $x_i = x$.} \\
0 & \text{otherwise.} \\
\end{cases}
\end{align}
Because the fidelity between $C\ket{\psi}$ and $\ket{\varphi}\ket{x}$ is at least $1 - \eps^2/4$ and $\eps \in (0,1)$, $\Pr[X_i = 1] \geq 3/4$ and $\mu \coloneqq \E[\sum_i X_i] \geq 3m/4$. 
By a Chernoff bound (\cref{fact:chernoff}),

\begin{align*}
\Pr\left[\sum_{i = 1}^m X_i \leq \frac{m}{2}\right]
&\leq \Pr\left[ \sum_{i=1}^m X_i \leq \left(1-\frac{1}{3}\right) \mu  \right] \\
&\le \exp\left( -\frac{\mu}{18} \right)  \\
&\leq \exp\left( -\frac{ m}{24}\right)\\
&= \delta/3.
\end{align*}
Hence, over half of the $m$ samples will be $x$ with probability at least $1-\delta/3$, which means that we will have $\hat{x} = x$ with probability $1 - \delta/3$.

Assume now that we have successfully learned $\hat{x} = x$.
We then need to perform pure state tomography on $\ket{\varphi}$. Note that the definition of $\ket{\varphi}$ means that $\ket{\varphi}$ is the state on $\hat{t}$ qubits conditioned on measuring $\ket{x}$ on the last $n - \hat{t}$ qubits of $C\ket{\psi}$. In \cref{step:three-reduction}, we need to repeat this postselection enough times until we have $N_{\hat{t}, \frac{\eps}{2},\frac{\delta}{3}}$ copies of $\ket{\varphi}$ to feed into the tomography algorithm.

In \cref{step:four-reduction}, for $1 \le i \le m = \frac{4}{3}N_{\hat{t},\frac{\eps}{2},\frac{\delta}{3}} + \frac{8}{9} \log(3/\delta)$, define the random variable $Y_i$ by
\begin{align}\label{eq:random-variable-post-selection-2}
Y_i = \begin{cases}
1 & \text{if $x_i = x$.} \\
0 & \text{otherwise.} \\
\end{cases}
\end{align}
As with the $X_i$'s, because the fidelity between $C\ket{\psi}$ and $\ket{\varphi}\ket{x}$ is at least $1 - \eps^2/4$ and $\eps \in (0,1)$, $\Pr[Y_i = 1] \geq 3/4$ and $\mu \coloneqq \E[\sum_i Y_i] \geq 3m/4$.
By Hoeffding's inequality (\cref{fact:hoeffding}),

\begin{align*}
    \Pr\left[\sum_{i=1}^m Y_i < N_{\hat{t}, \frac{\eps}{2},\frac{\delta}{3}}\right]
    \leq \Pr\left[\sum_{i=1}^m Y_i - \mu \leq N_{\hat{t}, \frac{\eps}{2},\frac{\delta}{3}} - \frac{3}{4}m\right]
    \leq \exp\left(-\frac{2}{m}\left(N_{\hat{t}, \frac{\eps}{2},\frac{\delta}{3}}-\frac{3}{4}m\right)^2\right).
\end{align*}
Let $\gamma = \frac{8}{9}\log(\delta/3)/N_{\hat{t}, \frac{\eps}{2},\frac{\delta}{3}}$ so that we may write $m = \left(\frac{4}{3} + \gamma\right)N_{\hat{t}, \frac{\eps}{2},\frac{\delta}{3}}$. Then the above expression is upper bounded by
\[
\exp\left(-\frac{9}{8}\frac{\gamma^2}{\frac{4}{3} + \gamma}N_{\hat{t}, \frac{\eps}{2},\frac{\delta}{3}}\right) < \exp\left(-\frac{9}{8}\gamma N_{\hat{t}, \frac{\eps}{2},\frac{\delta}{3}}\right) \le \delta/3.
\]
This is to say that with probability at least $1 - \delta/3$, the algorithm postselects enough copies of $\ket{\varphi}$ in \cref{step:five-reduction} to successfully run pure state tomography on $\ket{\varphi}$.

By assumption, the pure state tomography algorithm in \cref{step:six-reduction} fails with probability at most $\delta / 3$. Assuming the tomography succeeds in outputting a state $\ket{\hat{\varphi}}$ satisfying $d_\tr\left(\ket{\varphi}, \ket{\hat{\varphi}}\right) \leq \eps/2$,
the returned state $C^\dagger\ket{\hat{\varphi}}\ket{\hat{x}}$ satisfies
\begin{align*}
    d_\tr(\ket{\psi}, C^\dagger\ket{\hat{\varphi}}\ket{\hat{x}}) 
    &\leq d_\tr(\ket{\psi}, C^\dagger\ket{\varphi}\ket{x}) + d_\tr(\ket{\varphi}\ket{x}, \ket{\hat{\varphi}}\ket{\hat{x}}) \\
    &\leq \frac{\eps}{2} + d_\tr(\ket{\varphi}\ket{x}, \ket{ \hat{\varphi}}\ket{\hat{x}}) \\
    &\leq \eps.
\end{align*}
Above, first step follows by the triangle inequality and the fact that trace distance is unitarily invariant. The second step follows from \cref{eq:first-td-bound}, and the final step follows assuming $\hat{x} = x$ and $d_\tr\left(\ket{\varphi}, \ket{\hat{\varphi}}\right) \leq \eps/2$.

By applying a union bound over the \say{bad} events (namely, \cref{step:two-reduction,step:five-reduction,step:six-reduction} of the algorithm), we have that the overall success probability of the algorithm is at least $1-\delta$.

To conclude, we bound the copy and time complexities of the algorithm.
Since $\hat{t} \leq t$, we use $O\left(N_{t, \frac{\eps}{2}, \frac{\delta}{3}}  + \log(1/\delta)\right)$ copies in the for loop of \cref{step:three-reduction}, which dominates the sample complexity.
Producing these states requires time $O(n^2)$ per state because of the size of the Clifford circuit from \cref{lem:clifford-mapping-algorithm}.
The runtime of the pure state single-copy tomography algorithm in \cref{step:six-reduction} otherwise dominates the running time. The overall time complexity is therefore
\[
O\left(n^2 \left(N_{t, \frac{\eps}{2},\frac{\delta}{3}} + \log(1/\delta)\right)\right)  + M_{t, \frac{\eps}{2}, \frac{\delta}{3}}.\qedhere 
\]
\end{proof}

Note that the dependence on $t$ comes entirely from the pure state tomography algorithm (e.g., \cref{thm:tomography} or \cref{thm:tomography-option2}) in \cref{step:six-reduction} of \cref{alg:reduction}. Hence, one can upgrade this part of the algorithm with improved pure state single-copy tomography algorithms when/if they are discovered. 

We also note that if one is allowed a quantum memory, a constant factor improvement in sample complexity is possible by saving the first $\hat{t}$ qubits in \cref{step:two-reduction}. Once $x$ has been determined via majority, these copies of $\ket{\varphi}$ can be fed into the pure state tomography algorithm.

\section{Learning the Unsigned Stabilizer Group Using Bell Measurements}
\label{sec:learning_weyl_bell}

In this section, we present the first of two algorithms for learning a subspace that satisfies the conditions of \cref{thm:learn-weyl-psi-reduction}. The algorithm in this section uses Bell difference sampling and is closely related to the property testing algorithm for stabilizer dimension (\cref{alg:property-testing}).

\begin{algorithm}[H]
\caption{Approximating $\weyl(\ket \psi)$ using Bell measurements}\label{alg:learning_weyl_bell}
\SetKwInOut{Promise}{Promise}
\KwInput{Black-box access to copies of $\ket\psi$ and $\eps, \delta \in (0, 1)$}
\Promise{$\ket\psi$ has stabilizer dimension at least $n-t$}
\KwOutput{A subspace $S \subset \F_2^{2n}$ of dimension at least $n-t$ such that $\sum_{x \in S} p_\psi(x) \geq \left(1-\frac{\eps^2}{4}\right)\frac{\abs{S}}{2^n}$ with probability at least $1 - \delta$}

Perform Bell difference sampling to draw $\frac{8\log(1/\delta)+16n}{\eps^2}$ samples from $q_\psi$. Denote the span of the samples by $S^\sympcomp$.
\label{step:one_bell} 

Use the algorithm in \cref{lemma:compute-symplectic-complement} to compute the symplectic complement $S$ of the subspace spanned by the samples. 
\label{step:two_bell}

\Return{$S$}
\end{algorithm}

Showing that \cref{alg:learning_weyl_bell} satisfies the conditions of \cref{thm:learn-weyl-psi-reduction} proceeds similarly to the analysis of \cref{alg:property-testing} in \cref{thm:property-testing-alg}.

\begin{theorem}\label{thm:bell-copy-main}
    Let $\ket{\psi}$ be an $n$-qubit quantum state with stabilizer dimension at least $n-t$.
    Given copies of $\ket{\psi}$ as input, \cref{alg:learning_weyl_bell} succeeds at outputting a subspace $S$ that satisfies the conditions of \cref{thm:learn-weyl-psi-reduction} with probability at least $1-\delta$.
    The algorithm uses
    \[
    \frac{32\log(1/\delta)+64n}{\eps^2}
    \]
    samples and
    \[
    O\left(\frac{n^2\log(1/\delta) + n^3}{\eps^2}\right),
    \]
    time.
\end{theorem}

\begin{proof}
    By \Cref{fact:p-support-sympcomp}, the support of $q_\psi$ is a subspace of dimension at most $n + t$, so $\dim S^\sympcomp \le n + t$ and therefore $\dim S \ge n - t$, by \cref{fact:sympcomp}. This establishes \cref{item:dimension}.

        By \Cref{lem:sampling}, except with probability at most $\delta$, the Bell difference sampling phase of the algorithm in \Cref{step:one_bell} finds a subspace $S^\sympcomp \subseteq \F_2^{2n}$ such that 
    \[
    \sum_{x \in S^\sympcomp} q_\psi(x) \geq 1-\frac{\eps^2}{4}.
    \]
    From \cref{prop:q-lower-bounds-p}, we know that 
    \[
    \sum_{x \in S^\sympcomp} p_\psi(x) \geq 1-\frac{\eps^2}{4}.
    \]
    In the next step, the algorithm computes $S$.
    Using \Cref{thm:p_q_duality}, we have that
    \[
    \sum_{x \in S} p_\psi(x) = \dfrac{\abs{S}}{2^n} \sum_{x \in S^\sympcomp} p_\psi(x) \geq \left(1-\frac{\eps^2}{4}\right)\frac{|S|}{2^n},
    \]
    which establishes \cref{item:approx-weyl}.

    The upper bound on sample complexity is immediate from the fact that each Bell difference sample requires $4$ copies of $\ket{\psi}$, so it only remains to bound the time complexity. The Bell difference sampling phase takes $O(n)$ time per sample, while \cref{lemma:compute-symplectic-complement} implies that computing the symplectic complement takes time
    \[
    O\left(\frac{n^2\log(1/\delta) + n^3}{\eps^2}\right),
    \]
    which dominates the runtime.
\end{proof}

Combining with \cref{alg:reduction} gives us the following.

\begin{corollary}\label{cor:bell-copy-main}
    Let $\ket{\psi}$ be an $n$-qubit quantum state with stabilizer dimension at least $n-t$.
    Given copies of $\ket{\psi}$ as input, the combination of \cref{alg:learning_weyl_bell,alg:reduction} outputs a classical description of $\ket{\hat{\psi}}$ such that $\tracedistance{\ket{\psi}, \ket{\hat{\psi}}} \leq \eps$ with probability at least $1-\delta$.
    The algorithm uses
    \[
    \frac{32\log(2/\delta)+64n}{\eps^2} + \frac{4}{3}N_{t, \frac{\eps}{2}, \frac{\delta}{6}} + \frac{224}{9}\log(6/\delta)
    \]
    samples and
    \[
    O\left(\frac{n^2\log(1/\delta) + n^3}{\eps^2} + n^2 \cdot N_{t, \frac{\eps}{2}, \frac{\delta}{6}} \right) + M_{t, \frac{\eps}{2}, \frac{\delta}{6}},
    \]
    time.
\end{corollary}
\begin{proof}
    By \cref{thm:learn-weyl-psi-reduction,thm:bell-copy-main}, each algorithm succeeds with probability at least $1-\delta/2$. By the union bound, the total failure probability is at most $\delta$.
\end{proof}

\section{Computational Difference Sampling}

In this section, we introduce and develop the theory of \emph{computational difference sampling} and connect it to the characteristic distribution $p_\psi$ 
(\cref{subsec:weyl-expansion-and-bell-diff-sampling}).

\begin{definition}[Computational difference sampling]
    Computational difference sampling a quantum state $\ket\psi$ corresponds to the following quantum measurement. Measure $\ket\psi$ in the computational basis to get outcome $x \in \F_2^n$, measure an additional copy of $\ket\psi$ in the computational basis to get another outcome $y\in \F_2^{n}$, and output $(x + y, 0^n) \in \Xs$.  
    Let $r_\psi(a)$ denote the probability of sampling $a \in \Xs$ from this process.
\end{definition}

One can view computational difference sampling as a distribution over $\F_2^n$, but for our purposes, it will be most convenient to identify $\F_2^n$ with $\Xs$, which corresponds to the set of Pauli-$X$ Weyl operators.

Computational difference sampling only uses single-copy measurements and is the core primitive used in our single-copy learning algorithm.
Although it is weaker than the entangled sampling techniques used in \cref{alg:learning_weyl_bell} and prior learning algorithms based on Bell sampling \cite{montanaro-bell-sampling,gross2021schur,lai2022learning,grewal2023improved}, 
computational difference sampling still reveals some information about the unsigned stabilizer group of a quantum state, as we will show in this section. 
We begin by proving some basic properties about $r_\psi$. 

\begin{fact}\label{fact:r-psi-convolution}
    Let $\ket\psi = \sum_{x \in \F_2^n} \alpha_x \ket{x}$ be an $n$-qubit quantum state, and let $a \in \F_2^n$. Then 
    \[
    r_\psi(a,0^n) = \sum_{x \in \F_2^n} \abs{\alpha_x}^2 \abs{\alpha_{x+a}^2}.
    \]
\end{fact}
\begin{proof}
It is a basic fact in probability theory that the sum of two independent random variables is the convolution of their individual distributions. 
By the Born rule, we observe $x \in \F_2^n$ with probability $\abs{\alpha_x}^2$ upon measuring $\ket\psi$ in the computational basis. Therefore, the probability of computational difference sampling $(a,0^n) \in \Xs$ is precisely 
\[
    r_\psi(a,0^n) = \sum_{x \in \F_2^n} \abs{\alpha_x}^2 \abs{\alpha_{x+a}^2},
\]
which is the convolution of the Born distribution with itself.
Equivalently, this is the probability of sampling two strings $x$ and $y$ whose difference is $a$.
\end{proof}

Next, we relate $r_\psi$ to $p_\psi$ (defined in \cref{subsec:weyl-expansion-and-bell-diff-sampling}).  
To do so, we need the following proposition, which relates sums of basis state projections to sums of Pauli operators. 

\begin{proposition}\label{prop:comp-diff-sampling-measurement}
For any $a \in \F_2^n$,
    \[\sum_{x \in \F_2^n} \ketbra{x}{x} \otimes \ketbra{x + a}{x+a} = \frac{1}{2^n}\sum_{x \in \F_2^n} (-1)^{a \cdot x} Z^x \otimes Z^x.
    \]
\end{proposition}
\begin{proof}
   \begin{align*}
       \sum_{x \in \F_2^n} \ketbra{x}{x} \otimes \ketbra{x+a}{x+a}
   &= \sum_{x \in \F_2^n} \ketbra{x}{x} \otimes X^a \ketbra{x}{x} X^a \\
   &= (I \otimes X^a) \left(\sum_{x \in \F_2^n} \ketbra{x}{x} \otimes \ketbra{x}{x}\right) \left(I \otimes X^a\right)  \\
   &=\frac{1}{2^n} (I \otimes X^a) \left(\sum_{x \in \F_2^n} Z^x \otimes Z^x\right) \left(I \otimes X^a\right) && \text{(\cref{prop:sum-over-stabilizer-basis})}  \\
   &= \frac{1}{2^n} \sum_{x \in \F_2^n} Z^x \otimes X^a Z^xX^a   \\
   &= \frac{1}{2^n} \sum_{x \in \F_2^n} (-1)^{a \cdot x} Z^x \otimes Z^x.&&\qedhere
   \end{align*} 
\end{proof}

The formal relation between $r_\psi$ and $p_\psi$ is stated below. Intuitively, this shows that $r_\psi$ is obtained from $p_\psi$ by summing over all possible shifts by a Pauli-$Z$ operator.

\begin{proposition}\label{prop:r-mass-vs-p-mass}
Let $\ket\psi$ be an $n$-qubit pure state, and let $a \in \F_2^n$. Then
    \[r_\psi(a,0^n) = \sum_{w \in \F_2^n} p_\psi(a, w) = \sum_{x \in a \times \F_2^n}p_\psi(x).\]
\end{proposition}
\begin{proof}
We can always write $\ket\psi = \sum_{x \in \F_2^n} \alpha_x \ket{x}$.
    \begin{align*}
        \sum_{x \in \F_2^n}p_\psi(a, x) 
        &= \frac{1}{2^n} \sum_{x \in \F_2^{n}} \braket{\psi|i^{a \cdot x}X^a Z^x|\psi}^2 \\
        &= \frac{1}{2^n} \sum_{x \in \F_2^{n}} (-1)^{a \cdot x} \braket{\psi|X^a Z^x|\psi}^2 \\
        &= \frac{1}{2^n} \sum_{x \in \F_2^{n}} (-1)^{a \cdot x}  \bra{\psi}^{\otimes 2} \left(X^a Z^x \otimes X^a Z^x\right)\ket{\psi}^{\otimes 2} \\
        &= \frac{1}{2^n} \sum_{x \in \F_2^{n}} (-1)^{a \cdot x}   \bra{\psi}^{\otimes 2}\left(X^a \otimes X^a\right) \left(Z^x \otimes Z^x\right)\ket{\psi}^{\otimes 2} \\
        &= \sum_{x \in \F_2^{n}}   \bra{\psi}^{\otimes 2}\left(X^a \otimes X^a\right) \left(\ketbra{x}{x} \otimes \ketbra{x+a}{x+a} \right)\ket{\psi}^{\otimes 2}&&\text{(\cref{prop:comp-diff-sampling-measurement})} \\
        &= \sum_{x \in \F_2^{n}}   \bra{\psi}^{\otimes 2}\left(\ketbra{x+a}{x} \otimes \ketbra{x}{x+a} \right)\ket{\psi}^{\otimes 2} \\
        &= \sum_{x \in \F_2^{n}} \abs{\alpha_x}^2\abs{\alpha_{x+a}}^2\\ 
        &= r_\psi(a,0^n).&&\text{(\cref{fact:r-psi-convolution})}\qedhere
    \end{align*}
\end{proof}

The duality property of $p_\psi$ with respect to the symplectic product, as captured by \cref{thm:p_q_duality}, extends to $r_\psi$ as a consequence of \cref{prop:r-mass-vs-p-mass}.

\begin{corollary}\label{cor:r-mass-vs-p-mass-duality}
    Given a subspace $H \subseteq \Xs$, then 
    \[
    \sum_{x \in H} r_\psi(x) = \frac{|H + \Zs|}{2^n}\sum_{x \in (H + \Zs)^\sympcomp} p_\psi(x).
    \]
\end{corollary}
\begin{proof}
\begin{align*}
    \sum_{x \in H} r_\psi(x) &= \sum_{x \in H + \Zs} p_\psi(x) && (\text{\cref{prop:r-mass-vs-p-mass}})\\
    &= \frac{|H + \Zs|}{2^n}\sum_{x \in (H + \Zs)^\sympcomp} p_\psi(x) && (\text{\cref{thm:p_q_duality}})\qedhere
\end{align*}
\end{proof}

Next, we relate computational difference samples of $\ket\psi$ to the Pauli-$Z$ strings that stabilize $\ket\psi$ modulo phase. In effect, \cref{cor:comp-diff-sampling} tells us that learning the support of $r_\psi$ allows us to learn the Pauli-$Z$ operators that stabilize $\ket{\psi}$ (up to a $\pm 1$ phase). 

\begin{corollary}\label{cor:comp-diff-sampling}
For $x \in \Xs$, $r_\psi(x) > 0$ implies $x \in (\weyl(\ket \psi) \cap \Zs)^\sympcomp$.
\end{corollary}
\begin{proof}
Set $H = (\weyl(\ket \psi) \cap \Zs)^\sympcomp \cap \Xs$ in \cref{cor:r-mass-vs-p-mass-duality}. Then, by \cref{lem:A-plus-B-sympcomp-identity}, 
\begin{align*}
    (H + \Zs)^\sympcomp &= H^\sympcomp \cap \Zs\\
    &= (\weyl(\ket \psi) \cap \Zs + \Xs) \cap \Zs\\
    &= \weyl(\ket \psi) \cap \Zs\\
    &\subseteq \weyl(\ket \psi).
\end{align*}
It follows that:
\begin{align*}
    \sum_{x \in H} r_\psi(x) &= \frac{|H + \Zs|}{2^n} \sum_{x \in (H + \Zs)^\sympcomp} p_\psi(x) && (\text{\cref{cor:r-mass-vs-p-mass-duality}})\\
    &= \frac{|H + \Zs|}{2^n} \cdot \frac{|(H + \Zs)^\sympcomp|}{2^n} && ((H + \Zs)^\sympcomp \subseteq \weyl(\ket \psi))\\
    &= 1 && (\text{\cref{fact:sympcomp}})
\end{align*}
We conclude that the probability of sampling something outside of $H$ is zero.
\end{proof}

\section{Learning the Unsigned Stabilizer Group Using Single-Copy Measurements}

In \cref{sec:learning_weyl_bell}, we gave an algorithm to learn a subspace $S$ satisfying the conditions of \cref{thm:learn-weyl-psi-reduction} using entangled measurements. 
Here, we present an algorithm that learns such a subspace $S$ using only single-copy measurements. We begin by stating the algorithm.

\begin{algorithm}[H]
\caption{Approximating $\weyl(\ket \psi)$ using single-copy measurements}\label{alg:learning_weyl_single}
\SetKwInOut{Promise}{Promise}
\KwInput{Black-box access to copies of $\ket\psi$ and $\eps, \delta \in (0,1)$}
\Promise{$\ket\psi$ has stabilizer dimension at least $n-t$}
\KwOutput{A subspace $S \subset \F_2^{2n}$ of dimension at least $n-t$ such that $\sum_{x \in S} p_\psi(x) \geq \left(1-\frac{\eps^2}{4}\right)\frac{\abs{S}}{2^n}$ with probability at least $1 - \delta$}

Set $\mclifford \gets 2\left(2^{t+1}+1\right)\left(n + \log(2/\delta)\right)$.

\For{$i \gets 0$ \KwTo $\mclifford$}{

Use the algorithm in \cref{prop:random-clifford-sample} to sample a random Clifford circuit $C_i$, and apply $C_i$ to $\frac{32n}{\eps^2}\left(n + \log(2 \mclifford/\delta)\right)$ copies of $\ket{\psi}$. Define $\ket{\psi_i} \coloneqq C_i \ket\psi$.
\label{step:one}

Computational difference sample $\ket{\psi_i}$ to draw $\frac{16n}{\eps^2}\left(n + \log(2 \mclifford/\delta)\right)$ samples from $r_{\psi_i}$.\label{step:two} 

Let $\hat{H}_i \subseteq \Xs$ be the subspace spanned by the computational difference samples. 
Use Gaussian elimination to compute generators of $\hat{H}_i$. 
\label{step:three}
}

Using \cref{lemma:compute-symplectic-complement}, compute $S \coloneqq \sum_{i=1}^{m_\mathsf{Clifford}} C_i^\dagger \left((\hat{H}_i + \Zs)^\sympcomp \right)$ as the sum of subspaces over $\F_2^{2n}$.\footnotemark\label{step:four}

\Return $S$
\end{algorithm}
\footnotetext{By cleverly using the regular orthogonal complement over $\F_2^n$, computing the symplectic complement of subspaces of the form $H + \calZ$ will offer a computational savings by a multiplicative factor of about $4$ over na\"ively using \cref{lemma:compute-symplectic-complement}. This is because, if we let $\overline{H}$ be the projection of $H$ onto $\F_2^n$ obtained by ignoring the last $n$ coordinates, and let $\overline{G}$ be the orthogonal complement of $\overline{H}$, then $(H + \calZ)^\sympcomp = \left(\overline{H} \times \F_2^n\right)^\sympcomp = 0^n \times \overline{G}$.
}
\vspace{\baselineskip}

Let us briefly explain the workings of the algorithm at an intuitive level.
The idea is to use computational difference sampling to learn about $\weyl(\ket \psi)$ via the support of $r_\psi$.
\Cref{cor:comp-diff-sampling} says that 
computational difference sampling allows us to learn the generators of $\weyl (\ket \psi)$ that are also contained in $\calZ$.
Of course, unless $\weyl(\ket \psi) \subseteq \calZ$, we will not learn a complete set of generators for $\weyl(\ket \psi)$. 
However, if we apply a random Clifford circuit $C$ to $\ket \psi$, then $\weyl(C \ket \psi) \cap \calZ$ will be nontrivial with some probability, and therefore reveal additional generators of $\weyl(\ket \psi)$. 
We repeat this process many times until we are confident that we have learned a complete set of generators for $\weyl(\ket\psi)$. 
Doing so requires more samples when $\weyl(\ket \psi)$ is small, because then $\weyl(C \ket \psi) \cap \calZ$ is more likely to be empty---this is why $\mclifford$ grows exponentially in $t$.

In practice, we will not be able to guarantee that we exactly learn the support of $r_\psi$.
This also means that we cannot guarantee that we learn $\weyl(\ket \psi)$ exactly.
Instead, we show that the output of \cref{alg:learning_weyl_single} satisfies the conditions in \cref{thm:learn-weyl-psi-reduction}.

\subsection{The Output Has Large Dimension}\label{ssec:large-dim}

The goal of this section is to show that the output of \cref{alg:learning_weyl_single} has dimension at least $n-t$. Since $\weyl(\ket \psi)$ naturally satisfies this property by assumption, if we can show that the output of \cref{alg:learning_weyl_single} is a superset of $\weyl(\ket \psi)$, then this output must also have large dimension.

We start by showing that the probability of learning a new element of $\weyl(\ket \psi)$ (should one exist) is not \emph{too} difficult when a uniformly random Clifford circuit is applied. In what follows, think of $A$ as $\weyl(\ket \psi)$, $T$ as the subspace of $A$ learned so far, and $B$ as the set of Pauli-$Z$ operators.

\begin{lemma}
\label{lem:sample-in-I-Z}
    Let $A$ be an isotropic subspace of dimension $n - k$, let $T \subset A$ be a proper subspace of $A$, let $B$ be a fixed Lagrangian subspace (e.g.\ $\Zs$), and let $C$ be a uniformly random Clifford circuit. Then
    \[
    \Pr[C(A \setminus T) \cap B \neq \emptyset] \ge \frac{1}{2^{k+1} + 1}.
    \]
\end{lemma}

\begin{proof}
    Without loss of generality, assume that $T$ has dimension $n - k - 1$, because the probability we want to bound can only be larger for smaller subspaces $T$. Write $A = \langle x \rangle + T$ for some generator $x \in T^{\sympcomp}$, so that $A \setminus T$ is equivalent to the coset $x + T$. Observe that
    \begin{align*}
        \Pr[C(A \setminus T) \cap B \neq \emptyset] &= \Pr[C(x + T) \cap B \neq \emptyset]\\
        &= \Pr[C(x) \in C(T) + B].
    \end{align*}
    We complete the proof by computing this quantity.
    We can think of sampling $C(T)$ and $C(x)$ by first choosing $C(T)$, and then sampling $C(x)$ conditioned on $C(T)$. Observe that in doing so, conditioned on a choice of $C(T)$, $C(x)$ is uniformly distributed in $C(T^\sympcomp \setminus T)$. As a result, it holds that
    \begin{align*}
        \Pr[C(x) \in C(T) + B \mid C(T)] &=  \frac{\abs{\left(C(T) + B\right) \cap C\left(T^\sympcomp \setminus T\right)}}{\abs{C\left(T^\sympcomp \setminus T\right)}}\\
        &= \frac{\abs{\left(C(T) + B\right) \cap C\left(T^\sympcomp\right)} - \abs{C(T)}}{\abs{C\left(T^\sympcomp\right)} - \abs{C(T)}}\\
        &= \frac{\abs{\left(C(T) + B\right) \cap C\left(T\right)^\sympcomp} - \abs{C(T)}}{\abs{C\left(T^\sympcomp\right)} - \abs{C(T)}}\\
        &= \frac{2^n - 2^{n-k-1}}{2^{n+k+1} - 2^{n-k-1}} && (\text{\cref{cor:A-plus-B-sympcomp-Lagrangian}})\\
        &= \frac{1}{2^{k+1} + 1}.
    \end{align*}
    Because this holds for every choice of $C(T)$, the lemma follows.
\end{proof}

Now that we have an upper bound on the difficulty of learning a new generator of $\weyl(\ket \psi)$, we can analyze the number of Clifford circuits we need in total to guarantee that the output of \cref{alg:learning_weyl_single} is a superset of $\weyl(\ket \psi)$. The analysis is very similar to \cref{lem:sampling}, though we are no longer interested in approximating the support of a distribution.

\begin{lemma}\label{lem:num-random-clifford-basis}
    Let $\ket \psi$ be a quantum state with stabilizer dimension $n-k$. Consider sampling $m$ random Clifford circuits $C_1, \ldots, C_m$, and let $S_i = \weyl(\ket \psi) \cap C_i^\dagger (\Zs)$. If
    \[
    m \ge 2\left(2^{k+1}+1\right)\left(n + \log(1/\delta)\right),
    \]
    then with probability at least $1 - \delta$, $\sum_{i=1}^m S_i = \weyl(\ket \psi)$.

\end{lemma}
\begin{proof}
Let $T_i = \sum_{j=1}^i S_j$, with the convention that we define $T_0$ to be the trivial subspace. Define the indicator random variable $X_i$ as
    
    \[
    X_i = \begin{cases}
    1 & S_i \not\subseteq  T_{i-1} \textrm{ or } T_{i-1} \supseteq \weyl(\ket \psi) \\
    0 & \textrm{otherwise}.
    \end{cases}
    \]

    Informally, $X_i = 1$ indicates a step at which the algorithm has made progress towards sampling a complete set of generators for $\weyl(\ket \psi)$ as part of $T_m$. It suffices to show that with probability $1 - \delta$, $\sum_{i=1}^{m} X_i \geq n$, as this guarantees that $T_{m} \supseteq \weyl(\ket \psi)$, because $\weyl(\ket \psi)$ has dimension at most $n$ by assumption.

    Let $c = 2^{k+1} + 1$.
    If $T_{i-1} \supseteq \weyl(\ket \psi)$ then $\Ex\left[X_i\right] = 1 \geq \frac{1}{c}$.
    Otherwise, $T_{i-1}$ is missing some member of $\weyl(\ket \psi)$, such that $T_{i-1} \cap \weyl(\ket \psi)$ has dimension at most $n-(k+1)$.
    By \cref{lem:sample-in-I-Z} with $A = \weyl(\ket \psi)$, $T = T_{i-1}$, and $B = \Zs$, 
    \begin{align*}
        \Ex\left[X_i \mid T_{i-1} \not\supseteq \weyl(\ket \psi)\right]
        &= 
        \Pr\left[A \cap C^\dagger(B) \not\subseteq T
        \mid T \subset A
        \right]\\
        &= 
        \Pr\left[(A \setminus T) \cap C^\dagger(B) \neq \emptyset
        \mid T \subset A
        \right]
        \\
        &= 
        \Pr\left[C(A \setminus T) \cap B \neq \emptyset
        \mid T \subset A
        \right]\\
        &\geq \frac{1}{1+2^{k+1}}\\
        &= \frac{1}{c}.
    \end{align*}
    So, $\Ex\left[X_i \right] \geq \frac{1}{c}$ regardless of what $T_i$ is.
    
    Let $\gamma = 1 - \frac{n}{c m}$.
    Then, by the multiplicative Chernoff bound (\cref{fact:chernoff}), we have
    \begin{align}
    \Pr\left[\sum_{i=1}^{m} X_i < n \right]
    &= \Pr\left[\sum_{i=1}^{m} X_i < (1 - \gamma) c m \right]\nonumber\\
    &\leq \exp\left(- \gamma^2 \frac{cm}{2}\right)\nonumber\\
    &= \exp\left(- \left(1 - \frac{2n}{cm} + \frac{n^2}{c^2m^2}\right) \frac{cm}{2}\right)\nonumber\\
    &\le \exp\left(- \left(1 - \frac{2 n}{cm}\right) \frac{cm}{2}\right)\nonumber\\
    &= \exp\left(n - \frac{cm}{2}\right)\label{eq:chernoff_estimate}
    \end{align}
    %\fussy
    Hence, choosing
    \[
    m \ge \frac{2}{c}\left(n + \log\frac{1}{\delta}\right) = 2\left(2^{k+1}+1\right)\left(n + \log\frac{1}{\delta}\right)
    \]
    suffices to guarantee that \cref{eq:chernoff_estimate} is at most $\delta$.\qedhere
\end{proof}

\subsection{The Output Has Large \texorpdfstring{$p_\psi$}{p}-Mass}\label{ssec:large-p-mass}
In this subsection, we argue that the subspace output by \cref{alg:learning_weyl_single} satisfies \Cref{item:approx-weyl} of \cref{thm:learn-weyl-psi-reduction} (i.e., it accounts for a near-maximal $p_\psi$-mass).
In \cref{step:four}, we take the span of subspaces, with the hope that each one learns the support of $r_{\psi_i}$ exactly.
This would in turn allow us to \emph{only} learn elements of $\weyl(\ket \psi)$ via \cref{cor:comp-diff-sampling}.
However, it is possible that we will fail to do so in practice. For example, if $\ket{\psi_i}$ is close to (but not exactly) a computational basis state, then the support of $r_{\psi_i}$ is nontrivial, even though computational difference sampling will almost always sample the identity Weyl operator.

Nevertheless, we argue that even if we do not learn the support of $r_{\psi_i}$ exactly, we can still succeed.
In particular, \cref{cor:r-mass-vs-p-mass-duality} also tells us that learning a large fraction of the $r_{\psi_i}$-mass corresponds to learning a subspace that has near maximal $p_\psi$-mass (i.e., even though we pick up elements outside of $\weyl(\ket \psi)$, they still have large $p_\psi$ value on average).
Therefore, for \cref{step:four} to work, we need to show that the sums of subspaces of $\F_2^{2n}$ with near-maximal $p_\psi$-mass form a (potentially) larger subspace that is still near-maximal. The next lemma establishes this.

\begin{lemma}\label{cor:subspace-addition-p-mass}
    Let $S_1, \dots, S_k$ be subspaces of $\F_2^{2n}$ such that $\sum_{x \in S_i} p_\psi(x) \geq (1-\eps_i)\frac{\abs{S_i}}{2^n}$ for all $S_i$.
    Then $S = \sum_{i=1}^k S_i$ satisfies
    \[
        \sum_{x \in S} p_\psi(x) \geq \left(1-\sum_{i=1}^k \eps_i\right)\frac{\abs{S}}{2^n}.
    \]
\end{lemma}
\begin{proof}
    We proceed via induction on $k$. The base case $k=1$ follows by assumption. Now, take $T = \sum_{i=1}^{k-1}S_i$ such that
    \[
        \sum_{x \in T} p_\psi(x) \geq \left(1 - \sum_{i=1}^{k-1} \eps_i\right) \frac{\abs{T}}{2^n}
    \]
    by the inductive hypothesis. Then,
    \begin{align*}
        \sum_{x \in S} p_\psi(x) &= \frac{\abs{T+S_k}}{2^n} \sum_{x \in (T+S_k)^\sympcomp} p_\psi(x) && (\text{\cref{thm:p_q_duality}})\\
        &= \frac{\abs{T+S_k}}{2^n} \sum_{x \in T^\sympcomp \cap S_k^\sympcomp} p_\psi(x) && (\text{\cref{lem:A-plus-B-sympcomp-identity}})\\
        &= \frac{\abs{T+S_k}}{2^n} \left( \sum_{x \in T^\sympcomp} p_\psi(x) + \sum_{x \in S_k^\sympcomp}p_\psi(x) -\!\! \sum_{x \in T^\sympcomp \cup S_k^\sympcomp}p_\psi(x)\right) && (\text{Inclusion-Exclusion})\\
        &\geq \frac{\abs{T+S_k}}{2^n} \left( \frac{2^n}{\abs{S}}\sum_{x \in T} p_\psi(x) + \frac{2^n}{\abs{S_k}}\sum_{x \in S_k}p_\psi(x) - 1\right)\\
        &=  \left(1- \eps_k -\sum_{i=1}^{k-1} \eps_i\right)\frac{\abs{T+S_k}}{2^n} = \left(1 -\sum_{i=1}^{k} \eps_i\right)\frac{\abs{S}}{2^n}.&&\qedhere
    \end{align*}
\end{proof}

Let $m$ be the number of Clifford circuits needed from \cref{ssec:large-dim}.
For \cref{alg:learning_weyl_single} to fulfill \Cref{item:approx-weyl} of \cref{thm:learn-weyl-psi-reduction}, we need each of the $\hat{H}_i$ from \cref{step:three} to only capture $x \in \F_2^{2n}$ such that the average squared expectation, i.e.,
\[
\frac{2^n}{\abs{\hat{H}_i}}\sum_{x \in \hat{H}_i}p_\psi(x) = \E_{x \sim \hat{H}_i} \left[\braket{\psi|W_x|\psi}^2\right],
\]
is very close to $1$.
Using \cref{cor:subspace-addition-p-mass}, one might na\"ively expect to require an average squared expectation of $1-\frac{\eps}{m}$ for each $\hat{H}_i$.
However, we can provide a slight optimization by noting that, since all of the subspaces lie in $\F_2^{2n}$, we only need to ever consider at most $2n$ subspaces when \cref{cor:subspace-addition-p-mass} is applied.
As a result, we only require that the average squared expectation is at least $1-\frac{\eps}{2n}$ for each $\hat{H}_i$.

\begin{lemma}
\label{lem:sum-subspace-over-2n-dim}
Let $S_1, \dots, S_m$ denote $m$ (potentially non-unique) subspaces of $\F_2^{2n}$ and 
let $S = \sum_{i=1}^m S_i$. Suppose that for some $\eps > 0$ and all $S_i$,
\[
\sum_{x \in S_i} p_\psi(x) \geq \left(1 - \frac{\eps}{2n}\right)\frac{\abs{S_i}}{2^n}.
\]
Then, 
\[
\sum_{x \in S} p_\psi(x)  \geq \left(1 - \eps\right)\frac{\abs{S}}{2^n}.
\]
\end{lemma}
\begin{proof}
    If $m \leq 2n$ then we can directly use \cref{cor:subspace-addition-p-mass} (with $k \coloneqq m$) and we are done.
    Otherwise, also by \cref{cor:subspace-addition-p-mass} (with $k \coloneqq 2n$), for every subset $I \subseteq [m]$ of $2n$ indices (i.e., $\abs{I} = 2n$), \[ \sum_{x \in S_I} p_\psi(x) \geq \left(1 - \sum_{i \in I}\frac{\eps}{2n}\right)\frac{\abs{S_I}}{2^n} = \left(1 - \eps\right)\frac{\abs{S_I}}{2^n},\]
    where $S_I \coloneqq  \sum_{i \in I} S_i$.
    Because all of these subspaces lie in $\F_2^{2n}$, we note that $S = S_I$ for some $I \subseteq [m]$ of cardinality $2n$.\footnote{A more careful analysis involving \cref{lem:p-mass-isotropic} reveals that $S$ must be isotropic for $\eps < \frac{1}{4}$, and therefore $\dim(S) \leq n$. Thus $I$ can be taken to have cardinality $n$ rather than $2n$. Since $\eps$ is always less than $\frac{1}{4}$ in \cref{alg:learning_weyl_single}, this halves the number of samples needed in \cref{step:two} and, therefore, \cref{alg:learning_weyl_single} as a whole. For the sake of clarity, we do not go through this analysis.}
    Therefore, it must also be the case that
    \[
        \sum_{x \in S} p_\psi(x)  \geq \left(1 - \eps\right)\frac{\abs{S}}{2^n}. \qedhere
    \]
\end{proof}

Using \cref{cor:r-mass-vs-p-mass-duality} and \cref{fact:clifford-permutes-p-mass}, we can translate \cref{lem:sum-subspace-over-2n-dim} into something more closely related to analyzing \cref{alg:learning_weyl_single}.

\begin{proposition}\label{prop:large-p-mass}
Let $C_1, \dots, C_m$ denote $m$ (potentially non-unique) $n$-qubit Clifford circuits and 
let $\ket{\psi_i} \coloneqq C_i \ket \psi$ for some fixed $n$-qubit quantum state $\ket \psi$. Given subspaces $H_1,\ldots,H_m \subseteq \Xs$, suppose that for some $\eps > 0$ and for all $H_i$,
\[
\sum_{x \in H_i} r_{\psi_i}(x) \geq 1 - \frac{\eps}{2n}.
\]
Define $S \coloneqq \sum_{i=1}^{m} C_i^\dagger \left((H_i + \Zs)^\sympcomp \right)$. Then
\[
\sum_{x \in S} p_\psi(x)  \geq \left(1 - \eps\right)\frac{\abs{S}}{2^n}.
\]
\end{proposition}
\begin{proof}
For convenience, define $S_i = C_i^\dagger \left((H_i + \Zs)^\sympcomp \right)$. For each $S_i$, we have
\begin{align*}
    \sum_{x \in S_i} p_{\psi}(x) &=
    \sum_{x \in C^\dagger\left((H_i + \Zs)^\sympcomp\right)} p_{\psi}(x)\\
    &= \sum_{x \in (H_i + \Zs)^\sympcomp} p_{\psi_i}(x) && (\text{\cref{fact:clifford-permutes-p-mass}})\\
    &= \frac{2^n}{\abs{H_i + \Zs}}\sum_{x \in H_i}r_{\psi_i}(x) && (\text{\cref{cor:r-mass-vs-p-mass-duality}})\\ 
    &= \frac{\abs{(H_i + \Zs)^\sympcomp}}{2^n} \sum_{x \in H_i}r_{\psi_i}(x) && (\text{\cref{fact:sympcomp}})\\
    &\geq \left(1 - \frac{\eps}{2n}\right)\frac{|(H_i + \Zs)^\sympcomp|}{2^n}\\
    &= \left(1 - \frac{\eps}{2n}\right)\frac{|S_i|}{2^n},
\end{align*}
    using in the last line the fact that the Clifford action on subspaces preserves dimension.
    Finally, \cref{lem:sum-subspace-over-2n-dim} tells us that
    \[
        \sum_{x \in S} p_\psi(x) \geq \left(1 - \eps\right)\frac{\abs{S}}{2^n}
    \]
    as desired.
\end{proof}

\subsection{Putting It All Together}

We now have all of the tools to show that \cref{alg:learning_weyl_single} succeeds at learning a subspace that satisfies \Cref{item:dimension,item:approx-weyl} of \cref{thm:learn-weyl-psi-reduction}.

\begin{theorem}\label{thm:single-copy-main}
    Let $\ket{\psi}$ be an $n$-qubit quantum state with stabilizer dimension at least $n-t$.
    Given copies of $\ket{\psi}$ as input, \cref{alg:learning_weyl_single} succeeds at outputting a subspace $S$ that satisfies the conditions of \cref{thm:learn-weyl-psi-reduction} with probability at least $1-\delta$.
    The algorithm uses
    \[
        O\left(\frac{n\left(n + \log(1/\delta)\right)^2 2^t}{\eps^2}\right)
    \]
    samples and
    \[
        O\left(\frac{n^3\left(n + \log(1/\delta)\right)^2 2^t}{\eps^2}\right)
    \]
    time.
\end{theorem}
\begin{proof}
    We will first show that \Cref{item:dimension} is met.
    By setting $\mclifford = 2\left(2^{t+1}+1\right)\left(n + \log(2/\delta)\right)$, \cref{lem:num-random-clifford-basis} tells us that the subspaces $S_i = \weyl(\ket \psi) \cap C_i^\dagger(\Zs)$ span $\weyl(\ket \psi)$, except with with failure probability at most $\delta/2$. Assume that we succeed. Then \Cref{cor:comp-diff-sampling} shows that we are guaranteed to only sample $y \sim r_{\psi_i}$ such that $y \in H_i$, where
    \[
    H_i \coloneqq \left(\weyl(\ket{\psi_i}) \cap \Zs\right)^\sympcomp \cap \Xs
    \]
    Hence, $\hat{H_i} \subseteq H_i$, and therefore
    \begin{align*}
        \left(\hat{H}_i + \Zs \right)^\sympcomp &\supseteq \left(H_i + \Zs \right)^\sympcomp\\
        &= H_i^\sympcomp \cap \calZ  && (\mathrm{\cref{lem:A-plus-B-sympcomp-identity}})\\
        &=  \left(\left(\weyl(\ket{\psi_i}) \cap \Zs\right) + \calX\right) \cap \calZ && (\mathrm{\cref{lem:A-plus-B-sympcomp-identity}})\\
        &= (\calX \cap \calZ) + \left(\weyl(\ket \psi_i) \cap \Zs\right) && (\mathrm{\cref{lem:A-plus-B-sympcomp-identity-2}})\\
        &= \weyl(\ket{\psi_i}) \cap \Zs\\
        &= C(S_i).
    \end{align*}
    By the end of the $\mclifford$ random Clifford circuits, it follows that
    \[
    S = \sum_{i=1}^{\mclifford} C_i^\dagger \left((\hat{H}_i + \Zs)^\sympcomp \right) \supseteq \sum_{i=1}^{\mclifford} S_i = \weyl(\ket \psi)
    \]
    and therefore $\dim(S) \geq \dim\left(\weyl(\ket \psi)\right) \geq n-t$.
    
    Moving on to \Cref{item:approx-weyl}, \cref{lem:sampling} shows that if we draw $\frac{16n}{\eps^2}\left(n + \log(2 \mclifford/\delta)\right)$ computational difference samples per $\ket{\psi_i}$ then
    \begin{align}\label{eq:inner-loop-condition}
        \sum_{x \in \hat{H}_i} r_{\psi_i}(x) \geq 1 - \frac{\eps^2}{8n}
    \end{align}
    for all $\hat{H}_i$ with failure probability at most $\delta/2$.
    Again, assume that we succeed.
    By \cref{prop:large-p-mass}, it follows that
    \[
        \sum_{x \in S} p_\psi(x) \geq \left(1 - \eps^2/4\right)\frac{\abs{S}}{2^n},
    \]
    thus satisfying \Cref{item:approx-weyl}.
    
    By the union bound, this means we satisfy \Cref{item:dimension,item:approx-weyl} with probability at least $1-\delta$.

    We next turn to the sample complexity analysis.
    For simplicity, let $\mcomp$ be the number of computational difference samples taken in each part of the inner loop.
    Due to the fact that $\log\left(\mclifford\right) = O\left(t + \log n + \frac{\log(2/\delta)}{n}\right)$, we find that
    \[\mcomp \coloneqq \frac{16n}{\eps^2}\left(n + \log(2\mclifford/\delta)\right) 
    = O\left(\frac{n\left(n + \log(1/\delta)\right)}{\eps^2}\right),
    \]
    which will greatly simplify our asymptotic analysis.
    Since a computational difference sample involves $2$ copies of $\ket \psi$, the total number of copies of $\ket \psi$ needed is
    \[
        2 \mclifford\mcomp= O\left(\frac{n\left(n + \log(1/\delta)\right)^2 2^t}{\eps^2}\right)
    \]

    To analyze the runtime, we will first analyze the inner loop of \cref{alg:learning_weyl_single}.
    By \cref{prop:random-clifford-sample}, we require $O(n^2)$ time to sample a random Clifford circuit in the first part of \cref{step:one}.
    Applying $C_i$ to $2 \mcomp$ copies of $\ket \psi$ in the second half \cref{step:one} requires $O(\mcomp n^2)$ time.
    In \cref{step:two}, we require $O(\mcomp n)$ time to make $O(\mcomp)$ many $n$-qubit measurements, then apply the difference to pairs of measurements.
    Finally, it costs $O(\mcomp n^2)$ time to run \cref{lemma:compute-symplectic-complement} in \cref{step:three}.
    Thus, the inner loop requires time $O(\mcomp n^2)$, which becomes
    \[
        O(\mclifford \mcomp n^2) = O\left(\frac{n^3\left(n + \log(1/\delta)\right)^2 2^t}{\eps^2}\right)
    \]
    over all iterations of the inner loop.
    This will turn out to be the dominating term.
    
    For completeness, we will analyze \cref{step:four}.
    Given the generators of $\hat{H}_i$, it takes $O(n^3)$ time to compute $\left(\hat{H}_i + \calZ\right)^\sympcomp$ using \cref{lemma:compute-symplectic-complement}. By the size of the Clifford circuit from \cref{prop:random-clifford-sample}, we also need $O(n^3)$ time to apply $C_i^\dagger$,\footnote{Since we are now classically simulating the action of $C_i^\dagger$ on a subspace of $\F_2^{2n}$ as part of our classical post-processing, it will require an extra $O(n)$ time per gate to apply $C_i^\dagger$.} for a total time of $O\left(\mclifford \cdot n^3\right)$ to compute generators of each $C_i^\dagger\left(\hat{H}_i + \calZ\right)^\sympcomp$.
    All that is left is to compute $\sum C_i^\dagger\left(\hat{H}_i + \calZ\right)^\sympcomp$.
    Since each $C_i^\dagger\left(\hat{H}_i + \calZ\right)^\sympcomp$ can be described by at most $2n$ generators, the generators of $S$ can be computed by running Gaussian elimination on a $2n\mclifford \times 2n$ matrix over $\F_2$.
    This also takes $O(\mclifford \cdot n^3) = O\left(n^3 \cdot 2^t \cdot \left(n + \log(1/\delta)\right)\right)$ time. \qedhere
\end{proof}

Combining with \cref{alg:reduction} gives us the following.

\begin{corollary}\label{cor:single-copy-main}
    Let $\ket{\psi}$ be an $n$-qubit quantum state with stabilizer dimension at least $n-t$.
    Given copies of $\ket{\psi}$ as input, the combination of \cref{alg:learning_weyl_single,alg:reduction} output a classical description of $\ket{\hat{\psi}}$ such that $\tracedistance{\ket{\psi}, \ket{\hat{\psi}}} \leq \eps$ with probability at least $1-\delta$.
    The algorithm uses
    \[
    O\left(\frac{n\left(n + \log(1/\delta)\right)^2 2^t}{\eps^2} + N_{t, \frac{\eps}{2}, \frac{\delta}{6}}\right)
    \]
    single-copy samples and
    \[
    O\left(\frac{n^3\left(n + \log(1/\delta)\right)^2 2^t}{\eps^2} + n^2 \cdot N_{t, \frac{\eps}{2}, \frac{\delta}{6}} \right) + M_{t, \frac{\eps}{2}, \frac{\delta}{6}},
    \]
    time.
\end{corollary}
\begin{proof}
    By \cref{thm:learn-weyl-psi-reduction,thm:single-copy-main}, each algorithm succeeds with probability at least $1-\delta/2$. By the union bound, the total failure probability is at most $\delta$.
\end{proof}

\section{Open Problems}

% \paragraph{}
Given $\ket{\psi}$, a state that can be prepared with Clifford gates and $O(\log n )$ non-Clifford gates, our tomography algorithms efficiently learn a classical description of $\ket{\psi}$. However, our algorithms are not \textit{proper} learners---in other words, the circuit that the algorithms output to approximate the state does not necessarily decompose into few non-Clifford gates.
\begin{question}\label{question:find-circuit}
Given copies of an $n$-qubit state $\ket\psi$ that is the output of a Clifford circuit and $O(\log n)$ non-Clifford gates, is it possible to efficiently construct a circuit $U$ comprised of Clifford and $O(\log n)$ non-Clifford gates such that $U \ket{0^n}$ is $\eps$-close to $\ket\psi$?
\end{question}

We note that an efficient proper learning algorithm for stabilizer states is known (namely, just run \cref{alg:reduction} with either \cref{alg:learning_weyl_bell} or \cref{alg:learning_weyl_single} and $t=0$).  
We also note that this task becomes trivial if polynomially many non-Clifford gates are allowed in $U$ because one can use \cref{alg:reduction} to produce $C$, $\ket{x}$, and $\ket{\varphi}$, and then construct a circuit with at most $2^{O(\log n)} = \poly(n)$ general gates that outputs $\ket{\varphi}$ \cite{SBM06-synthesis}.

% \paragraph{}
As a subroutine, \cref{alg:reduction} uses pure state tomography to recover a classical description of a pure state, and therefore the copy and time complexities of our algorithm can be improved if faster pure state tomography algorithms are developed. 
Developing a pure state tomography algorithm that achieves the optimal $O(2^n/\eps^2)$ copy and time complexities is an interesting and important direction for future work. 

% \paragraph{}
Observe that \cref{alg:learning_weyl_bell} learns an approximation to $\weyl(\ket \psi)$ using only $O(n)$ two-copy measurements, independent of the size of $\weyl(\ket \psi)$. By contrast, \Cref{alg:learning_weyl_single} requires $\poly(n) \exp(t)$ samples to do the same when $\weyl(\ket \psi)$ has dimension $n - t$. This raises the natural question of whether an exponential dependence on $t$ is necessary using unentangled measurements:

\begin{question}
\label{q:single_copy_separation}
    Are $\exp(t)$ many single-copy measurements necessary to learn a subspace $S$ that satisfies the conditions of \cref{thm:learn-weyl-psi-reduction}?
\end{question}
Note that analogous exponential separations between entangled and single-copy measurements have been shown in other learning contexts \cite{Huang2021information,Chen2022exponential}. Moreover, these separations hold for closely-related learning tasks that may be viewed as identifying the large coefficients of the Weyl expansion (\cref{def:weyl_expansion}), but for mixed states. For this reason, we are optimistic that the techniques of \cite{Chen2022exponential} could be adapted to answer \cref{q:single_copy_separation} in the affirmative.

% \paragraph{}
Finally, stabilizer dimension is a rather rigid notion, and the property of high stabilizer dimension excludes states that could be considered \say{near-stabilizer.} Could one extend the techniques in this paper and \cite{grewal2023improved} to states of high \emph{approximate} stabilizer dimension: that is, states which have high expectation with a large subgroup of Pauli operators?

\section*{Acknowledgements}
We thank Scott Aaronson, Richard Kueng, Marcel Hinsche, Nick Hunter-Jones, Luowen Qian, and John Wright for helpful conversations.

SG, VI, DL are supported by Scott Aaronson's Vannevar Bush Fellowship from the US Department of Defense, NSF QLCI Award OMA-2016245, a Simons Investigator Award, and the Simons ``It from Qubit'' collaboration. WK is supported by an NDSEG Fellowship and acknowledges support from the U.S.\ Department of Energy, Office of Science, National
Quantum Information Science Research Centers, Quantum Systems Accelerator. DL is also supported by NSF award FET-2243659. VI is also supported by the U.S. Department of Energy, Office of Science, Office of Advanced Scientific Computing Research, Accelerated Research in Quantum Computing and an NSF Graduate Research Fellowship.

This work was done in part while SG, VI, and DL were visiting the Simons Institute for the Theory of Computing.

This article has been authored by an employee of National Technology \& Engineering Solutions of Sandia, LLC under Contract No. DE-NA0003525 with the U.S. Department of Energy (DOE). The employee owns all right, title and interest in and to the article and is solely responsible for its contents. The United States Government retains and the publisher, by accepting the article for publication, acknowledges that the United States Government retains a non-exclusive, paid-up, irrevocable, world-wide license to publish or reproduce the published form of this article or allow others to do so, for United States Government purposes. The DOE will provide public access to these results of federally sponsored research in accordance with the DOE Public Access Plan \href{https://www.energy.gov/downloads/doe-public-access-plan}{https://www.energy.gov/downloads/doe-public-access-plan}.

\bibliographystyle{alphaurl}
\bibliography{refs}

\appendix

% \section{Single-Copy 
% Speed-up Using an Extra Round of Adaptivity}
\section{A Faster Algorithm From One Round of Adaptivity}
\label{sec:appendix}
Both \cref{alg:learning_weyl_bell,alg:learning_weyl_single} are non-adaptive, in the sense that the measurements performed on $\ket{\psi}$ are independent of the outcomes of previous measurements, and so they can be measured in parallel.
In this appendix, we explain how to improve both the sample and time complexities of \cref{alg:learning_weyl_single}---the algorithm that only uses single-copy measurements---by using one round of adaptivity.\footnote{A similar modification can be done to \cref{alg:learning_weyl_bell}. However, the modification will not yield any asymptotic improvements on the sample and time complexities, like it will for our modification to \cref{alg:learning_weyl_single}.}  
Specifically, our modified algorithm will have two phases, where the second phase depends on the measurement outcomes of the first phase, while continuing to only use single-copy measurements. We credit the concurrent work by Chia, Lai, and Lin \cite{Chia2023} for the idea behind this modification.

The main idea of the algorithm is reasonably simple: suppose we knew an isotropic subspace $A$ that contains $\weyl(\ket \psi)$. Then by \cref{lem:clifford-mapping-algorithm}, we could find a Clifford circuit $C$ such that $\weyl(C\ket \psi) \subseteq \Zs$, by mapping $A$ to a subspace of $\Zs$. But notice that if all of the stabilizers of $C\ket \psi$ are contained in $\Zs$, then we can approximately learn $\weyl(C\ket \psi) = C\left(\weyl (\ket \psi)\right)$ from computational difference samples of $C\ket \psi$, as a consequence of \cref{cor:r-mass-vs-p-mass-duality}.

\textit{A priori}, it is not clear what this approach buys us, because learning an isotropic subspace $A$ that contains $\weyl(\ket \psi)$ could be roughly as hard as learning a subspace $S$ that satisfies the conditions of \cref{thm:learn-weyl-psi-reduction}. As we shall see, the advantage is that in \cref{step:one} of \cref{alg:learning_weyl_single}, we can get away with taking a number of computational difference samples that scales linearly in $n$, rather than quadratically in $n/\eps$.

We are now ready to state the algorithm.

\begin{algorithm}[H]
\caption{Approximating $\weyl(\ket \psi)$ using adaptive single-copy measurements}\label{alg:learning_weyl_single_adaptive}
\SetKwInOut{Promise}{Promise}
\KwInput{Black-box access to copies of $\ket\psi$ and $\eps, \delta \in (0,1)$}
\Promise{$\ket\psi$ has stabilizer dimension at least $n-t$}
\KwOutput{A subspace $S \subset \F_2^{2n}$ of dimension at least $n-t$ such that $\sum_{x \in S} p_\psi(x) \geq \left(1-\frac{\eps^2}{4}\right)\frac{\abs{S}}{2^n}$ with probability at least $1 - \delta$}

Set $\mclifford \gets 2\left(2^{t+1}+1\right)\left(n + \log(3/\delta)\right)$.
\label{step:one-2}

\For{$i \gets 0$ \KwTo $\mclifford$}{
\label{step:two-2}

Use the algorithm in \cref{prop:random-clifford-sample} to sample a random Clifford circuit $C_i$, and apply $C_i$ to $16\left(n + \log(3 \mclifford/\delta)\right)+2$ copies of $\ket{\psi}$. Define $\ket{\psi_i} \coloneqq C_i \ket\psi$.
\label{step:three-2}

Computational difference sample $\ket{\psi_i}$ to draw $8\left(n + \log(3 \mclifford/\delta)\right)+1$ samples from $r_{\psi_i}$.\label{step:four-2} 

Let $\hat{H}_i \subseteq \Xs$ be the subspace spanned by the computational difference samples. 
Use Gaussian elimination to compute generators of $\hat{H}_i$. 
\label{step:five-2}
}

Using \cref{lemma:compute-symplectic-complement}, compute $\hat{A} \coloneqq \sum_{i=1}^{\mclifford} C_i^\dagger \left((\hat{H}_i + \Zs)^\sympcomp \right)$.\footnotemark
\label{step:six-2}

Apply \cref{lem:clifford-mapping-algorithm} to find the Clifford circuit $C_{\Zs}$  that maps $\hat{A}$ to a subspace of $\calZ$.\label{step:seven-2}

Obtain $\frac{16\left(n + \log(3/\delta)\right)}{\eps^2}$ copies of $\ket{\psi^\prime} \coloneqq C_{\Zs} \ket \psi$.
\label{step:eight-2}

Computational difference sample all copies of $\ket{\psi^\prime}$ to obtain $\frac{8\left(n + \log(3/\delta)\right)}{\eps^2}$ samples from $r_{\psi^\prime}$.
\label{step:nine-2}

Let $\hat{H} \subseteq \calX$ be the subspace spanned by the computational difference samples. Use \cref{lemma:compute-symplectic-complement} to compute generators of $\left(\hat{H} + \calZ\right)^\sympcomp$.
\label{step:ten-2}

\Return $S \coloneqq C_{\Zs}^\dagger\left(\left(\hat{H} + \calZ\right)^\sympcomp\right)$.
\label{step:eleven-2}
\end{algorithm}
\footnotetext{Again, by smartly using the regular orthogonal complement over $\F_2^n$, computing the symplectic complement of subspaces of the form $H + \calZ$ will offer a computational savings by a multiplicative factor of about $4$ over na\"ively using \cref{lemma:compute-symplectic-complement}.}
%\vspace{\baselineskip}

Observe that \Cref{step:one-2,step:two-2,step:three-2,step:four-2,step:five-2,step:six-2} are the same as in \cref{alg:learning_weyl_single}, except that the number of computational difference samples taken in \cref{step:four-2} is smaller. 
The algorithm is adaptive only because the Clifford circuit obtained in \cref{step:seven-2} depends on earlier measurement outcomes. 

We now work to show the correctness of \cref{alg:learning_weyl_single_adaptive}. We first require a generalization of \cref{lem:p-mass-isotropic}. This ensures that $\hat{A}$ in \cref{step:six-2} will be isotropic, even when \cref{item:approx-weyl} is not met.
\begin{lemma}
\label{lem:p-mass-isotropic-2}
    Let $S$ be a subspace of $\F_2^{2n}$ such that
    \[
    \sum_{x \in S}p_{\psi}(x) > \frac{3}{4}\frac{\abs{S}}{2^n}.
    \]
    Then $S \subseteq \langle M \rangle$, where $M \coloneqq \{x \in \F_2^{2n} : 2^n p_\psi(x) > 1/2\}$.
\end{lemma}
\begin{proof}
    In the proof of \cref{lem:p-mass-isotropic}, we showed that $S = \langle A \rangle$, where $A \coloneqq \{x \in S: 2^n p_\psi(x) > 1/2\}$.
    Since $A \subseteq M$, we clearly have $S = \langle A \rangle \subseteq \langle M \rangle$, which proves the lemma.
\end{proof}

\begin{corollary}
    \label{cor:r-mass-isotropic-2}
    Let $H$ be a subspace of $\calX$ such that
    \[
    \sum_{x \in H}r_{\psi}(x) > \frac{3}{4}.
    \]
    Then $\left(H + \calZ\right)^\sympcomp \subseteq \langle M \rangle$, where $M \coloneqq \{x \in \F_2^{2n} : 2^n p_\psi(x) > \frac{1}{2}\}$.
\end{corollary}
\begin{proof}
    By \cref{cor:r-mass-vs-p-mass-duality} and \cref{fact:sympcomp},
    \[\sum_{x \in (H + \calZ)^\sympcomp} p_\psi(x) = \frac{2^n}{\abs{H + \calZ}}\sum_{x \in H} r_\psi(x) > \frac{3}{4}\frac{\abs{\left(H + \calZ\right)^\sympcomp}}{2^n}.\]
    \cref{lem:p-mass-isotropic-2} then shows that $\left(H + \calZ\right)^\sympcomp \subseteq \langle M \rangle$.
\end{proof}

We now have all of the tools to show that \cref{alg:learning_weyl_single_adaptive} succeeds at learning a subspace that satisfies \Cref{item:dimension,item:approx-weyl} of \cref{thm:learn-weyl-psi-reduction}.

\begin{theorem}\label{thm:single-copy-main-adaptive}
    Let $\ket{\psi}$ be an $n$-qubit quantum state with stabilizer dimension at least $n-t$.
    Given copies of $\ket{\psi}$ as input, \cref{alg:learning_weyl_single_adaptive} succeeds at outputting a subspace $S$ that satisfies all conditions of \cref{thm:learn-weyl-psi-reduction} with probability at least $1-\delta$.
    The algorithm uses
    \[
        O\left(\left(n + \log(1/\delta)\right)^2 2^t + \frac{n + \log(1/\delta)}{\eps^2}\right)
    \]
    samples and
    \[
        O\left(n^2\left(n + \log(1/\delta)\right)^2 2^t + \frac{n^2(n+\log(1/\delta))}{\eps^2}\right)
    \]
    time.
\end{theorem}

\begin{proof}
    The algorithm runs in two stages: \Cref{step:one-2,step:two-2,step:three-2,step:four-2,step:five-2,step:six-2} constitute the first non-adaptive stage, while \Cref{step:seven-2,step:eight-2,step:nine-2,step:ten-2,step:eleven-2} make up the second stage based on $\hat{A}$.
    
    We will first show that at the end of stage one, $\weyl(\ket \psi) \subseteq \hat{A}$ and $\hat{A}$ is isotropic, with high probability.
    By setting $\mclifford = 2\left(2^{t+1}+1\right)\left(n + \log(3/\delta)\right)$, \cref{lem:num-random-clifford-basis} tells us that the subspaces $S_i = \weyl(\ket \psi) \cap C_i^\dagger(\Zs)$ span $\weyl(\ket \psi)$, except with with failure probability at most $\delta/3$. Assume that we succeed.
    Then \Cref{cor:comp-diff-sampling} shows that we are guaranteed to only sample $y \sim r_{\psi_i}$ such that $y \in H_i$, where
    \[
    H_i \coloneqq \left(\weyl(\ket{\psi_i}) \cap \Zs\right)^\sympcomp \cap \Xs
    \]
    Hence, $\hat{H_i} \subseteq H_i$, and therefore
    \begin{align*}
        \left(\hat{H}_i + \Zs \right)^\sympcomp &\supseteq \left(H_i + \Zs \right)^\sympcomp\\
        &= H_i^\sympcomp \cap \calZ  && (\mathrm{\cref{lem:A-plus-B-sympcomp-identity}})\\
        &=  \left(\left(\weyl(\ket{\psi_i}) \cap \Zs\right) + \calX\right) \cap \calZ && (\mathrm{\cref{lem:A-plus-B-sympcomp-identity}})\\
        &= (\calX \cap \calZ) + \left(\weyl(\ket \psi_i) \cap \Zs\right) && (\mathrm{\cref{lem:A-plus-B-sympcomp-identity-2}})\\
        &= \weyl(\ket{\psi_i}) \cap \Zs\\
        &= C_i(S_i).
    \end{align*}
    By the end of the $\mclifford$ random Clifford circuits, it follows that
    \[
    \hat{A} = \sum_{i=1}^{\mclifford} C_i^\dagger \left((\hat{H}_i + \Zs)^\sympcomp \right) \supseteq \sum_{i=1}^{\mclifford} S_i = \weyl(\ket \psi).
    \]
    
    Next, we argue that $\hat{A}$ is isotropic.  \cref{lem:sampling} shows that if we draw
    \[
    8\left(n + \log(3 \mclifford/\delta)\right) + 1
    \]
    computational difference samples per $\ket{\psi_i}$ then
    \begin{align}\label{eq:inner-loop-condition-2}
        \sum_{x \in \hat{H}_i} r_{\psi_i} > \frac{3}{4}
    \end{align}
    for all $\hat{H}_i$ with failure probability at most $\delta/3$.
    Again, assume that we succeed.
    By \cref{cor:r-mass-isotropic-2}, it follows that each $C_i^\dagger\left(\left(\hat{H}_i + \calZ\right)^\sympcomp\right) \subseteq \langle M \rangle$ where $M \coloneqq \{x \in \F_2^{2n} : 2^n p_\psi(x) > \frac{1}{2}\}$.
    The addition of these subspaces $\sum_i C^\dagger_i\left(\left(\hat{H}_i + \calZ\right)^\sympcomp\right)$ then must also lie in $\langle M \rangle$. By \cref{fact:M_half_commute}, $\langle M \rangle$ is isotropic, and so too must be $\hat{A}$.

    In stage two, we need to prove that \Cref{item:dimension,item:approx-weyl} are met by the output $S$.
    Since we assume that stage one succeeded, $\hat{A}$ is isotropic.
    Therefore, we can apply \cref{lem:clifford-mapping-algorithm} to find a Clifford circuit such that $C_\Zs\left(\hat{A}\right) \subseteq \calZ$, as is done in \cref{step:seven-2}.
    Since $\weyl(\ket \psi) \subseteq \hat{A}$, it also follows that $\weyl(\ket{\psi^\prime}) \subseteq \calZ$.

    We now start the second round of computational difference sampling in \cref{step:eight-2,step:nine-2}.
    By \cref{lem:sampling}, $\hat{H} \subseteq \calX$ from \cref{step:ten-2} will be such that \[\sum_{x \in \hat{H}} r_{\psi^\prime}(x) \geq 1-\frac{\eps^2}{4}\]
    with probability at least $\delta/3$. Once again, assume that we succeed. It follows that
    \begin{align*}
        \sum_{x \in S}p_\psi(x) &= 
        \sum_{x \in \left(\hat{H}+\calZ\right)^\sympcomp}p_{\psi'}(x) && (\text{\cref{fact:clifford-permutes-p-mass}})\\
        &= \frac{2^n}{|\hat{H} + \Zs|}\sum_{x \in \hat{H}} r_{\psi'}(x) && (\text{\cref{cor:r-mass-vs-p-mass-duality}})\\
        &\ge \left(1-\frac{\eps^2}{4}\right) \frac{2^n}{|\hat{H} + \Zs|}\\
        &= \left(1-\frac{\eps^2}{4}\right) \frac{|S|}{2^n}, && (\text{\cref{fact:sympcomp}})
    \end{align*}
    thus satisfying \Cref{item:approx-weyl}.

    Finally, \cref{cor:comp-diff-sampling} guarantees that we only sample from \[H \coloneqq \left(\weyl(\ket{\psi^\prime}) \cap \calZ\right)^\sympcomp \cap \calX = \weyl(\ket{\psi^\prime})^\sympcomp \cap \calX =  \left(\weyl(\ket{\psi^\prime}) + \calX\right)^\sympcomp.\]
    Hence, $\hat{H} \subseteq H$, and therefore
    \begin{align*}
    \left(\hat{H} + \calZ\right)^\sympcomp &\supseteq \left(H + \calZ\right)^\sympcomp\\
    &= H^\sympcomp \cap \calZ && (\mathrm{\cref{lem:A-plus-B-sympcomp-identity}})\\
    &= \left(\weyl(\ket{\psi^\prime}) + \calX\right) \cap \calZ\\
    &= \left(\calX \cap \calZ\right) + \weyl(\ket{\psi^\prime}) && (\mathrm{\cref{lem:A-plus-B-sympcomp-identity-2}})\\
    &= \weyl(\ket{\psi^\prime}).
    \end{align*}
    It follows that $S \supseteq \weyl(\ket \psi)$, and since $\dim\left(\weyl(\ket \psi)\right) \geq n-t$, we conclude that $S$ satisfies \Cref{item:dimension}.

    By the union bound, this means we satisfy \Cref{item:dimension,item:approx-weyl} with probability at least $1-\delta$.

    We next turn to the sample complexity analysis.
    For simplicity, let $\mcomp$ be the number of computational difference samples taken in each part of the inner loop.
    Because $\log\left(\mclifford\right) = O\left(t + \log n + \frac{\log(3/\delta)}{n}\right)$, we find that
    \[\mcomp \coloneqq 8\left(n + \log(3\mclifford/\delta)\right) + 1
    = O\left(n + \log(1/\delta)\right),
    \]
    which will greatly simplify our asymptotic analysis.
    Since a computational difference sample involves $2$ copies of $\ket \psi$, the total number of copies of $\ket \psi$ needed in stage 1 is
    \[
        2 \mclifford\mcomp= O\left(\left(n + \log(1/\delta)\right)^2 2^t\right).
    \]
    In stage 2 we use $\frac{16(n + \log(3/\delta))}{\eps^2}$ samples, so the total number of samples is \[ O\left(\left(n + \log(1/\delta)\right)^2 2^t + \frac{n + \log(1/\delta)}{\eps^2}\right).\]

    To analyze the runtime, we will first analyze the inner loop of \cref{alg:learning_weyl_single_adaptive}.
    By \cref{prop:random-clifford-sample}, we require $O(n^2)$ time to sample a random Clifford circuit in the first part of \cref{step:one-2}.
    Applying $C_i$ to $2 \mcomp$ copies of $\ket \psi$ in the second half \cref{step:one-2} requires $O(\mcomp n^2)$ time.
    In \cref{step:two-2}, we require $O(\mcomp n)$ time to make $O(\mcomp)$ many $n$-qubit measurements, then apply the difference to pairs of measurements.
    Finally, it costs $O(\mcomp n^2)$ time to run Gaussian elimination in \cref{step:three-2} to find the generators of each $\hat{H}_i$.
    Thus, the inner loop requires time $O(\mcomp n^2)$, which becomes
    \[
        O(\mclifford \mcomp n^2) = O\left(n^2\left(n + \log(1/\delta)\right)^2 2^t\right)
    \]
    over all iterations of the inner loop.
    This will be one of the dominating terms.
    
    Analyzing from \cref{step:six-2} and onwards, it takes takes $O(\mclifford n^3)$ time to apply \cref{lemma:compute-symplectic-complement} to each of the $\hat{H}_i + \calZ$.
    It similarly  requires $O(\mclifford n^3)$ time to apply $C_i^\dagger$ to each $\left(\hat{H}_i + \calZ\right)^\sympcomp$.
    \Cref{lem:clifford-mapping-algorithm} requires $O(n^2)$ time in \cref{step:seven-2} to produce a circuit of size at most $O(n^2)$.
    As a result, the computational difference sampling in \cref{step:eight-2,step:nine-2} requires $O\left(\frac{n^2(n + \log(1/\delta)}{\eps^2}\right)$ time.
    Finally, it takes $O\left(\frac{n^2(n+\log(1/\delta))}{\eps^2}\right)$ time to perform \cref{lemma:compute-symplectic-complement} one last time in \cref{step:ten-2}. This is the second dominating term. \qedhere
\end{proof}

\section{Generalization to Mixed States}\label{appendix:mixed-states}

We show that \cref{alg:reduction,alg:learning_weyl_bell} remain correct when run on mixed states. 
For a mixed state $\rho$, we define $\weyl(\rho) \coloneqq \{x \in \F_2^{2n} : \tr(W_x \rho) = \pm 1 \}$.
It is easy to see that $\weyl(\rho)$ is still an isotropic subspace, just as it is for pure states. 
The stabilizer dimension of $\rho$ is then the dimension of the subspace $\weyl(\rho)$.
We will prove the following theorem.

\begin{theorem}\label{thm:mixed-tomography}
Given $n, t \in \mathbb{N}$, there is an algorithm that uses $\poly(n, 2^t, 1/\eps)$ time and $\poly(n, 1/\eps)$ copies of an $n$-qubit mixed state $\rho$ and learns $\rho$ to trace distance $\eps$ with high probability, promised that $\rho$ has stabilizer dimension at least $n - t$. 
\end{theorem}

We must recall the trace distance and fidelity between mixed quantum states. 
The trace distance between two mixed states $\rho, \sigma$ is $d_\tr(\rho, \sigma) \coloneqq \frac{1}{2}\norm{\rho - \sigma}_1$, where, for a matrix $A$, $\lVert A \rVert_1$ is the sum of the absolute values of the singular values of $A$.
The fidelity between $\rho$ and $\sigma$ is $\fidelity(\rho, \sigma) \coloneqq \tr\left(\sqrt{\sqrt{\rho} \sigma \sqrt{\rho}}\right)^2$.

\subsection{Bell Difference Sampling Mixed States}
To prove \cref{thm:mixed-tomography}, we must develop the theory of Bell difference sampling for mixed states. 
As we will show, many properties that hold for pure states remain true for mixed states. 
Indeed, we believe many of the results/algorithms in the literature that are based on Bell difference sampling can be generalized to mixed states if one desires to.

Recall from \cite[Equation 3.7]{gross2021schur} that one can describe Bell difference sampling as a projective measurement. 
In particular, upon Bell difference sampling a mixed state $\rho$, one observes $a \in \F_2^{2n}$ with probability $\tr(\Pi_a \rho^{\otimes 4})$, where $\Pi_a$ is defined as
\begin{align}\label{eq:bell-diff}
\Pi_a \coloneqq \frac{1}{4^n}\sum_{x \in \F_2^{2n}} (-1)^{[a, x]}W_x^{\otimes 4}.
\end{align}
Hence, we define $q_\rho(x) \coloneqq \tr(\Pi_{x}\rho^{\otimes 4})$, the distribution induced by the Bell difference sampling measurement. 

We find it convenient to define $p_\rho(x) \coloneqq 2^{-n} \tr(W_x \rho)^2$ in analogy with $p_\psi$. 
However, many properties of $p_\psi$ fail to carry over to $p_\rho$. For example, $p_\rho$ is no longer a probability distribution, so one can no longer express $q_\rho$ as the convolution of $p_\rho$ with itself as a result.
Note that for any $x \in \F_2^{2n}$, we still have that $0 \leq p_\rho(x) \leq 2^{-n}$. 

A crucial property of $q_\psi$ is that it exhibits a duality property (\cref{thm:p_q_duality}). 
We prove that this property holds for $q_\rho$ as well.

\begin{theorem}[Generalization of \Cref{thm:p_q_duality}]\label{thm:duality-q-mixed}
Let $A \subseteq \F_2^{2n}$ be a subspace. Then
    \[
        \sum_{a \in A} q_\rho(a) = \abs{A} \sum_{x \in A^\sympcomp} p_\rho(a)^2.
    \]
\end{theorem}
\begin{proof}
We have 
\begin{align*}
\sum_{a \in A} q_\rho(a) 
&= \sum_{a \in A} \tr(\Pi_a \rho^{\otimes 4}) \\    
&= \sum_{a \in A} \sum_{x \in \F_2^{2n}} (-1)^{[a,x]} p_\rho(x)^2 \\    
&= \sum_{x \in \F_2^{2n}} p_\rho(x)^2 \sum_{a \in A} (-1)^{[a,x]} \\    
&= \abs{A} \sum_{x \in \F_2^{2n}} p_\rho(x)^2 \indic{x \in A^\sympcomp}  \\    
&= \abs{A} \sum_{x \in A^\sympcomp} p_\rho(x)^2. 
\end{align*}
Above, the first three lines follow from \cref{eq:bell-diff}, the linearity of the trace, and some basic arithmetic manipulations. 
The fourth line follows from the fact that for any subspace $T \subseteq \F_2^{2n}$ and a fixed $x \in \F_2^{2n}$,
    \[\sum_{a \in T} (-1)^{[a, x]} = \abs{T} \cdot \indic{x \in T^\perp}.\]
See, for example, \cite[Lemma 2.11]{grewal2023improved} for a proof of this fact.  
Readers familiar with Boolean Fourier analysis will recognize this as a basic fact about summing over (symplectic) Fourier characters. 
\end{proof}

The duality theorem for $q_\rho$ allows us to prove the following two statements. 
The first states that the support of $q_\rho$ is contained in $\weyl(\rho)^\sympcomp$, generalizing \cite[Corollary 4.4]{grewal2023improved}.
The second bounds the $q_\rho$-mass on a subspace $A$ in terms of $p_\rho$.

\begin{lemma}\label{lemma:support-weyl-perp-mixed}
For any quantum mixed state $\rho$, the support of $q_\rho$ is contained in $\weyl(\rho)^\sympcomp$.
\end{lemma}
\begin{proof}
    \begin{align*}
        \sum_{x \in \weyl(\rho)^\sympcomp} q_\rho(x) &= \abs{\weyl(\rho)^\sympcomp} \sum_{x \in \weyl(\rho)} p_\rho(x)^2 && (\text{\cref{thm:duality-q-mixed}})\\
        &= \abs{\weyl(\rho)^\sympcomp} \cdot \frac{\abs{\weyl(\rho)}}{4^n}\\ 
        &= 1.
    \end{align*}
    The second line follows from that fact that for $x \in \weyl(\rho)$, $p_\rho(x)^2 = 4^{-n}$.
\end{proof}

\begin{proposition}
\label{prop:q-lower-bounds-p-mixed}
Let $A \subseteq \F_2^{2n}$ be a subspace. Then
\[\sum_{a \in A} q_\rho(a) \leq \frac{\abs{A}}{2^n}\sum_{a \in A^\sympcomp} p_\rho(a).\]
\end{proposition}
\begin{proof}
    \begin{align*}
        \sum_{a \in A} q_\rho(a) &= \abs{A} \sum_{x \in A^\perp} p_\rho(x)^2 && \text{(\cref{thm:duality-q-mixed})}\\
        & \leq \frac{\abs{A}}{2^n} \sum_{x \in A^\perp} p_\rho(x). && (p_\rho(x) \leq \frac{1}{2^n}) && \qedhere
    \end{align*}
\end{proof}

\subsection{Reduction to Finding a Heavy Subspace}
We argue in \cref{thm:learn-weyl-psi-reduction} that efficient tomography of large-stabilizer-dimension states reduces to finding a subspace of $\F_2^{2n}$ that isn't too large and has sufficiently large $p_\psi$ mass. 
In this section, we prove that this reduction holds for mixed states as well.
The formal statement of this follows.

\begin{theorem}[Generalization of \cref{thm:learn-weyl-psi-reduction}]
\label{thm:learn-weyl-psi-reduction-mixed}
Let $\rho$ be an $n$-qubit mixed state, and suppose there exists a subspace $S \subseteq \F_2^{2n}$ that satisfies the following conditions: 
\begin{enumerate}[label=\rm{(\roman*)}]
    \item\label{item:dimension-mixed} $\dim(S) \geq n-t$, and
    \item\label{item:approx-weyl-mixed} $\sum_{x \in S} p_\rho(x) \geq \left(1-\frac{\eps^2}{4}\right)\frac{\abs{S}}{2^n}$.
\end{enumerate}
Then, given $S$ and copies of $\rho$ as input, \cref{alg:reduction} outputs a classical description of $\hat{\rho}$ such that $\tracedistance{\rho, \hat{\rho}} \leq \eps$ with probability at least $1-\delta$ using $\frac{4}{3}N_{t, \frac{\eps}{2}, \frac{\delta}{3}} + \frac{224}{9} \log(3/\delta)$ copies of $\rho$, and 
    $M_{t, \frac{\eps}{2}, \frac{\delta}{3}} + O\left(n^2 \left(N_{t, \frac{\eps}{2}, \frac{\delta}{3}} + \log(1/\delta)\right)\right)$  time, while performing only single-copy measurements.
\end{theorem}

\begin{remark}\label{remark:mixed-state-tomography-constants}
In this appendix, $M_{n,\eps,\delta}$ and $N_{n,\eps,\delta}$ above are the same as in \cref{def:tomography-complexity} except for mixed state tomography. That is, we let $M_{n,\eps,\delta}$ and $N_{n,\eps,\delta}$ denote the copy and time complexities, respectively, of state tomography of an $n$-qubit mixed state to trace distance $\eps$ and success probability at least $1-\delta$. 
As for pure states, one can use the state tomography algorithms in \cite{franca_et_al:LIPIcs.TQC.2021.7} or \cite[Section 5]{aaronson2023efficient} to instantiate \cref{thm:learn-weyl-psi-reduction-mixed}.
\end{remark}

\begin{proof}
The only aspects of the proof of \cref{thm:learn-weyl-psi-reduction} that use the purity of the input are the \cref{lem:p-mass-isotropic,lem:product-state-approximation-general}, and the identity $\tracedistance{\ket \psi, \ket \phi} = \sqrt{1-\fidelity(\ket{\psi},\ket{ \phi})}$. 
We have generalized the first two statements below to mixed states in \cref{lem:p-mass-isotropic-mixed,lem:product-state-approximation-general-mixed}, and, for two mixed states $\rho, \sigma$, 
$\tracedistance{\rho,\sigma} \leq \sqrt{1-\fidelity(\rho,\sigma)}$ \cite{fuchs1999cryptographic,Watrous_2018}. 
Hence, the proof goes through with no loss in parameters.
\end{proof}

To complete the proof of \cref{thm:learn-weyl-psi-reduction-mixed}, it remains to generalize \cref{lem:p-mass-isotropic,lem:product-state-approximation-general} to mixed states.
We prove these generalizations in the remainder of this section.

\begin{lemma}[Generalization of \cref{lem:p-mass-isotropic}]
\label{lem:p-mass-isotropic-mixed}
    Let $S$ be a subspace of $\F_2^{2n}$ such that
    \[
    \sum_{x \in S}p_{\rho}(x) > \frac{3}{4}\frac{\abs{S}}{2^n}.
    \]
    Then $S$ is isotropic.
\end{lemma}
\begin{proof}
    Let $A \coloneqq \{x \in S: 2^n p_\rho(x) > 1/2\}$.
    Because \cref{fact:uncertainty} holds for mixed states, it follows that \cref{fact:M_half_commute} holds as well.  
    Therefore, the rest of the argument in \cref{lem:p-mass-isotropic} follows, because, aside from the above statements, the proof only uses facts about symplectic vector spaces.
\end{proof}

The final two lemmas of this subsection generalize \cref{lem:product-state-approximation-general} to mixed states.

\begin{lemma}[Generalization of \cref{lem:product-state-approximation}]
\label{lem:product-state-approximation-mixed}
    Let $S = 0^{n+t} \times \F_2^{n-t}$, and suppose that 
    \[\sum_{x \in S} p_\rho(x) \geq \frac{1 - \eps}{2^t}.\] 
    Then there exists an $(n-t)$-qubit computational basis state $\ket{x}$ and a $t$-qubit quantum state 
    \begin{align} \label{eq:sigma-form}
     \sigma \coloneqq \frac{\trc_B\left((I \otimes \ketbra{x}{x}) \rho \right)}{\trc\left((I \otimes \ketbra{x}{x}) \rho \right)},
    \end{align}
    such that
    the fidelity between $\sigma \otimes \ketbra{x}{x}$ and $\rho$ is at least $1 - \eps$. The partial trace in \cref{eq:sigma-form} is over the last $n-t$ qubits.\footnote{This is to say that $\sigma$ is obtained by postselecting on measuring the last $n - t$ qubits of $\rho$ to be $\ket{x}$.}
\end{lemma}
\begin{proof}
    We can always write: 
    \[
    \rho = \sum_{y,z \in \F_2^{n-t}} \alpha_{y,z} \cdot \sigma_{y,z} \otimes \ketbra{y}{z}.
    \]
    Here, the $\alpha_{z,z}$ are the diagonal elements of $\tr_A(\rho)$, where $A$ corresponds to the first $t$ qubits, so they are nonnegative real numbers that sum to $1$. 
    The $\sigma_{y,z}$ are $t$-qubit mixed states.
    
    Define $x \coloneqq \argmax_{z \in \F_2^{n-t}} \alpha_{z,z}$.
    Observe that $\tr_B((I \otimes \ketbra{x}{x}) \rho) = \alpha_{x,x} \cdot \sigma_{x,x}$.
    We will argue that the fidelity between between $\sigma_{x,x} \otimes \ketbra{x}{x}$ and $\rho$ is equal to $\alpha_{x,x}$. Then we will argue that $\alpha_{x,x} \geq 1 - \eps$. Hence, taking $\sigma \coloneqq \sigma_{x,x}$ gives the desired state claimed in \cref{eq:sigma-form}.
    
    First, we prove that  $\fidelity(\rho,\sigma_{x,x} \otimes \ketbra{x}{x})=\alpha_{x,x}$. 
    We have 
    \begin{align*}
        \fidelity\left(\rho, \sigma_{x,x} \otimes \ketbra{x}{x}\right) 
        &= \tr\left(\sqrt{ \sqrt{\sigma_{x, x}} \otimes \ketbra{x}{x} \cdot \left(\sum_{y,z \in \F_2^{n-t}} \alpha_{y,z} \cdot \sigma_{y,z} \otimes \ketbra{y}{z}\right)  \cdot \sqrt{\sigma_{x, x}} \otimes \ketbra{x}{x}}\right)^2 \\
        &= \tr\left(\sqrt{ \sum_{y, z \in \F_2^{n-t}} \alpha_{y,z}  \cdot \left(\sqrt{\sigma_{x,x}} \sigma_{y,z}\sqrt{\sigma_{x,x}}\right) \otimes \left(\ket{x}\!\braket{x|y}\!\braket{z|x}\!\bra{x}\right)}\right)^2 \\
        &= \tr\left(\sqrt{\alpha_{x,x}  \cdot \sigma_{x,x}^2 \otimes \ketbra{x}{x}}\right)^2 \\
        &= \alpha_{x,x} \cdot \tr\left(\sigma_{x,x} \otimes \ketbra{x}{x}\right)^2 \\ 
        &= \alpha_{x,x}.
    \end{align*}
Next, we prove that $\alpha_{x,x} \geq 1 - \eps$.
    \begin{align*}
        1 - \eps &\leq  2^t \sum_{z \in S} p_\rho(x) \\
        &= \frac{1}{2^{n-t}} \sum_{z \in \F_2^{n-t}} \tr\left(I \otimes W_z \otimes I \otimes W_z \cdot \rho^{\otimes 2}\right) \\
        &= \sum_{z \in \F_2^{n-t}} \tr\left(\left(I \otimes \ketbra{z}{z}\right) \rho \right)^2 && (\text{\cref{prop:sum-over-stabilizer-basis}}) \\
        &\leq \left(\max_{z \in \F_2^{n-t}} \tr\left(\left(I \otimes \ketbra{z}{z}\right) \rho \right)\right) \cdot \sum_{z \in \F_2^{n-t}}\tr\left(\left(I \otimes \ketbra{z}{z}\right) \rho \right) \\
        &= \alpha_{x, x} \cdot \sum_{z \in \F_2^{n-t}} \alpha_{z, z} \\
        &= \alpha_{x,x}. 
    \end{align*}
On the fourth line, we used 
the fact that $\tr((I \otimes \ketbra{z}{z}) \rho) = \alpha_{z,z}$. 
\end{proof}

 \begin{lemma}[Generalization of \cref{lem:product-state-approximation-general}]
        \label{lem:product-state-approximation-general-mixed}
        Let $S$ be an isotropic subspace of dimension $n-t$, and suppose that 
    \[\sum_{x \in S} p_\rho(x) \geq \frac{1 - \eps}{2^t}.\] 
    Then there exists a state $\hat{\rho}$ with $S \subseteq \weyl(\hat{\rho})$ such that the fidelity between $\hat{\rho}$ and $\rho$ is at least $1 - \eps$. 
    
    In particular, $\hat{\rho} = C^\dagger (\sigma \otimes \ketbra{x}{x})C$, where $\ket{x}$ is an $(n-t)$-qubit basis state, 
    \begin{align}\label{eq:post-selected-mixed}
     \sigma = \frac{\trc_B\left((I \otimes \ketbra{x}{x}) C\rho C^\dagger \right)}{\trc\left((I \otimes \ketbra{x}{x}) C\rho C^\dagger \right)}
    \end{align}
    is a $t$-qubit quantum state, $C$ is the Clifford circuit mapping $S$ to $0^{n+t} \times \F_2^{n-t}$ described in \cref{lem:clifford-mapping-algorithm}, and the partial trace in \cref{eq:post-selected-mixed} is over the last $n-t$ qubits.
    \end{lemma}
    \begin{proof}
    Following the proof of \cref{lem:product-state-approximation-general}, the Clifford circuit that maps $S$ to $0^{n+t} \times \F_2^{n-t}$ permutes the $p_\rho$ mass in the same way that it permutes the $p_\psi$ mass for a pure state as per \cref{fact:clifford-permutes-p-mass}.
    That is, for $\rho^\prime \coloneqq C \rho C^\dagger$, \[p_{\rho^{\prime}}(x) = \frac{1}{2^n} \tr\left(W_x C \rho C^\dagger\right) = p_\rho(C^\dagger(x)).\]
    The only remaining step of the proof that relies on a pure state is when \cref{lem:product-state-approximation} is used. 
    Since we generalized \cref{lem:product-state-approximation} to hold for mixed states, via \cref{lem:product-state-approximation-mixed}, the proof goes through.
    \end{proof}

\subsection{Efficient Tomography of Large-Stabilizer-Dimension Mixed States}

We give the remaining details of our efficient tomography algorithm. 
The final missing piece is to show that one can learn a subspace satisfying the two conditions in \cref{thm:learn-weyl-psi-reduction-mixed} by repeatedly Bell difference sampling copies of a mixed state.

\begin{theorem}[Generalization of \cref{thm:bell-copy-main}]\label{thm:bell-copy-main-mixed}
    Let $\rho$ be an $n$-qubit quantum state with stabilizer dimension at least $n-t$.
    Given copies of $\rho$, \cref{alg:learning_weyl_bell} succeeds at outputting a subspace $S$ that satisfies the conditions of \cref{thm:learn-weyl-psi-reduction-mixed} with probability at least $1-\delta$ with no loss in parameters.
    In particular,
    the algorithm uses
    \[
    \frac{32\log(1/\delta)+64n}{\eps^2}
    \]
    samples and
    \[
    O\left(\frac{n^2\log(1/\delta) + n^3}{\eps^2}\right),
    \]
    time.
\end{theorem}

\begin{proof}
    By \Cref{lemma:support-weyl-perp-mixed}, the support of $q_\rho$ is a subspace of dimension at most $n + t$, so $\dim S^\sympcomp \le n + t$ and therefore $\dim S \ge n - t$, by \cref{fact:sympcomp}. This establishes \cref{item:dimension-mixed}.

   By \Cref{lem:sampling}, except with probability at most $\delta$, the Bell difference sampling phase of the algorithm in \Cref{step:one_bell} finds a subspace $S^\sympcomp \subseteq \F_2^{2n}$ such that 
    \[
    \sum_{x \in S^\sympcomp} q_\rho(x) \geq 1-\frac{\eps^2}{4}.
    \]
   From \cref{prop:q-lower-bounds-p-mixed}, we know that 
    \[
   \frac{\abs{S^\perp}}{2^n}  \sum_{x \in S} p_\rho(x) \geq 1-\frac{\eps^2}{4}, 
    \]
    and because $\abs{S}\cdot\abs{S^\perp} = 4^n$, we have 
    \[
   \sum_{x \in S} p_\rho(x) \geq \left(1-\frac{\eps^2}{4}\right) \frac{\abs{S}}{2^n}, 
    \]
    which establishes \cref{item:approx-weyl-mixed}.

    The running time analysis is the same as the analysis given in the proof of \cref{thm:bell-copy-main}.
\end{proof}

We can now combine \cref{thm:bell-copy-main-mixed,thm:learn-weyl-psi-reduction-mixed} to prove the main theorem of this appendix.
We note that the generalization to mixed states goes through with no loss in parameters (except potentially for the parameters referenced in \cref{remark:mixed-state-tomography-constants}).

\begin{corollary}[Formal version of \cref{thm:mixed-tomography}]\label{cor:bell-copy-main-mixed}
    Let $\rho$ be an $n$-qubit quantum state with stabilizer dimension at least $n-t$.
    Given copies of $\rho$ as input, the combination of \cref{alg:learning_weyl_bell,alg:reduction} outputs a classical description of $\hat{\rho}$ such that $\tracedistance{\rho, \hat{\rho}} \leq \eps$ with probability at least $1-\delta$.
    The algorithm uses
    \[
    \frac{32\log(2/\delta)+64n}{\eps^2} + \frac{4}{3}N_{t, \frac{\eps}{2}, \frac{\delta}{6}} + \frac{224}{9}\log(6/\delta)
    \]
    samples and
    \[
    O\left(\frac{n^2\log(1/\delta) + n^3}{\eps^2} + n^2 \cdot N_{t, \frac{\eps}{2}, \frac{\delta}{6}} \right) + M_{t, \frac{\eps}{2}, \frac{\delta}{6}},
    \]
    time.
\end{corollary}
\begin{proof}
    By \cref{thm:learn-weyl-psi-reduction-mixed,thm:bell-copy-main-mixed}, each algorithm succeeds with probability at least $1-\delta/2$. By the union bound, the total failure probability is at most $\delta$.
\end{proof}

As a final remark, we note that \cref{cor:bell-copy-main-mixed} learns the input state in \emph{fidelity}, which, for mixed states, is a stronger result than trace distance tomography. 
This is apparent in the Fuchs-van de Graaf inequality $d_\tr(\rho,\sigma) \leq \sqrt{1- \fidelity(\rho,\sigma)}$, which says that if $\rho$ and $\sigma$ are close in fidelity, then they are also close in trace distance. 
Hence, if one chooses a fidelity state tomography algorithm in \cref{thm:learn-weyl-psi-reduction-mixed}, the final output of \cref{cor:bell-copy-main-mixed} will be close to the input in fidelity.

\end{document}